\newtheorem{lemma}{Lemma}
\newtheorem{theorem}[lemma]{Theorem}
\newtheorem{observation}[lemma]{Observation}
\newtheorem{definition}[lemma]{Definition}
\newcommand{\proofstart}{{\bf Proof\hspace{2em}}}
\newcommand{\proofend}{\hspace*{\fill}\mbox{$\Box$}}
\newenvironment{proof}{\proofstart}{\proofend}
\newenvironment{prevlemma}[1]{\vspace*{1.5ex}\noindent{\bf Lemma #1} \itshape}{\normalfont \vspace*{1.5ex}}
\newcommand{\pr}{\mbox{\bf Pr}}
\newcommand{\inv}[1]{\frac{1}{#1}}
\newcommand{\interval}[1]{\left[\underline{ #1}, \overline{ #1}\right]}
\renewcommand{\a}{\alpha}
\renewcommand{\b}{\beta}
\renewcommand{\d}{\delta}
\newcommand{\e}{\epsilon}
\newcommand{\g}{\gamma}
\newcommand{\G}{\Gamma}
\newcommand{\z}{\zeta}
\renewcommand{\l}{\lambda}
\newcommand{\n}{\nu}
\newcommand{\s}{\sigma}
\renewcommand{\t}{\tau}
\newcommand{\smallo}{{\rm o}}
\newcommand{\bigo}{{\rm O}}
\newcommand{\bigth}{{\rm \Theta}}
\renewcommand{\to}{\rightarrow}
\newcommand{\E}{{\bf E}}
\newcommand{\ex}{{\bf E}}
\newcommand{\rme}{{\rm e}}
\newcommand{\hf}{\frac{1}{2}}
\def\cres{{\bf C-RES}}
\def\ngres{{\bf NG-RES}}
\def\I{{\cal I}}
\def\poly{{\rm poly}}
\def\e{\epsilon}
\def\a{\alpha}
\def\B{\Upsilon}
\begin{document}

\title{The Satisfiability Threshold for a Seemingly Intractable Random Constraint Satisfaction Problem}
\author{Harold Connamacher \\ Case Western Reserve University \\
\and Michael Molloy \\
University of Toronto}

\maketitle

\begin{abstract}
We determine the exact threshold of satisfiability
for random instances of a particular NP-complete constraint satisfaction problem (CSP).
This is the first random CSP model for which
we have determined a precise linear satisfiability threshold, and for which random instances
with density near that threshold appear to be computationally difficult.
More formally, it is the first random CSP model for which the satisfiability
threshold is known and which shares the following characteristics with random $k$-SAT for
$k \geq 3$.  The problem is NP-complete,  the
satisfiability threshold occurs when there is a linear number of clauses,
and a uniformly random instance with a linear number of clauses asymptotically almost
surely has exponential resolution complexity.
\end{abstract}

\section{Introduction}

Determining the satisfiability threshold for random $k$-SAT is a fundamental
problem that has received attention from several scientific communities.
(See, eg., \cite{ach01} for a survey of the area.)
The basic question is this:  does there exist a constant $c^*_k$ such that
a uniformly random instance of $k$-SAT with $n$ variables and $cn$ clauses
will be asymptotically almost surely (a.a.s.)\footnote{A property holds {\em asymptotically almost surely}
if its probability tends to 1 as the number of variables
tends to $\infty$.}
satisfiable if $c<c^*_k$
and a.a.s.\ unsatisfiable if $c>c^*_k$?
Neither the existence nor the location of $c^*_k$ is known for any
$k \geq 3$; Friedgut~\cite{fri99} proved that the
threshold for $k$-SAT is sharp, although the location of the threshold might be not be
at the same clause density for each value of $n$.
We have a tight asymptotic bound on the conjectured $c^*_k$:
$2^k \log 2 - \bigo(k)$~\cite{ach-per03}~$\leq c^*_k \leq 2^k \log 2$~\cite{fra-pau83}.
For 3-SAT, the current state of research has
$3.52$~\cite{haj-sor03,kap-kir-lal03}~$\leq c^*_3 \leq 4.4898$~\cite{diaz-etal08}.
From experimental evidence, the threshold for 3-SAT appears to be roughly
$4.2$~\cite{kir-sel94,sel-mit-lev96,cra-aut96}.

Research on the satisfiability threshold has extended to generalizations
of $k$-SAT such as the Schaefer \cite{sch78} generalizations:
1-in-$k$ SAT~\cite{ach-cht-ist-moo01}
(only one true literal per clause),
NAE-SAT~\cite{ach-cht-ist-moo01,ach-moo02}
(at least one true and one false literal per clause),
and XOR-SAT~\cite{dub-man02,mez-ric-zec03, dgmm} (each clause is an exclusive-or
rather than a disjunction), and more generally, to random constraint
satisfaction problems.

In a constraint satisfaction problem (CSP), we can allow variables to take
on values from a domain of size larger than 2, and we have more freedom as to
the types of constraints to use.
There have been various models proposed and studied for random
constraint satisfaction problems, eg.~\cite{ach-etal01,cre-dau03,mit02,mol02,mol-un}.
For such models, a major
goal has typically been to determine whether it has a sharp satisfiability threshold,
and if so, to determine its location. In addition, there has been a large body of
experimental studies \cite{gen-mac-pro-smi-wal01}
to find the approximate location of the satisfiability
threshold and to study the difficulty of solving random instances of various
CSP models.

The primary motivation for a large number of these studies was the discovery by Selman, et al~\cite{sel-mit-lev96}
that random instances of 3-SAT with clause density near the conjectured threshold are very difficult to solve.
Researchers who develop SAT-solvers study such problems  to test and improve their solvers and to
look for insights into what can make SAT instances computationally difficult.  When these researchers
expanded to CSP-solvers, they wanted to find thresholds for other random CSP models with the expectation
that this should provide a rich source of difficult instances.

Most of this research has been on models where the constraint- and domain-sizes are
constant.  There have been studies of models where one or both of these parameters grows
with $n$ (eg.~\cite{dye-fri-mol03,xu-li00,xu-li03,fri-wor01,fla03}). As discussed below,
such models tend to have a very different nature; eg., the thresholds typically occur when the number
of constraints is superlinear. So our focus in this discussion is on the case where these parameters are both constant.

Most random CSP models for which the satisfiability threshold is known are in P (eg.\
2-SAT~\cite{chv-ree92,fer92,goe96}, 3-XOR-SAT\cite{dub-man02, dgmm}), and hence do not provide
difficult instances. We know the satisfiability threshold for a few NP-complete
problems with constant sized domain and constraints, for example
1-in-$k$-SAT~\cite{ach-cht-ist-moo01}, and a mixture of 2-SAT and 3-SAT when
the number of clauses of size 3 is kept small~\cite{mon-zec-kir-sel-tro99},
and a model from \cite{mol-sal03}.
However, in each of these cases, a random instance
whose number of clauses is just below the threshold, is in some sense equivalent
to random 2-SAT. This allows us to prove that such problems are a.a.s.\ satisfiable by analyzing
a polynomial time algorithm that solves them with uniformly positive probability (w.u.p.p.)
\footnote{A property holds {\em with uniformly positive probability}
if as the number of variables
tends to $\infty$, the $\liminf$ of its probability is at least some positive constant.}.
Furthermore, instances with density slightly
above the threshold a.a.s.\ have short resolution proofs of unsatisfiability.
(This is straightforward to show for 1-in-$k$-SAT, and is proven in \cite{ach-bea-mol03,mol-sal03} for the other two.)
It is straightforward to design polynomial time algorithms
that make use of such proofs to a.a.s.\ recognize that such instances are unsatisfiable.  Thus, while these
problems are NP-hard, they are easy to solve when their density is near the satisfiability threshold.

In this paper, we introduce a particular NP-hard problem and determine its satisfiability threshold.
The problem is different from previous NP-hard problems whose thresholds are known in that we don't know
of any algorithm that can solve random instances whose density is near the threshold.  Furthermore,
we can prove that such instances a.a.s.\ have exponential resolution complexity, which implies that
resolution based algorithms will a.a.s.\ fail on unsatisfiable instances. (Virtually all complete\footnote{A solver is {\em complete} if it is able to recognize all satisfiable and all unsatisfiable instances.} CSP-solvers
that are commonly used in practice are resolution-based.)

This is the first random constraint satisfaction problem for which: (1) We have determined a constant
$c^*$ such that if the number of constraints is at most $(c^*-\e)n$ then it is a.a.s.\ satisfiable
and if the number of constraints is at least $(c^*+\e)n$ then it is a.a.s.\ unsatisfiable. (2) We don't know
of any polynomial time algorithm that will a.a.s.\ succeed on a random instance when the number of clauses
is $cn$ for $c$ arbitrarily close to $c^*$.  In short, we know the location of a satisfiability threshold around which
the instances actually appear to be difficult.

Our problem is called $(3,4)$-UE-CSP. We will define $(k,d)$-UE-CSP in the next section.  We use  $U^{(k,d)}_{n,m}$ to denote
a random instance with $n$ variables and $m$ constraints.  In Subsection \ref{sec:2core} we will specify
a precise constant $c^*=.917935...$   Our main theorem is:

\begin{theorem}\label{mt}
If $c<c^*$ then $U^{(3,4)}_{n,m=cn}$ is a.a.s.\ satisfiable.
If $c>c^*$ then $U^{(3,4)}_{n,m=cn}$ is a.a.s.\ unsatisfiable.
\end{theorem}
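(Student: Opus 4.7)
The plan is to split the instance into its $2$-core and its ``corona,'' exploit the unique-extension property to handle the corona for free, and then attack the $2$-core with moment methods. Concretely, write $U^{(3,4)}_{n,m=cn}$ as a random $3$-uniform constraint hypergraph, and strip iteratively all variables of degree~$1$ together with their unique incident constraint. The unique-extension property guarantees that whenever a variable of degree $1$ remains, any assignment of the other two variables in its constraint extends to a satisfying one by choosing the unique legal value. Hence the residual instance is satisfiable if and only if its $2$-core is, and so the entire analysis reduces to the random $2$-core. The constant $c^{*}$ from Subsection~\ref{sec:2core} is precisely the threshold at which the expected ratio of constraints to variables in the $2$-core crosses~$1$ (equivalently, where $4^{v-m}$ in the core flips from exploding to vanishing), which is exactly the $3$-XOR-SAT threshold value $0.917935\ldots$.

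For the upper bound ($c>c^{*}$), I would run a first-moment calculation restricted to the $2$-core. Conditional on the $2$-core having $v$ variables and $m$ constraints, each constraint is satisfied by exactly $4^{2}$ of the $4^{3}$ triples by the unique-extension property, so the expected number of satisfying assignments of the core is $4^{v-m}$ times the appropriate combinatorial weight. Standard results on cores of random hypergraphs (differential-equation / generating-function analysis \`a la Molloy, or the Pittel--Spencer--Wormald style core analysis used for XOR-SAT) show that when $c>c^{*}$ the ratio $m/v$ in the core exceeds $1$ a.a.s., so $\E[\#\text{SAT}_{\text{core}}]\to 0$, and by Markov the entire instance is a.a.s.\ unsatisfiable.

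For the lower bound ($c<c^{*}$), the first moment alone only says the expectation is large; I need to prove concentration. My plan is a second-moment argument on the random variable $X=\#\{\text{assignments satisfying the 2-core}\}$. Writing $\E[X^{2}]=\sum_{\sigma,\tau}\pr(\sigma,\tau\text{ both satisfy})$ and parametrising pairs by their overlap profile with the domain of size~$4$, the unique-extension structure of UE-CSP lets me replace the per-constraint probability by a clean function of the overlap, so that $\E[X^{2}]/\E[X]^{2}$ is governed by a single-variable optimisation over the overlap distribution. The key step is to show that the optimum is attained at the ``uniform'' overlap (each value taken by a $1/4$ fraction of variables in both $\sigma$ and $\tau$), which yields $\E[X^{2}]=O(\E[X]^{2})$ and hence $\pr(X>0)$ is bounded away from zero (w.u.p.p.). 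Finally, I would invoke a sharp-threshold theorem in the style of Friedgut~\cite{fri99}, extended to this CSP model, to promote w.u.p.p.\ satisfiability to a.a.s.\ satisfiability for every $c<c^{*}$.

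The main obstacle is the second-moment computation on the $2$-core. Random CSPs notoriously fail the vanilla second moment (the exponential rate of $\E[X^{2}]$ exceeds twice that of $\E[X]$ because of symmetry-breaking clusters), and the $2$-core conditioning adds a layer of dependence between variables' degrees and the structure of the constraint hypergraph. I expect to have to weight assignments (e.g.\ by a majority-orientation or balance statistic) before squaring, and to carry out the overlap optimisation via a careful Lagrange/convexity argument that uses the specific multiplicative structure of the unique-extension constraint on the $4$-element domain. Everything else---the corona peeling, the first-moment half, and the sharp-threshold step---is bookkeeping around this central inequality.
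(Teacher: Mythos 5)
Your high-level plan coincides with the paper's: reduce to the 2-core via the at-least-one-extendible property, show the conditional 2-core is uniformly random given its size, apply the first moment for $c>c^*$, and apply the second moment for $c<c^*$ following Dubois--Mandler. Two parts of the sketch are not quite right, though. First, the second-moment computation does not reduce to a ``single-variable optimisation over the overlap.'' Because the constraints have arity $3$ and domain size $4$, the per-clause probability that both $\sigma$ and $\tau$ satisfy depends on how many of the clause's three variables agree ($0$, $2$, or $3$; exactly $1$ is impossible by unique extendibility), so the sum for $\E(N^2)$ must be parametrized by a triple $(\alpha,r,t)$: the overlap fraction, the fraction of ``places'' in agreeing variables, and the fraction of fully-agreeing clauses. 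The resulting exponent is a three-variable function $f(\alpha,r,t)$, and the whole difficulty is establishing that its unique global maximum in the feasible region is at $\bigl(\tfrac14,\tfrac14,\tfrac1{16}\bigr)$ with value $2(1-c)\ln 4$. A ``careful Lagrange/convexity argument'' does not get you there: the paper's own conference version contained an erroneous proof of exactly this maximum, and the journal version replaces it with a computer-aided interval-analysis proof covering both the interior stationary points and every boundary face, edge, and corner of the domain.

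Second, you should not expect to need a weighted second moment. The reason one usually weights assignments for random CSPs (the ``jackpot'' problem from clustered solutions) is killed here by the conditioning on the 2-core together with the at-most-one-extendible property: changing a single variable in a satisfying assignment violates every clause containing it, and since every variable has degree at least $2$ the change propagates, so no small-perturbation jackpot survives. That is the whole point of stripping to the 2-core before squaring. Your Friedgut-based fallback (prove w.u.p.p.\ satisfiability, then boost to a.a.s.) is a legitimate alternative route and in fact the paper notes, in a remark after Lemma~\ref{lambound}, that it would be the right move if $\E(N^2)\sim L\,\E(N)^2$ for some constant $L>1$; but the paper in fact establishes $\E(N^2)\sim\E(N)^2$ directly via the Laplace method on the triple integral, so $\pr(N>0)\to 1$ without invoking a sharp-threshold theorem.
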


In \cite{con08} the first author proves
\begin{theorem}\label{thc} For every $d\geq 4$,
$(3,d)$-UE-CSP is NP-complete.  Furthermore, it is NP-complete under the restriction
that no two constraints share more than one variable.
\end{theorem}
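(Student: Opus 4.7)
\bigskip\noindent\textbf{Proof plan.}
Membership in NP is immediate: given a candidate assignment, each constraint can be checked in constant time. For NP-hardness I would first establish the result for the base case $d=4$ and then lift it to arbitrary $d\geq 4$ by a padding argument, reducing from a standard NP-complete Boolean CSP such as 1-in-3 SAT or NAE-3-SAT. Recall that a $(3,d)$-UE constraint is a 3-ary relation on $[d]$ that is uniquely extendible in each coordinate, hence is the graph of a quasigroup operation on $[d]$; this rigid combinatorial object is what every gadget must be built from.

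For $d=4$, the plan is to designate two elements $T, F \in [4]$ as Boolean values and to construct two kinds of gadgets using auxiliary variables. A \emph{domain-restriction} gadget forces a target variable to take a value in $\{T,F\}$, using a short combination of UE constraints whose unique-extension property already pins the target to two values. A \emph{clause} gadget has a designated triple of target variables and, once those targets are confined to $\{T,F\}$, restricts to an NP-complete Boolean relation on $\{T,F\}^3$. Since a single order-$4$ quasigroup relation has only $16$ of the $64$ possible tuples and is highly symmetric, no single UE relation can on its own encode an NP-hard Boolean constraint; the clause gadget must therefore combine several UE constraints with auxiliary variables. A finite search over quasigroups of order $4$, guided by the observation that the non-affine ones generate expressive clones, should produce such a gadget. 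For $d\geq 5$, I would extend each underlying order-$4$ quasigroup to an order-$d$ quasigroup by completing the Latin cube arbitrarily on the added elements and add a domain-restriction gadget for every variable forcing it into the original four elements; satisfying assignments of the new instance then correspond bijectively to those of the original.

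For the strengthened statement (no two constraints sharing more than one variable), I would eliminate every offending pair by inserting an \emph{equality chain}: a short sequence of UE constraints on fresh auxiliary variables whose only satisfying extension forces a copy $x'$ of $x$ to equal $x$, laid out so that consecutive constraints share exactly one variable. Replacing one occurrence of $x$ in the offending constraint with $x'$ removes the violation while introducing only $\bigo(1)$ new constraints and variables, so iterating over all violations yields a polynomial-time reduction. The main obstacle is the clause gadget itself: arranging the several UE constraints and auxiliary variables that implement a single Boolean clause so that they pairwise share at most one variable, while still collectively realizing an NP-complete Boolean relation on the target triple. Meeting this constraint requires a careful choice of the quasigroup operation used in each sub-constraint and of which auxiliary variables are shared between them, and is where most of the combinatorial work of the proof will lie.
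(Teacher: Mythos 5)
The paper does not actually prove Theorem~\ref{thc}: it cites the first author's thesis \cite{con08} for the proof, so there is no argument in the present text to compare against directly. Judged on its own, your outline is reasonable in spirit (the quasigroup view of UE constraints is exactly right, and reducing from a Boolean CSP via gadgets is the natural route), but it has both a completeness problem and one concrete error.

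The completeness problem: nowhere do you exhibit the domain-restriction gadget, the clause gadget, or the equality chain. ``A finite search \ldots{} should produce such a gadget'' is a conjecture, not a proof, and it is precisely where the difficulty of the theorem lives. Note in particular that a single UE constraint projects onto every one of its three variables as all of $[d]$ (each $y\mapsto x*y$ is a bijection), so any domain-restriction or equality forcing must come from the interaction of several constraints with auxiliary variables of degree $\geq 2$; a degree-1 auxiliary adds no restriction by the at-least-one-extendibility argument of Lemma~\ref{lem:2coresat}. This makes the ``no two constraints share more than one variable'' variant genuinely delicate, which you acknowledge but do not resolve.

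The concrete error is in the padding step from $d=4$ to $d\geq 5$. You propose to take the order-$4$ quasigroup underlying the gadget and ``complete the Latin cube arbitrarily on the added elements'' to obtain an order-$d$ quasigroup whose restriction to the distinguished four elements is the original one. That requires the order-$4$ Latin square to sit as a Latin subsquare inside an order-$d$ Latin square. By a classical result (Evans/Ryser), an order-$n$ Latin square embeds as a subsquare of an order-$m$ Latin square only when $m=n$ or $m\geq 2n$; the count is easy to see directly --- a column outside the subsquare, restricted to the subsquare's $4$ rows, must carry $4$ distinct symbols drawn from the $m-4$ symbols not used by the subsquare, forcing $m-4\geq 4$. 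So the completion you rely on does not exist for $d\in\{5,6,7\}$, and your reduction silently skips exactly the cases $5\leq d\leq 7$. You would need either direct gadget constructions over order-$d$ quasigroups for each $d\geq 4$, or a padding scheme that does not insist the order-$4$ table survive as a literal subquasigroup (e.g.\ one that only needs the projection of a multi-constraint gadget onto $\{T,F\}^3$ to be NP-hard, without requiring $\{1,\dots,4\}$ to be closed under any single UE relation on $[d]$).
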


The latter restriction is relevant here because random instances with a linear number of constraints
a.a.s.\ have that property.

In Section \ref{sec:resolution} we prove
\begin{theorem}\label{t2}
For any constant $c>0$, and any $k\geq3$, $d\geq2$, the
resolution complexity of $U^{(k,d)}_{n,m=cn}$  is a.a.s.\ $2^{\Theta(n)}$.
\end{theorem}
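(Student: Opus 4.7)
The plan is to follow the width-based framework of Ben-Sasson--Wigderson, adapted to CSPs in the line of work by Mitchell and Molloy--Salavatipour cited earlier. The upper bound $2^{O(n)}$ is immediate: any unsatisfiable instance on $n$ variables admits a refutation of that size, and when the instance is satisfiable the claim is interpreted as a lower bound (or is vacuous). The substance of the theorem is therefore the lower bound $2^{\Omega(n)}$, which is proved in three steps: boundary expansion of the constraint hypergraph, satisfiability of small sub-instances, and a width-to-size trade-off.

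First I would establish boundary expansion: there exist constants $\alpha, \beta > 0$ (depending on $c, k, d$) such that a.a.s.\ every set $S$ of at most $\alpha n$ constraints has at least $\beta |S|$ boundary variables, i.e.\ variables appearing in exactly one constraint of $S$. This is a routine first-moment calculation. For each $s \leq \alpha n$, one bounds the expected number of bad $S$ by $\binom{m}{s}$ times the number of ways of packing the $sk$ variable-slots of $S$ into few enough variables to leave fewer than $\beta s$ boundary vertices, then shows that this expectation is $o(1)$ when summed over $s \leq \alpha n$ for $\alpha$ and $\beta$ chosen small. The calculation is uniform in any fixed $c > 0$.

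Second, I would use the expansion to show that every sub-instance on $\leq \alpha n$ constraints is satisfiable, by a peeling argument. Given such $S$, expansion supplies a constraint $C \in S$ containing a boundary variable $v$; by induction on $|S|$, $S \setminus \{C\}$ is satisfiable; extend any such assignment by choosing $v$ to satisfy $C$. For this to work one needs that in $(k,d)$-UE-CSP every allowed constraint admits a valid completion in any single coordinate once the other $k-1$ values are fixed, which is a mild non-triviality property of the constraint language that holds for $k \geq 3, d \geq 2$.

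Finally, I would combine the two ingredients via the Ben-Sasson--Wigderson size-width trade-off applied to the CSP resolution system used elsewhere in the paper (\cres{} / \ngres{}). Satisfiability of every sub-instance of size $\leq \alpha n$ forces any refutation to contain a clause that mentions $\Omega(n)$ variables, since the ``complexity'' measure---the minimum size of a subset of constraints that entails a given clause---must rise from $0$ to above $\alpha n$ somewhere in the refutation, and the sub-additivity of this measure across a resolution step forces the critical clause to have width $\Omega(n)$. The trade-off then yields size $2^{\Omega(w^2 / n)} = 2^{\Omega(n)}$. The main obstacle will be lifting this Boolean-style argument cleanly to the non-Boolean CSP resolution system (``width'' must be counted in variables rather than literals, and ``clauses'' become nogoods), but this translation has been carried out in the prior work cited; the bulk of the original content is therefore the expansion calculation of Step 1, uniform over all $c > 0, k \geq 3, d \geq 2$.
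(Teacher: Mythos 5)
Your proposal follows essentially the same route as the paper: both rely on hypergraph expansion for a random sparse hypergraph, the uniquely-extendible property (which is what makes small subformulas satisfiable and gives the one-variable extension step), and a Ben-Sasson--Wigderson width/size argument. The paper simply packages these steps by citing a three-condition framework from Mitchell and Molloy--Salavatipour (which already handles the $\cres$ translation you flag as a possible obstacle) together with an off-the-shelf sparsity lemma from Molloy--Salavatipour, rather than re-deriving boundary expansion and the width bound from scratch as you propose; the underlying content is the same.
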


The proof of Theorem \ref{mt} follows along the lines of that of
Dubois and Mandler~\cite{dub-man02, dub-manj}, who determined the satisfiability threshold for $k$-XOR-SAT.\footnote{{Unfortunately, a full version of their paper never appeared, and the short versions provide only a sketch of the argument for $k=3$ and no details for $k=4$.  Ditzfelbinger et. al.\cite{dgmm} provide a full proof for all $k\geq3$, using a somewhat different technique.} } That problem is
in P - it can be solved using Gaussian elimination.  However, they did not make use of the fact that
it is in P in establishing the threshold.  Rather, they applied the second moment method to prove
that problems with density below the threshold are a.a.s.\ satisfiable.  The problem $(3,4)$-UE-CSP
is somewhat contrived in that it is designed to have the properties of $k$-XOR-SAT that allow the approach of
Dubois and Mandler to work, while at the same time being NP-complete.  The analysis is much more complicated
than that of \cite{dub-man02, dub-manj}.  In fact, we have to resort to a computer-aided proof which makes use of interval
analysis to rigorously determine the global maximum of a particular function.

The discussion above has focused on random CSP models in which the constraint size and
domain size are both constant.
Researchers have studied models where the domain size
grows with $n$~\cite{dye-fri-mol03,fri-mol03,smi01,xu-li00,xu-li03}
or where the constraint size grows
with $n$~\cite{fri-wor01,fla03}.
Exact thresholds are known for some such models~\cite{xu-li00,xu-li03,fri-wor01,fla03}.
In \cite{xu-li03} it is proven that the models of
\cite{xu-li00} and \cite{xu-li03} contain many problems that
a.a.s.\ have exponential tree resolution complexity,
and it is noted that the model of \cite{fri-wor01} contains problems that a.a.s.\ have
exponential resolution complexity.  But in all of these models, the satisfiability threshold
occurs when the number of clauses is superlinear in $n$,
and the structure of a random constraint satisfaction
problem with a superlinear number of clauses is very different from one
with a linear number of clauses.  Thus, these problems are of a very different nature than, say,
the random 3-SAT problem.

In summary, random $(3,4)$-UE-CSP is the first model of random constraint satisfaction problems
for which we know the exact satisfiability threshold and
which shares the following characteristics with random 3-SAT:  (a) it is NP-complete;
(b) there are a linear number of constraints at the threshold; (c) random instances near
the threshold a.a.s.\ have exponential resolution complexity; (d) the domain and constraint sizes are constant.
Random 3-XOR-SAT was close - it met all of these conditions except for (a).

Ours is also the first random CSP model with a known linear satisfiability threshold for which it is seems that random instances with density near that threshold are difficult to solve.
There is computational evidence that such problems really are difficult.  Since the appearance of a short preliminary
version of this paper~\cite{con-mol04},
random instances of $(3,4)$-UE-CSP with densities close to the satisfiability threshold have
been chosen as challenging problems in experimental studies
\cite{lew-sch-bec05, heu-mar06, heu-maa07, heu-maa08, hyv-jun-nie08}.
In fact, the study of \cite{alt-mon-zam07} indicated that the problems
become difficult when the density exceeds the threshold for the appearance of a 2-core in the
underlying hypergraph (see Section \ref{sec:uecsp-intro}).  Random instances of $(3,4)$-UE-CSP have also been
used as test cases for the
SAT solver competitions of the SAT 2004, 2005, 2007, and 2009 conferences~\cite{ber-sim04, sat, sat07, sat09}.
None of the solvers in the competition were able to solve a test instance on 1200 variables whose density
was at the satisfiability threshold.

{\bf Remark:} A short preliminary
version  of this paper appeared in a conference proceedings~\cite{con-mol04}.
There is an error in that version; specifically in the proof of Lemma 16, which corresponds
to Lemma \ref{lem:fmax} in this paper.  We replaced that proof with the
much longer computer-aided proof found here.

\section{Uniquely Extendible CSPs}
\label{sec:uecsp-intro}

We define a {\em clause} to be an ordered subset of variables and a {\em constraint}
to be a list of tuples of values that we may assign to the variables of the clause.
A {\em constraint satisfaction problem}  consists of a set of $n$ variables where each
variable has a non-empty domain of possible values,
a set of $m$ clauses, and  a constraint applied to each clause.
The goal is to find an assignment
to the variables such that the constraint on each clause is satisfied.  One common
restriction is that every variable must have the same domain of values; that will
be the case throughout this paper.
In keeping with SAT notation, we will denote an instance of a constraint
satisfaction problem as a {\em formula}. The {\em underlying hypergraph} of a CSP
is the hypergraph whose vertices are the variables and whose hyperedges are the clauses.

The inspiration for uniquely extendible CSPs comes from the proof of the satisfiability
threshold for 3-XOR-SAT~\cite{dub-man02, mez-ric-zec03, dgmm}.
In XOR-SAT, each clause is an exclusive-or of the literals,
rather than a disjunction. Thus each clause can be interpreted as a linear equation
mod 2, which is why it can be solved using Gaussian elimination.

As we mentioned above, their proof of the threshold
does not rely in any way on the Gaussian elimination algorithm.
Instead, they reduce the random formula to its {\em 2-core},
the unique maximal subformula where each variable
occurs in at least two clauses.
Then, the first  and second moment
methods are used to give coinciding upper and lower bounds on the
satisfiability threshold for the 2-core.
Standard calculations translate the satisfiability threshold for the 2-core
into a satisfiability threshold for 3-XOR-SAT.

The key property that makes this proof work
is that every constraint in XOR-SAT is
{\em uniquely extendible}:
\begin{definition} A constraint on $k$-variables is {\em uniquely extendible} if for every truth assignment
to any $k-1$ of its variables there is exactly one value that can be assigned to the remaining variable
so that the constraint will be satisfied. It is {\em at-most-one-extendible} if there is always at most one
such value and {\em at-least-one-extendible} if there is always at least one such value.
\end{definition}

It is the property that the constraints of $k$-XOR-SAT are at-least-one-extendible
that permits us to consider only the 2-core.
Consider the following procedure to find the 2-core:

\begin{quote}
\noindent{\bf CORE:}
While the formula has any variable that occurs in at most one clause,
choose an arbitrary such variable and delete it along with
any clause that contains it.
\end{quote}

The order in which variables are chosen to
be deleted is easily seen to be irrelevant in that it does not affect the
final output of the procedure. This proves that the 2-core is unique.

\begin{lemma}\label{lem:2coresat}
Let $F$ be an instance of a CSP such that
every constraint is at-least-one-extendible. Then $F$ is satisfiable
iff the 2-core of $F$ is satisfiable.
\end{lemma}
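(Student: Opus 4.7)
The plan is to prove the two directions separately, with the ``only if'' direction being essentially trivial and the ``if'' direction requiring a careful reversal of the \textbf{CORE} procedure that exploits the at-least-one-extendibility assumption.

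For the forward direction, I would observe that the 2-core is obtained from $F$ by deleting variables and clauses only, so it is a subformula of $F$ on a subset of the variables. Hence any satisfying assignment of $F$, restricted to the variables remaining in the 2-core, satisfies every clause of the 2-core.

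For the reverse direction, I would fix an execution of \textbf{CORE} on $F$ and list the deleted variables in the order they were removed, say $v_1,v_2,\ldots,v_t$. For each $v_i$, let $C_{v_i}$ denote the (at most one) clause that was deleted together with $v_i$; note that at the moment $v_i$ was removed, every other variable appearing in $C_{v_i}$ was still present in the current formula. Starting from a satisfying assignment of the 2-core, I would extend the assignment to all of $F$ by processing the $v_i$ in reverse order $v_t,v_{t-1},\ldots,v_1$. When it is $v_i$'s turn, the key observation is that every other variable of $C_{v_i}$ either belongs to the 2-core or equals some $v_j$ with $j>i$, and hence has already been assigned a value. By at-least-one-extendibility of the constraint on $C_{v_i}$, there exists at least one value for $v_i$ that makes this constraint satisfied; pick any such value. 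If $v_i$ was deleted without an accompanying clause, simply assign it any value in its domain.

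The assignment produced this way satisfies every clause of $F$: clauses of the 2-core are satisfied by assumption, and every other clause of $F$ was deleted together with some $v_i$ and is equal to $C_{v_i}$, which is satisfied by construction at the moment we assigned $v_i$. There is no real obstacle here; the only thing that needs to be checked carefully is the order-of-processing claim that the non-$v_i$ variables of $C_{v_i}$ are always already assigned when we process $v_i$, and this is immediate from the invariant that \textbf{CORE} removes $v_i$ and $C_{v_i}$ only when all variables of $C_{v_i}\setminus\{v_i\}$ are still present in the current formula.
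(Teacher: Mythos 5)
Your proposal is correct and follows essentially the same route as the paper: restore the deleted variables in reverse order of the \textbf{CORE} procedure, using at-least-one-extendibility to extend the assignment at each step. You spell out more explicitly than the paper does the invariant that all other variables of $C_{v_i}$ are already assigned when $v_i$ is processed, but the underlying argument is identical.
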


\begin{proof}
Clearly, if the 2-core of $F$ is unsatisfiable
then so is $F$.  Assume that the 2-core of $F$ is satisfiable.
Consider running CORE on
$F$, and suppose that the deleted variables are $x_1,x_2,...,x_t$ in that order.
Start with any satisfying assignment of the 2-core.  Now restore the deleted variables
in reverse order, i.e.\ $x_t,x_{t-1},...,x_1$, each time adding the variable
along with the at most one clause that was deleted when the variable was
deleted.  Because the constraint applied to the clause is at-least-one-extendible,
there is a value that can be assigned to the variable that does not
violate the constraint.
This will result in a satisfying assignment for $F$.
\end{proof}

The fact that the constraints are all at-most-one-extendible is crucial
to the fact that the first moment bound yields the exact satisfiability threshold
for the 2-core of a random $k$-XOR-SAT formula.  To be more specific:  let $X$ denote
the number of satisfying assignments.  A straightforward calculation shows that if the
number of clauses is $cn$ then $\ex(X)=2^{f(c)n+o(n)}$, where $f(c)$ is a continuous monotonically
decreasing function; in fact, this is true for
virtually all models of random CSP's where the domain size is constant (of course, $f(c)$ depends on the model).
We define $c^+$ to be the solution of $f(c)=0$ and note
that if $c>c^+$ then $\ex(X)=2^{-\Theta(n)}$ and so the problem is a.a.s.\ not satisfiable.
For most problems, $c^+$ is strictly greater than the actual satisfiability threshold.
Usually this can be seen by observing a ``jackpot phenomena'' where the presence of
at least one satisfying assignment a.a.s.\ implies the presence of an exponential number of them.

This is most easily seen in the random instance before it is stripped to the 2-core. Standard arguments show
that such an instance will a.a.s.\ have at least $\e(c) n$ variables of degree zero, for some $\e(c)>0$.
Clearly, any satisfying
assignment remains satisfying after changing the values of any of those variables.  So a.a.s.\ $X>0$ implies
$X\geq 2^{\e(c) n}$.  By continuity, we can choose some $c<c^+$ such that $\e(c)>2f(c)>0$.  Markov's
Inequality implies that $\pr(X\geq 2^{\e(c) n})<\ex(X)/2^{\e(c) n}=2^{f(c)n+o(n)-\e(c) n}<2^{-f(c) n}$.
Therefore a.a.s.\ $X< 2^{\e(c) n}$, and so a.a.s.\ $X=0$; i.e. the formula is a.a.s.\ unsatisfiable.
This implies that the satisfiability threshold (if it exists) is at most $c<c^+$.

Stripping to the 1-core will eliminate this particular jackpot, but a similar jackpot will remain.
There will a.a.s.\ be a linear number of clauses with two variables that don't appear in any other clauses.
For any satisfying assignment, there will always be a way to jointly change the values of those
two variables so that the assignment is still satisfiable.  So once again we a.a.s.\ have $X>0$ implies
$X\geq 2^{\Theta(n)}$.

But if we strip to the 2-core then these, and all similar jackpots, are eliminated. To see this,
consider any satisfying assignment. If we change the assignment of
any one variable, then the constraint on every clause it is in will become violated because they are all
at-most-one-extendible.
So we must change the values of at least one other variable in each of those clauses.  Because every variable
lies in at least 2 clauses (as we are in the 2-core) each such change will cause the constraint on another
clause to be violated.
Expansion properties of random formulas ensure that this pattern spreads until many variables must
be changed - far too many to create a jackpot of the type described above.  Of course, this doesn't imply
that $c^+$ will  be the actual satisfiability threshold, but it shows that a common simple argument
fails to prove that it is not.

Observing that unique satisfiability played a crucial role in this proof, we are inspired to
define a more general class of problems which may also be amenable to similar analysis.

\begin{definition}[UE-CSP]
A UE-CSP instance is a constraint satisfaction problem
where  every constraint is uniquely extendible.
\end{definition}

\begin{definition}[$(k,d)$-UE-CSP]
An instance of $(k,d)$-UE-CSP is an instance of UE-CSP where we restrict every clause
to have size $k$ and every variable to have domain $\{1,...,d\}$.
\end{definition}

A straightforward induction on $k$ yields that there are only
two different uniquely
extendible constraints of size $k$ with $d=2$, and that they are
equivalent to the two different possible parities for a $k$-XOR-SAT constraint.
In other words, $(k,2)$-UE-CSP is exactly $k$-XOR-SAT, and thus is in P.

In addition, both $(k,3)$-UE-CSP and $(2,d)$-UE-CSP are in P~\cite{con08}.
On the other hand, Theorem \ref{thc} states that $(3,d)$-UE-CSP is NP-complete for any $d \geq 4$~\cite{con08}.  In this paper, we focus on the simplest NP-complete case: $(3,4)$-UE-CSP.

\section{The Random Model}
\label{sec:random}
For each appropriate $n,m$, we define $\Omega^{(k,d)}_{n,m}$ to
be the set of $(k,d)$-UE-CSP instances with $m$ clauses
on variables $\{v_1,...,v_n\}$ and a uniquely extendible
constraint on each clause.  We define $U^{(k,d)}_{n,m}$
to be a uniformly random member of $\Omega^{(k,d)}_{n,m}$.
When $m$ is defined to be some function $g(n)$,
we often write $U^{(k,d)}_{n,m=g(n)}$.
As is common in the study of random problems of this sort,
we will be most interested in the case where $m=c n$ for some
constant $c$.
This model is equivalent to first choosing a uniformly
random hypergraph on $n$ vertices and $m$ hyperedges to be the
underlying hypergraph of the $(k,d)$-UE-CSP instance, and then
for each hyperedge, arbitrarily ordering the vertices of the hyperedge and choosing
a uniformly random uniquely extendible
constraint of size $k$ and domain size $d$.

Alternatively, we can consider a second random model.  Define $U^{(k,d)}_{n,p}$ to
be an instance of $(k,d)$-UE-CSP on $n$ variables where each of the
$\binom{n}{k}$ clauses occurs in $U^{(k,d)}_{n,p}$ with probability
$p$ and a uniformly random constraint is applied to each clause.

From results of \cite{bol79,luc90} on random structures,
the two models are asymptotically equivalent in the sense
that
\[ \lim_{n \to \infty} \pr \left(U^{(k,d)}_{n,m} \text{ has property }
   \mathcal{A} \right) =
   \lim_{n \to \infty} \pr \left(U^{(k,d)}_{n,p} \text{ has property }
   \mathcal{A} \right) \]
if $\mathcal{A}$ is a monotone (increasing or decreasing) property
and $m$ and $\binom{n}{k} p$ are ``close'' to each other.
Formally, $m$ is ``close'' to $\binom{n}{k} p$ if
\[ m = \binom{n}{k}p + \bigo\left(\sqrt{\binom{n}{k}p(1-p)}\right), \]
and $p$ is ``close'' to $\frac{m}{\binom{n}{k}}$ if
\[ p = \frac{m}{\binom{n}{k}} + \bigo\left(\sqrt{\frac{m\left(\binom{n}{k}-m\right)}{\binom{n}{k}^3}}\right) . \]

So in particular, Theorem \ref{mt} implies that $U^{(3,4)}_{n,p=d/n^2}$ has a sharp
threshold of satisfiability at $d^*=6c^*$.

We need another random model with which to analyze the 2-core of $U^{(3,4)}_{n,m}$.
Let $\Psi_{n,m}$ denote the subset of $\Omega^{(3,4)}_{n,m}$
in which every variable lies in at least 2 clauses,
and let $U^*_{n,m}$ denote a uniformly random member of $\Psi_{n,m}$.  The following lemma
allows us to work in the  $U^*_{n,m}$ model.

\begin{lemma}\label{f2}
For any $n,m,n',m'$, if we condition on the event that
the 2-core of $U^{(3,4)}_{n,m=cn}$ has $n'$ variables and $m'$
clauses, then that 2-core is a uniformly random member
of $\Psi_{n',m'}$.
\end{lemma}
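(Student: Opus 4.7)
The goal is to show that, conditional on the 2-core of $U^{(3,4)}_{n,m=cn}$ having $n'$ variables and $m'$ clauses, every member of $\Psi_{n',m'}$ is equally likely to appear as that 2-core. The plan is to exhibit, for each fixed $n'$-subset $V'\subseteq\{v_1,\dots,v_n\}$, an explicit bijection between $\{F\in\Omega^{(3,4)}_{n,m}:\text{2-core}(F)=F^*_1\}$ and $\{F\in\Omega^{(3,4)}_{n,m}:\text{2-core}(F)=F^*_2\}$ for any two $F^*_1,F^*_2\in\Psi_{n',m'}$ supported on $V'$. Equal extension counts then give uniformity conditional on the choice of $V'$, and symmetry of $U^{(3,4)}_{n,m}$ under relabeling of variables gives the lemma.

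First I would define the swap: given $F$ with 2-core $F^*_1$, produce $F'$ by deleting the clauses (and their uniquely extendible constraints) of $F^*_1$ from $F$ and inserting those of $F^*_2$ in their place, leaving the non-core clauses and constraints untouched. This map is invertible by the analogous swap $F^*_2\mapsto F^*_1$, so the substance of the argument is showing $\text{2-core}(F')=F^*_2$.

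The main step is that verification. Running CORE on $F'$, the peeling order is driven entirely by the non-core clauses as long as every vertex of $V'$ retains degree at least two, because the only way a vertex becomes eligible for peeling is if a clause containing it is deleted and its degree drops to $\le 1$. Since both $F^*_1$ and $F^*_2$ lie in $\Psi_{n',m'}$, every $v\in V'$ already has degree at least two using only the core clauses, and non-core clauses can only increase this. An easy induction on the peeling order then shows CORE peels the same vertices of $V\setminus V'$ in $F'$ as in $F$, in the same order: at each step the current degree of the peeled vertex in $F'$ equals its number of surviving non-core clauses, which matches the corresponding count in $F$. Once these peelings complete, the surviving subformula is exactly $F^*_2$, and no further peeling occurs since $F^*_2$ has minimum degree at least two. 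Hence $\text{2-core}(F')=F^*_2$.

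The hard part is the induction in the previous paragraph: one must rule out, for instance, a premature peeling of a vertex of $V'$ in $F'$ caused by its having, say, degree $2$ in $F^*_2$ where it had degree $4$ in $F^*_1$. The minimum-degree-$\ge 2$ property built into $\Psi_{n',m'}$ is precisely what prevents this. Once the bijection is established, the rest is routine bookkeeping: the number of extensions of each $F^*\in\Psi_{n',m'}$ on $V'$ is the same, so the 2-core is conditionally uniform on $V'$; the symmetry of $U^{(3,4)}_{n,m}$ under permutations of $\{v_1,\dots,v_n\}$ makes the choice of $V'$ uniform among size-$n'$ subsets; together these yield the claimed conditional uniformity over $\Psi_{n',m'}$.
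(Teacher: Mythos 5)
Your proof is correct and takes the same core-swapping bijection approach as the paper. The paper treats as routine both the verification that the 2-core of the swapped formula is the substituted core and the relabeling of the core's vertex set to $\{v_1,\dots,v_{n'}\}$, while you spell these out in detail; otherwise the arguments coincide.
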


\begin{proof}
This is a straightforward
variation of the proof of
Claim 1 in the proof of Lemma 4(b) from \cite{mol04},
which is itself a very standard argument.
Consider a formula $F$ and its 2-core $F_c$.  Assume $F_c$ has $n'$
variables and $m'$ clauses.  Replace $F_c$ in $F$ by a arbitrary member of
$\Psi_{n',m'}$, $F'_c$.  Call this new formula $F'$.  Note that $F'_c$
is the 2-core of $F'$ and that $F'$ and $F$ have the same number of clauses.
Thus, the probability a random formula is equal to $F$ is the same as
the probability it is equal to $F'$, and this implies the probability that the
2-core of a random formula is $F_c$ is equal to the probability that the
2-core is $F'_c$.
\end{proof}

\section{Resolution Complexity}
\label{sec:resolution}

While the uncertainty of the P versus NP question means that we cannot
state definitively whether an efficient algorithm exists to either
solve an instance of $(k,d)$-UE-CSP or prove no solution to the instance
exists, we can state that there is no efficient resolution based algorithm,
such as DPLL, that will correctly handle unsatisfiable instances of
$(k,d)$-UE-CSP for $k \geq 3$.

The vast majority of SAT- and CSP-solving algorithms that are commonly used in practice
will recognize an unsatisfiable problem by implicitly producing a resolution proof
that it is unsatisfiable (see \cite{mit02} for the definition of resolution proof).
The {\em resolution complexity} of a boolean formula is the length of the shortest
resolution proof that it is unsatisfiable. (If the formula is satisfiable then
the resolution complexity is $\infty$.)  Thus, the resolution complexity of an
unsatisfiable formula is a lower bound on the time
that it will take an algorithm to solve the problem. In a seminal paper,
Chv\'atal and Szemer\'edi\cite{chv-sze88} proved that a random instance of 3-SAT with a linear
number of clauses a.a.s.\ has exponentially high resolution complexity. This
partially explains the aforementioned observation of Selman et al\cite{sel-mit-lev96} that SAT-solving
algorithms take a very long time on random instances whose density is near the satisfiability threshold.

Mitchell\cite{mit03} discusses two natural ways to extend the notion of resolution complexity to
the setting of a CSP. These two measures of resolution complexity are denoted
$\cres$ and $\ngres$.  The latter appears on the surface to be the most natural
extension in that it extends resolution rules to the setting of a CSP and then
carries them out.  $\cres$, on the other hand, converts a CSP to a boolean CNF-formula
and then carries out CNF-resolution on that formula.  Mitchell shows that for
every CSP instance $\I$, $\cres(\I)\leq\poly(\ngres(\I))$ whereas there are many choices
for $\I$ for which the converse it not true.  Furthermore, all commonly used resolution-type
CSP algorithms correspond nicely to the $\cres$ complexity of the input, but there
are some that do not correspond to the $\ngres$.  For that reason,  we  focus in this paper
on the $\cres$ complexity, as did Mitchell in \cite{mit02}. But note that the above inequality
implies that Theorem \ref{t2}, which we restate below, also holds for the $\ngres$ complexity.

\vspace{.3ex}

{\bf Theorem \ref{t2}} {\em
For any constant $c>0$, and any $k\geq3$, $d\geq2$, the $\cres$
resolution complexity of a uniformly random instance of $(k,d)$-UE-CSP
with $n$ variables and $cn$ clauses is a.a.s.\ $2^{\Theta(n)}$.
}

\vspace{.3ex}

\begin{proof}
From techniques developed in \cite{bs-wig02,mit02,mol-sal03,mit02},
a.a.s.\ the shortest $\cres$ resolution proof of unsatisfiability of a constraint
satisfaction problem on $n$ variables has exponential size
if there exists constants $\a,\z > 0$ such that a.a.s.\ the following
three conditions hold.
\begin{enumerate}
\item Every subformula on at most $\a n$ variables is satisfiable.
\item Every subproblem on $v$ variables, where $\hf \a n \leq v \leq \a n$,
has at least $\z n$ variables of degree at most 1, where the degree of a
variable is the number of clauses containing that variable.
\item If $x$ is a variable of degree at most 1 in a CSP $F$ then,
letting $F'$ be the subproblem obtained by removing $x$ and its clause,
every satisfying assignment of $F'$ can be extended to a satisfying assignment
of $F$ by assigning some value to $x$.
\end{enumerate}
Because our random model for UE-CSP applies one uniquely extendible constraint
to each clause,
the third condition is trivially true.
The following lemma from \cite{mol-sal03} states a useful property
of random formulae with a linear number of clauses:
a.a.s.\ every subproblem on at most $\a n$ variables
has a low clause density.
A similar lemma is proven in \cite{mit02} and several other papers.

\begin{lemma}[\cite{mol-sal03}]
Let $c > 0$ and $k \geq 2$ and let $H$ be a random $k$-uniform hypergraph
with $n$ vertices and $m=cn$ edges.  Then for any $\d > 0$, there exists
$\a = \a(c,k,\d) > 0$ such that a.a.s.\ $H$ has no subgraph with
$0 < h \leq \lfloor \a n \rfloor$ vertices and at least
$\left( \frac{1+\d}{k-1}\right) h$ edges.
\end{lemma}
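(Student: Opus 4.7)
The plan is to apply a first-moment bound directly. For each $h$, let $X_h$ denote the number of vertex subsets $S\subseteq V(H)$ with $|S|=h$ that span at least $r:=\lceil(1+\delta)h/(k-1)\rceil$ hyperedges, and aim to show that for a suitably small $\alpha=\alpha(c,k,\delta)>0$,
\[ \sum_{h=k}^{\lfloor \alpha n\rfloor}\E[X_h] = o(1), \]
after which the conclusion follows from Markov's inequality and the observation that $h<k$ forces $X_h=0$ trivially.

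First I would estimate
\[ \E[X_h]\;\le\;\binom{n}{h}\binom{m}{r}\left(\frac{\binom{h}{k}}{\binom{n}{k}}\right)^{r}, \]
where the last factor bounds the probability that a specific set of $r$ edges all lie inside a fixed $h$-subset (in the uniform $m$-edge model the edges can be treated as essentially independent up to a harmless conditioning, and one can alternatively pass to the asymptotically equivalent $G(n,p)$ model with $p=m/\binom{n}{k}$, as the property is monotone). Applying $\binom{n}{h}\le(en/h)^h$, $\binom{m}{r}\le(em/r)^r$ and $\binom{h}{k}/\binom{n}{k}\le(1+o(1))(h/n)^k$, and writing $\lambda:=h/n$, the defining identity $(k-1)r = (1+\delta)h$ (valid up to ceiling) makes the powers of $\lambda$ collapse to
\[ \E[X_h]\;\le\;(1+o(1))^{h}\,\bigl[\,e\,A\,\lambda^{\delta}\,\bigr]^{h},\qquad A:=\left(\frac{ec(k-1)}{1+\delta}\right)^{(1+\delta)/(k-1)}. \]
The decisive feature is the factor $\lambda^{\delta}$: every vertex of the offending subgraph contributes a shrinking factor whenever the subgraph is small relative to $n$, and this is what makes subgraphs of edge-density $(1+\delta)/(k-1)$ implausible.

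Next I would pick $\alpha>0$ so that $eA\alpha^{\delta}\le 1/4$, and split the sum at $h=\sqrt{n}$. For $k\le h\le\sqrt{n}$ one has $\lambda^{\delta}\le n^{-\delta/2}$, so $\E[X_h]\le(eA\,n^{-\delta/2})^{h}$; every term is at most $n^{-\delta/3}$ once $n$ is large, and the resulting geometric series contributes $o(1)$. For $\sqrt{n}<h\le\alpha n$ one has $\lambda\le\alpha$, so $\E[X_h]\le 2^{-h}$ and the tail sums to $O(2^{-\sqrt{n}})=o(1)$. Adding the two pieces and invoking Markov finishes the argument.

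The main obstacle is entirely one of book-keeping: verifying that a single choice of $\alpha$ works uniformly across the whole range of $h$. The constant-$h$ regime relies on the polynomial decay $\E[X_h]=O(n^{-\delta h})$ rather than on any exponential bound, while the linear-$h$ regime requires $eA\alpha^{\delta}<1$; splitting at $h=\sqrt{n}$ is the cleanest way to let the same $\alpha$ serve both purposes. One must also check that the implicit $(1+o(1))$ slack in the binomial inequalities is uniform in $h\le\alpha n$ and does not overwhelm the $1/4$ safety margin, but this is routine given that all approximations are controlled by $\lambda\le\alpha$.
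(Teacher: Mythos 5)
Your proof is correct, and it is worth noting that the paper itself does not prove this statement at all: the lemma is simply quoted from \cite{mol-sal03} (with the remark that similar statements are proved in \cite{mit02} and elsewhere), so there is no in-paper argument to compare against. Your first-moment computation is the standard proof of such density lemmas and matches the style of the cited source. The key identity you exploit---that with $r=\lceil(1+\delta)h/(k-1)\rceil$ the bound collapses to $\E[X_h]\le \mathrm{const}\cdot\bigl[\rme A\lambda^{\delta}\bigr]^h$ with $A=\bigl(\rme c(k-1)/(1+\delta)\bigr)^{(1+\delta)/(k-1)}$---is right, and the split at $h=\sqrt n$ correctly handles the fact that for constant $h$ the decay is only polynomial ($O(n^{-\delta h})$) while for linear $h$ one needs $\rme A\alpha^{\delta}<1$. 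Two small points of bookkeeping: (i) the slack you write as $(1+\smallo(1))^h$ is in fact only a constant factor, since $\binom{h}{k}/\binom{n}{k}\le (h/n)^k$ holds exactly and the ceiling in $r$ costs at most one extra factor of $\rme c(k-1)/(1+\delta)$, so there is no per-vertex compounding to worry about; (ii) in the uniform-$m$-edge model the cleanest rigorous bound is either the union bound over $r$-subsets of the $\binom{h}{k}$ potential internal edges (which gives the same exponent), or the monotone transfer to the binomial model that you mention and that the paper's Section~3 already licenses, since the existence of a dense small subgraph is an increasing property. With either of these made explicit, the argument is complete.
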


Let $F'$ be a minimally unsatisfiable subformula of $F$ with at most
$h \leq \lfloor \a n \rfloor$ variables.  Then
$F'$ cannot have a variable of degree less than 2 because condition 3
above is trivially true for UE-CSP.
Thus, $F'$ must have at least $\left(\frac{2}{k}\right) h$ edges.
However, if $k > 2$, there exists $\d = \d(k) > 0$ such that
$\frac{2}{k} > \frac{1+\d}{k-1}$.  As a result, a.a.s.\ every subformula on
$h$ variables, and in particular $F'$, will be satisfiable.
By the same argument, let $\z = 1 - \frac{(1+\d)k}{2(k-1)}$, and $F'$
must have at least $\z n$ variables of degree at most 1.
\end{proof}

\section{The Satisfiability Threshold for $(3,4)$-UE-CSP}
\label{sec:3-4-thresh}

In this section we formally define $c^*$ and prove Theorem~\ref{mt}, our main theorem.

The proof will proceed as follows:  first
the variables of degree 0 and 1 are stripped away to produce the
{\em 2-core}.  Then the satisfiability threshold on the 2-core is proven
using the first and second moment methods, and this threshold on the 2-core
yields the threshold for the original problem.
The upper bound for satisfiability
is a simple use of the first moment method, but the lower bound is more
challenging.
Applying the second moment method produces a complicated summation.
Following the example of \cite{dub-man02}, the summation is
approximated
by a multiple integral, and then the Laplace Method is used to approximate
the integral.  In order to apply the Laplace Method, we must determine the
global maximum of a certain function (see Lemma \ref{lem:fmax}).  We do so
with a computer aided proof using interval analysis, which we present in the appendix.

\subsection{Determining $c^*$}
\label{sec:2core}
Since we will be working with the 2-core,
we need to determine how its density relates to that of the original formula.

\begin{lemma}\label{f1}
Let $U^{(3,4)}_{n,m=cn}$ be a uniformly random instance of
$(3,4)$-UE-CSP with $n$ variables and $m=cn$ clauses.
If $c < \min_{x > 0} \frac{x}{3(1-\rme^{-x})^2} = 0.818469\ldots$, a.a.s.\ $U^{(3,4)}_{n,m=cn}$
has no non-empty 2-core.  Otherwise,
a.a.s.\ the 2-core of $U^{(3,4)}_{n,m=cn}$ has $\Theta(n)$ variables
and $\g(c)+\smallo(1)$ times as many clauses as variables with
\[\g(c)=\frac{x(1-\rme^{-x})}{3(1-\rme^{-x}-x\rme^{-x})},\]
where $x$ is the largest solution to
\[x = 3c(1-\rme^{-x})^2.\]
\end{lemma}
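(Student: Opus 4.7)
Since CORE depends only on the incidence structure between variables and clauses and not on the specific uniquely extendible constraints, the question reduces to analyzing the 2-core of a uniformly random 3-uniform hypergraph on $n$ vertices with $cn$ edges; this is exactly the underlying hypergraph of $U^{(3,4)}_{n,m=cn}$. This is a classical object. My plan is to rederive the relevant fixed-point equation via a configuration-model / branching-process argument and then transfer the resulting expectations into a.a.s.\ statements using Wormald's differential equations method applied to a parallel-stripping version of CORE.

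Model the underlying hypergraph in the configuration model: give each vertex an independent Poisson($3c$) number of half-edges and uniformly match the $3cn$ resulting half-edges into triples. Let $\a$ denote the probability, in the local Galton-Watson limit, that a random half-edge leads to a vertex which has at least one \emph{other} incident edge, all of whose other endpoints are themselves ``usable'' in the same recursive sense. By the thinning property of Poisson variables, the other half-edges at such a vertex are Poisson($3c$)-distributed and each of the resulting edges is fully usable independently with probability $\a^2$; hence $\a$ satisfies $\a = 1 - \rme^{-3c\a^2}$. Substituting $x = 3c\a^2$ yields $x = 3c(1-\rme^{-x})^2$, the equation in the lemma. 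A calculus exercise on $h(x) = x - 3c(1-\rme^{-x})^2$ then shows that $h$ has a positive root iff $c \geq \min_{x>0} x/[3(1-\rme^{-x})^2] = 0.818469\ldots$, and above this threshold the \emph{largest} positive root is the stable fixed point of the parallel stripping dynamics; below it, $\a=0$ is the only fixed point and the stripping cascade wipes out the whole hypergraph, so the 2-core is a.a.s.\ empty.

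For $c$ above the threshold, let $x$ be the largest positive root. A vertex is in the 2-core iff at least two of its Poisson($3c$) incident edges are fully usable, which by Poisson thinning occurs with probability $\pr[\text{Poisson}(x)\geq 2] = 1-\rme^{-x}-x\rme^{-x}$, giving $\Theta(n)$ 2-core variables. The expected count of fully-usable half-edges at 2-core vertices is $n\,\ex[\text{Poisson}(x)\cdot\mathbf{1}\{\text{Poisson}(x)\geq 2\}] = nx(1-\rme^{-x})$, so the number of 2-core clauses is $nx(1-\rme^{-x})/3$, and the ratio yields $\g(c)=x(1-\rme^{-x})/[3(1-\rme^{-x}-x\rme^{-x})]$ as claimed. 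The main obstacle is moving from these local-limit expectations to a.a.s.\ statements: I would follow the now-standard framework of Pittel-Spencer-Wormald and apply the differential equations method to the parallel-stripping process, showing concentration around the deterministic trajectory throughout the run. The delicate portion is near the end of the process, where the residual hypergraph becomes small; the usual remedy is to stop the differential-equations argument just before the process enters an $\smallo(n)$ window and then handle the tail by a direct union bound on short stripping sequences, which shows that once the process reaches within $\smallo(n)$ of the predicted fixed point it cannot proceed much further before terminating.
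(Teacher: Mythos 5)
Your fixed-point equation, vertex count, and edge count all check out: writing $\a=1-\rme^{-x}$, your equation $\a=1-\rme^{-3c\a^2}$ is exactly $x=3c(1-\rme^{-x})^2$; the vertex count is $\pr[\mathrm{Poisson}(x)\geq 2]\,n=(1-\rme^{-x}-x\rme^{-x})n$; and your edge count $nx(1-\rme^{-x})/3$ equals $cn(1-\rme^{-x})^3$ after substituting $x=3c(1-\rme^{-x})^2$, which agrees with the known 2-core formula for $k$-uniform hypergraphs at $k=3$. The ratio then gives the stated $\g(c)$.

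The difference is entirely one of route, not of substance. The paper's proof is a two-line observation followed by a citation: CORE depends only on the underlying hypergraph (which is a uniformly random 3-uniform hypergraph on $n$ vertices with $cn$ edges), and Cain and Wormald \cite{cai-wor06} prove precisely that its 2-core a.a.s.\ has $(1-\rme^{-x}-x\rme^{-x})n+\smallo(n)$ vertices and $(1-\rme^{-x})^k cn +\smallo(n)$ hyperedges, with $x$ the largest root of $x=ck(1-\rme^{-x})^{k-1}$, and is empty when no positive root exists; setting $k=3$ and simplifying finishes. You propose instead to re-prove that cited theorem from scratch — the Poissonized configuration model, the local branching-process fixed point, Wormald's differential-equations method for the parallel stripping process, and the endgame analysis once the residual hypergraph is $\smallo(n)$. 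That is the machinery of Pittel--Spencer--Wormald and its descendants, and it is exactly what the references in the paper (\cite{pit-spe-wor96,mol04,cai-wor06,kim06,rio07}) carry out. Your route would be worth taking only if the lemma were not already in the literature or you wanted a self-contained treatment; as written, the endgame ("stop just before an $\smallo(n)$ window, then union-bound short stripping sequences") is the genuinely delicate part and is sketched rather than argued, whereas citation sidesteps it entirely and gets the exact asymptotics for free.
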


\begin{proof} Recall the CORE procedure from Section~\ref{sec:uecsp-intro}.  Note that
it essentially works on the underlying hypergraph of the random instance.
Cores of random uniform hypergraphs are well-studied (see, for example,
\cite{pit-spe-wor96, maj-wor-hav-cze96, mol04, cai-wor06, kim06, rio07, dar-nor}).
From \cite{cai-wor06}, the 2-core of a $k$-uniform
random hypergraph on $n$ vertices and $cn$ edges
has a.a.s.\ $\left(1-\rme^{-x}-x\rme^{-x}\right)n + \smallo(n)$ vertices and
a.a.s.\ $\left(1-\rme^{-x}\right)^k c n + \smallo(n)$ hyperedges where
$x$ is the largest solution to
$x = ck(1-\rme^{-x})^{k-1}$, and if there is no positive solution to $x = ck(1-\rme^{-x})^{k-1}$ then
the 2-core is a.a.s.\ empty.
Setting $k = 3$ and rearranging completes the proof.
\end{proof}

Below, we will
prove that the satisfiability
threshold for the 2-core of random $(3,4)$-UE-CSP is at clause density 1.
Thus we define:

\begin{definition}\label{dc*}
$c^*=.917935...$ is defined to be the unique solution to $\g(c)=1$.
\end{definition}

\subsection{The First and Second Moment Arguments}
Here we prove that $c=1$ is the satisfiability threshold for $U^*_{n,m=cn}$; i.e.
random $(3,4)$-UE-CSP with minimum degree at least 2. Lemmas \ref{f2} and \ref{f1}
then imply Theorem \ref{mt}. We need to prove two sides of the threshold:

\begin{lemma}\label{l4}
For every $c>1$, $U^*_{n,m=cn}$ is a.a.s.\ unsatisfiable.
\end{lemma}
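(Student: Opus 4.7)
The plan is a short first-moment argument. Let $X=X(F)$ denote the number of satisfying assignments of $F\sim U^*_{n,m=cn}$. I will show that $\ex(X)=4^{(1-c)n}$; since this tends to $0$ when $c>1$, Markov's inequality immediately gives $\pr(X\geq 1)\to 0$.

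The only computation needed is the probability that a single fixed assignment $\s\in\{1,2,3,4\}^n$ satisfies $F$. Recall from Section~\ref{sec:random} that a formula in $\Psi_{n,m}$ is specified by (i) an underlying hypergraph of minimum degree at least $2$ and (ii) an independently and uniformly chosen uniquely extendible constraint on each hyperedge. Conditioning on the hypergraph, the event that $\s$ satisfies all $m$ constraints factors as a product over clauses, so it is enough to show that for any fixed ordered triple of variables and any fixed tuple $\t\in\{1,2,3,4\}^3$, a uniformly random UE constraint on that triple is satisfied by $\t$ with probability exactly $1/4$.

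This is pure symmetry. By the definition of unique extendibility, each UE constraint on three variables of domain $\{1,2,3,4\}$ has exactly $4^{3-1}=16$ satisfying tuples out of $4^3=64$. Moreover, the group of coordinatewise permutations of $\{1,2,3,4\}$ acts on tuples and simultaneously on UE constraints preserving incidence, and it acts transitively on tuples, so every tuple lies in the same number of UE constraints. Hence the probability a uniformly random UE constraint is satisfied by $\t$ equals $16/64=1/4$, independent of $\t$. Assembling the pieces, $\pr(\s\text{ satisfies }F)=(1/4)^m$ regardless of the hypergraph distribution, and summing over the $4^n$ assignments produces $\ex(X)=4^{n-m}=4^{(1-c)n}$.

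There is no substantive obstacle here. The one mild subtlety to guard against is that the $U^*_{n,m}$ model conditions only on the combinatorial event that the hypergraph has minimum degree at least $2$; this reweights the distribution over hypergraphs but leaves the conditional distribution of constraints given the hypergraph unchanged, so the per-clause probability of $1/4$ survives intact and the calculation above is valid in the $U^*$ model and not merely in the unconditioned one.
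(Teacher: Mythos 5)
Your proposal is correct and follows essentially the same route as the paper: a one-line first-moment computation showing $\ex(N)=4^n\cdot 4^{-m}=4^{(1-c)n}\to 0$ for $c>1$. The only cosmetic difference is in how you justify the per-clause probability of $\frac14$: the paper observes directly that the unique value $z_{\alpha\beta}$ forced by a random UE constraint is uniformly distributed over the $4$ domain values, whereas you reach the same conclusion by double-counting incidences ($16$ satisfying tuples per constraint, and transitivity of the coordinatewise $S_4^3$-action on $\{1,\dots,4\}^3$ makes each tuple equally represented); both are symmetry arguments of the same flavor, and your remark that conditioning on the hypergraph structure leaves the constraint distribution untouched is the right way to see the argument survives in the $U^*_{n,m}$ model.
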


\begin{lemma}\label{l3}
For every $c<1$, $U^*_{n,m=cn}$ is a.a.s.\ satisfiable.
\end{lemma}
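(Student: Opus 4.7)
The plan is to prove satisfiability via the second moment method, closely following the Dubois--Mandler template used for $k$-XOR-SAT. Let $X$ denote the number of satisfying assignments of $U^*_{n,m=cn}$. I would aim to establish $\E[X^2]=O(\E[X]^2)$, so that the Paley--Zygmund inequality yields $\pr(X>0)\ge \delta$ for some constant $\delta=\delta(c)>0$ whenever $c<1$. Converting this ``uniformly positive probability'' statement into an a.a.s.\ one would be done by invoking a sharp threshold argument in the style of Friedgut~\cite{fri99}: satisfiability is monotone decreasing in the number of clauses, so a uniformly positive probability of satisfiability at density $c$ forces a.a.s.\ satisfiability at any strictly smaller density, and since $c<1$ may be taken arbitrarily close to $1$, this covers the whole range.

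For the first moment, a uniformly random uniquely extendible constraint of size $3$ on domain $\{1,2,3,4\}$ has exactly $4^2=16$ satisfying triples, so a fixed assignment satisfies it with probability $1/4$, giving $\E[X]$ of exponential order $4^{(1-c)n}$ after accounting (via Lemma~\ref{f2}) for the fact that the minimum-degree-at-least-$2$ conditioning holds with probability $\Theta(1)$ in the unconditioned model. For the second moment, I would partition pairs $(\sigma,\tau)$ of assignments by their joint type $\bfm{\alpha}=(\alpha_{ij})_{1\le i,j\le 4}$, where $\alpha_{ij}n$ counts variables $v$ with $(\sigma(v),\tau(v))=(i,j)$. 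Two structural facts simplify the inner probability: by the $S_4$-symmetry of the random UE constraint under simultaneous relabelling of the domain, the joint satisfaction probability on a given clause depends only on the agreement pattern of $(\sigma,\tau)$ on the three clause variables; and, crucially, this probability vanishes whenever $\sigma$ and $\tau$ agree on exactly two clause variables, because the unique forced value of the third coordinate cannot simultaneously equal two distinct assignments. Summing over joint types and applying Stirling, one reduces $\E[X^2]/\E[X]^2$ to an expression of the form
\[
\sum_{\bfm{\alpha}}\exp\!\bigl(n\,\Phi_c(\bfm{\alpha})+o(n)\bigr),
\]
to which the Laplace method should apply.

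The main obstacle, and the reason the argument becomes much harder than in the XOR-SAT case, is to prove that the global maximum of $\Phi_c$ over the joint-type simplex is attained at the symmetric (``independent'') point $\alpha_{ij}=1/16$, and equals zero there, so that $\E[X^2]/\E[X]^2=O(1)$. This is precisely Lemma~\ref{lem:fmax}. Unlike in $k$-XOR-SAT, where the entropy term and the ``energy'' term combine cleanly and concavity along the symmetric axis suffices, here the Hessian of $\Phi_c$ is not easily shown to be negative-definite away from the symmetric point, and I expect no short analytic certification to exist: the plan here is to defer to the computer-aided interval-analysis argument given in the paper's appendix, which rigorously verifies the global maximum by subdividing the simplex and bounding $\Phi_c$ together with its partial derivatives on each subbox. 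Once Lemma~\ref{lem:fmax} is in hand, the Laplace approximation gives $\E[X^2]=O(\E[X]^2)$ (with a careful but standard handling of the minimum-degree conditioning through Lemma~\ref{f2}), and the Paley--Zygmund-plus-sharp-threshold step above completes the proof.
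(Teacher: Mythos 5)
Your high-level architecture matches the paper's: second moment plus Laplace, the crucial observation that a clause on which $\sigma$ and $\tau$ agree in exactly two of the three positions is jointly satisfied with probability zero, and deferral of the global-maximum verification to the computer-aided Lemma~\ref{lem:fmax}. But there are two genuine gaps. First, your first-moment accounting is wrong: in $U^*_{n,m}$, which is uniform over $\Psi_{n,m}$, we have $\E(N)=4^n\cdot 4^{-m}$ exactly, since conditional on the hypergraph each of the $m$ constraints is an independently uniform UE constraint satisfied by any fixed assignment with probability $1/4$. No appeal to Lemma~\ref{f2} is needed, and the probability that $U^{(3,4)}_{n,m=cn}$ has minimum degree $\geq 2$ is in fact exponentially small (the expected number of degree-$\le 1$ variables is linear in $n$), not $\Theta(1)$; Lemma~\ref{f2} is used only to transfer the threshold from $U^*$ back to $U^{(3,4)}_{n,m}$. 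Second, and more substantively, summing over variable-pair joint types $\bfm{\alpha}=(\alpha_{ij})$ alone is not enough to carry out the second-moment computation in $\Psi_{n,m}$. Because the hypergraph itself is random subject to minimum degree $\ge 2$, the number of formulas in $\Psi_{n,m}$ satisfied by a given pair also depends on the proportion $r$ of the $3m$ clause ``places'' landing on the $\alpha n$ agreeing variables and on the proportion $t$ of clauses whose three variables all agree. These two extra parameters carry exponential weight; the function in Lemma~\ref{lem:fmax} is a function of $(\alpha,r,t)$, not of the joint type alone, and the enumeration runs through generalized Stirling numbers $S(\cdot,\cdot,2)$ and the $\Lambda$-terms precisely to handle these degrees of freedom. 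A reduction to $\sum_{\bfm{\alpha}}\exp(n\Phi_c(\bfm{\alpha})+o(n))$ over joint types is not what the calculation produces, and Lemma~\ref{f2} by itself says nothing about how to carry out the count.

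On the closing step: the paper establishes the sharper asymptotic $\E(N^2)/\E(N)^2\sim 1$, which gives $\pr(N>0)\to 1$ directly with no sharp-threshold machinery. Your fallback --- showing $\E(N^2)=O(\E(N)^2)$, deducing w.u.p.p.\ satisfiability via Cauchy--Schwarz/Paley--Zygmund, and then upgrading to a.a.s.\ by a Friedgut-type argument --- is explicitly acknowledged as viable in the Remark following Lemma~\ref{lambound}, as the route that would have been needed had the $\Lambda$-terms not cancelled. If you take it, be aware that you must verify that a sharp-threshold result applies to $(3,4)$-UE-CSP (Friedgut's theorem is stated for $k$-SAT), and that the monotonicity argument should be run in $U^{(3,4)}_{n,p}$ or $U^{(3,4)}_{n,m}$ rather than directly in $U^*_{n,m}$, where monotonicity of satisfiability in $m$ is not automatic because of the degree constraint.
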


Theorem \ref{mt} then follows:

{\bf Proof of Theorem \ref{mt}}  Let $F$ be a random instance of $U^{(3,4)}_{n,m=cn}$.
Expose $n',m'$, the number of variables and constraints in the 2-core of $F$. By Lemma \ref{f1},
a.a.s.\ $m'=(\g(c)+o(1))n'$. By Lemma \ref{f2}, the 2-core of
$F$, after conditioning on the values of $n',m'$, has the same distribution as $U^*_{n',m'}$.
Therefore, the theorem follows from Lemmas \ref{l4},\ref{l3} and the definition of $c^*$.
\proofend

The proof of the first lemma is straightforward:

{\bf Proof of Lemma \ref{l4}}
We apply what is, in this field, a very standard
first moment argument.  Consider a random instance $F$ chosen
from $\Psi_{n,m=cn}$, and let $N$ denote the number of satisfying assignments
of $F$. We will show that $\ex(N)=\smallo(1)$;  this implies
that a.a.s.\ $N=0$; i.e., that a.a.s.\ $F$ is unsatisfiable.

Consider any assignment $\sigma$ of values to the variables of $F$.
Since each constraint is uniquely extendible, for each possible setting of
two variables in a clause, there is exactly one possible value for
the third variable.  Because the random model
considered includes all possible uniquely extendible constraints and
because there are $4$ possible values for the third variable, the
probability that a particular constraint is satisfied by $\sigma$
is $\inv{4}$.  As there are $4^n$ choices for $\sigma$, we have
\[\ex(N)=4^n4^{-m}=4^{(1-c)n}=\smallo(1),\]
since $c>1$.
\proofend

The next proof is much more complicated, and takes up most of the work in this paper.
As mentioned above, we follow the proof of the corresponding theorem in \cite{dub-man02},
applying a second moment argument.
Unfortunately, our larger domain size yields a larger set of constraints
to choose from and much more complicated calculations
than those in \cite{dub-man02}.

{\bf Proof of Lemma \ref{l3}}
As in the proof of Lemma \ref{l4}, consider a random instance $F$ chosen
from $\Psi_{n,m=cn}$, and let $N$ denote the number of satisfying assignments
of $F$.
Again, we have $\ex(N) = 4^{(1-c)n}$.
The main step of this proof is
to compute the second moment of $N$ obtaining:
\begin{equation}\label{lsm1}
\ex(N^2)=\ex(N)^2(1+\smallo(1)).
\end{equation}
Because $N$ is non-negative, a well known application of the Cauchy-Swartz
inequality implies
\[\pr(N>0)\geq\frac{\ex(N)^2}{\ex(N^2)},\]
and so (\ref{lsm1}) implies Lemma \ref{l3}.

Following the technique of \cite{dub-man02}, the proof will compute
$\ex(N^2)$ by putting $\frac{\ex(N^2)}{\ex(N)^2}$ into the
form
\[ \int_{a_1}^{b_1} \!\! \int_{a_2}^{b_2} \!\! \int_{a_3}^{b_3}
g(x_1,x_2,x_3)e^{nh(x_1,x_2,x_3)}dx_1\,dx_2\,dx_3 \]
where $g$ is polynomial in $n$ and $h$ has a unique maximum in the range of
the integrals.
Then we use the Laplace Method to approximate the triple integral.

Let the $4^n$ possible assignments be $\s_1,\ldots,\s_{4^n}$, and
let $N_i$ be the indicator variable that $\s_i$ is a satisfying assignment.
Then $N = N_1 + \cdots + N_{4^n}$ and $N^2 = \sum_{i,j} N_i N_j$.
Since $N_iN_j=1$ if and only if $F$ is
satisfied by both $\s_i$ and $\s_j$, this indicates that
we must focus on counting the number
of instances satisfied by two
assignments to the variables.

Similarly to \cite{dub-man02},
let $\sigma$ and $\tau$ be arbitrary assignments to the variables,
let $\#\mathcal{C}$ be the total number of instances in $\Psi_{n,m}$,
and let $\#\mathcal{C}_{\sigma , \tau}$ be
the total number of instances in $\Psi_{n,m}$
that are satisfied by both $\sigma$ and $\tau$.
Then,
\[ \ex(N^2) = \frac{1}{\#\mathcal{C}} \sum_{\sigma , \tau} \#\mathcal{C}_{\sigma , \tau}. \]

Let $q$ be the number of different uniquely extendible constraints on three ordered variables.
So for each clause, we will have $q$ choices for the constraint that we assign to
that clause.
As is done in \cite{dub-man02},
we can think of the $m$ clauses as inducing a distribution
of $3m$ ``places'' to the $n$ variables such that each variable
receives at least 2 ``places''.  In addition, we add the restriction
that for each triple of ``places'' corresponding to
a clause, no two ``places''
are assigned to the same variable.  We define:
\begin{itemize} \item $S(i,j,2)$, known as a generalized Stirling number of the second kind,
is the number of ways to partition $i$ elements into
$j$ sets such that each set has at least 2 elements.
\item $\Lambda_{3m,n}$ is the probability that, in the above partition,
no two elements from any one triple appear in the same set.
\end{itemize}
So, $\#\mathcal{C} = q^{m}S(3m,n,2)n!\Lambda_{3m,n}$.

Consider a clause and a random constraint on that clause.  We need to determine
the probability that both assignments $\s$ and $\t$ satisfy the constraint.
Place an arbitrary ordering on the variables of the clause, and let
$\a$, $\b$, and $\g$ be the values assigned to those variables by $\s$, and let
$\a'$, $\b'$, and $\g'$ be the values assigned to those variables by $\t$.
In addition, let $z_{\a\b}$ be the unique value that the constraint forces the third
variable to be
if the first variable is assigned $\a$ and the second variable is
assigned $\b$.

If the values of the three variables are unchanged between $\s$ and $\t$,
i.e.\ if $\a=\a'$, $\b=\b'$, and $\g=\g'$, then
the probability that both assignments satisfy a random constraint is
the same as the probability that a random constraint assigns
the third variable $\g$ if the first two are assigned $\a$ and $\b$
respectively.  Every constraint of size 3 will permit a tuple of
the form $(\a, \b, z_{\a\b})$,
and there are $4$ possible choices for $z_{\a \b}$.  Exactly
$\frac{1}{4}$ of the constraints will have $z_{\a \b} = \g$.  As a result,
\[\pr(z_{\a\b} = \g) = \frac{1}{4}.\]
Thus, both assignments will satisfy a proportion of
$\frac{1}{4}$ of the possible constraints.

Note that the uniquely extendible property means that if exactly one
of the clause's variables changes value between $\s$ and $\t$, for example
if $\a=\a'$, $\b = \b'$, and $\g \neq \g'$, then
the constraint cannot be satisfied by both $\s$ and $\t$.

Suppose one variable, assume w.l.o.g.\ the first variable, is assigned the
same value by $\s$ and $\t$ and each of the other two variables is assigned
a different value in $\t$ from what it is assigned in $\s$, i.e.\
$\a=\a'$, $\b \neq \b'$, and $\g \neq \g'$.
In this case, we need to determine
\[\pr(z_{\a\b} = \g \ \wedge \ z_{\a\b'} = \g')
	= \pr(z_{\a\b'} = \g' | z_{\a\b} = \g)\pr(z_{\a\b} = \g).\]
Every constraint of size 3 that permits the tuple $(\a, \b, \g)$ will
also permit a tuple of the form $(\a, \b', z_{\a \b'})$
with $z_{\a\b'} \neq \g$.
There are $3$ choices for $z_{\a\b'}$, and by symmetry, each is equally likely. So exactly
$\frac{1}{3}$ of the constraints will have $z_{\a\b'} = \g'$.
Thus,
\[\pr(z_{\a\b'} = \g' | z_{\a\b} = \g)\pr(z_{\a\b} = \g)
 =  \frac{1}{3} \cdot \frac{1}{4}. \]
As a result, both assignments will satisfy a proportion of
$\frac{1}{12}$ of the possible constraints.

Finally, suppose none of the variables of the clause receives
the same value in $\t$ as it
does in $\s$.
If $\a \neq \a'$, $\b \neq \b'$ and $\g \neq \g'$, then
\begin{eqnarray*}
\pr(z_{\a\b} = \g \ \wedge \ z_{\a'\b'} = \g') & = &
  \pr(z_{\a\b} = \g \ \wedge \ z_{\a\b'} \neq \g' \ \wedge \ z_{\a'\b'} = \g')\\
 & = & \pr(z_{\a'\b'} = \g' \ | \ z_{\a\b} = \g \ \wedge z_{\a\b'} \ \neq \g')\\
 &   & \mbox{} \times \ \pr(z_{\a\b'} \neq \g' | z_{\a\b} = \g)
			\pr(z_{\a\b} = \g)
\end{eqnarray*}
Consider only the tuples $(\a,\b,z_{\a\b})$, $(\a,\b',z_{\a\b'})$,
and $(\a',\b',z_{\a'\b'})$ permitted by the constraint.
There are $4$ choices for $z_{\a\b}$,
and exactly $\frac{1}{4}$ of the constraints will have $z_{\a\b}=\g$.
For each constraint that contains the tuple $(\a,\b,\g)$, there are
exactly $3$ choices for $z_{\a\b'}$, and $2$ of these choices are not
$\g'$.  By a straightforward symmetry argument, each is equally likely.
  As a result, $\frac{2}{3}$ of the constraints that contain
the tuple $(\a,\b,\g)$ will not contain the tuple $(\a,\b',\g')$.
Finally, for every choice of $z_{\a\b}$ and
$z_{\a\b'}$ there are $3$ equally likely
choices for
$z_{\a'\b'}$.  The reason is that
we fix the set of 3-tuples that start with $\a$, and then the number of ways
we can choose the set of tuples that starts with $\a'$ is equal to the number
of derangements on $4$ elements, and by symmetry
each derangement is equally likely.

As a result, the expression above yields:
\[ \pr(z_{\a\b} = \g \ \wedge \ z_{\a'\b'} = \g')
  =  \frac{1}{3} \cdot \frac{2}{3} \cdot \frac{1}{4} . \]
So both assignments will simultaneously
satisfy a proportion of $\frac{1}{18}$ of the
possible constraints.

Using the same notation as \cite{dub-man02},
let $I_k = \{ 0, \frac{1}{k}, \frac{2}{k}, \ldots, \frac{k-1}{k}, 1 \} $,
and let $\alpha \in I_n$ be the
proportion of variables having the same value in both assignments.
To enumerate all pairs of assignments,
we must count
the number of choices for the $\alpha n$ variables and count
the possible assignments to the variables.
This gives $\sum_{\alpha \in I_n}\binom{n}{\a n} 4^n 3^{(1-\alpha)n}$
pairs of assignments.

To enumerate all satisfied instances for one pair of assignments, let
$r \in I_{3m}$ be the proportion of $3m$ ``places'' in the second
assignment that receive one of the $\alpha n$ variables.
Note that if $\a = 0$ then $r = 0$, and if $\a = 1$ then $r = 1$.
Otherwise,
because each variable occurs in at least two places, we must have $r \geq \frac{2\a}{3c}$
and $1-r \geq \frac{2(1-\a)}{3c}$.

Let $T_k$ be the number of clauses with $3-k$ of these $\alpha n$ variables.
Recall that $T_1 = 0$.
For each choice of $T_0, T_2, T_3$, we need to \\
(a) count the ways to choose the clauses for $T_0, T_2, T_3$:
\[\binom{m}{T_0} \binom{m-T_0}{T_2};\]
(b) for each clause, count the number of ways we can choose a constraint for
the clause given the number of variables in the clause that receive the
same value in $\t$ as in $\s$:
\[\left( \frac{q}{18} \right) ^{T_3}
    \left( \frac{q}{12} \right) ^{T_2}
    \left( \frac{q}{4} \right) ^{T_0};\]
(c) for each of the $T_2$ clauses that have exactly one of the $\a n$ variables,
count the 3 positions for the  $\alpha n$ variables: \[3^{T_2};\]
(d) finally, distribute the variables amongst the ``places''. Recall that
$S(i,j,2)$ counts the number of ways to partition $i$ elements into
$j$ sets such that each set has at least 2 elements.  Also, we let $\Lambda_{3m,n}(\a,r,T_0,T_2,T_3)$
denote the probability that this distribution of ``places'' is such that each
clause contains three distinct variables. So the total number of choices for this step is:
  \[S(r3m, \alpha n, 2)(\alpha n)! S((1-r)3m, (1-\alpha)n, 2)((1-\alpha)n)!\Lambda_{3m,n}(\a,r,T_0,T_2,T_3). \]

As a result, for a given $\a$, $r$, $T_0$, $T_2$, and $T_3$, the number of instances of $\Psi_{n,m}$ satisfied by
a pair of assignments that fit the given parameters is
\begin{multline*}
\binom{n}{\alpha n} 4^n
 3^{(1-\alpha)n}\binom{m}{T_0}\binom{m-T_0}{T_2} \left( \frac{q}{18} \right) ^{T_3}
\left( \frac{q}{12} \right) ^{T_2}
\left( \frac{q}{4} \right) ^{T_0} 3^{T_2} \\
\qquad \times
 S(r3m, \alpha n, 2)(\alpha n)!  S((1-r)3m, (1-\alpha)n, 2)((1-\alpha)n)!
            \Lambda_{3m,n}(\a,r,T_0,T_2,T_3).
\end{multline*}

Let $t \in I_m = \{0, \frac{1}{m}, \frac{2}{m}, \ldots, \frac{m-1}{m}, 1\}$
be the proportion of $m$ clauses in which all 3
variables have the same assignment.  Thus,

\noindent
$\begin{array}{lclcl}
T_0 & = & tm & & \\
T_2 & = & 3rm-3T_0 & = & 3rm-3tm \\
T_3 & = & m-T_0-T_2 & = & m-3rm+2tm.
\end{array}$

\noindent

Note that $T_2 \geq 0$ implies $r \geq t$ and $T_3 \geq 0$ implies
$t \geq \frac{3r-1}{2}$. Note also that if $\a =0$ then $r = 0$, $T_0 = T_2 = 0$ and $T_3 = m$.
Likewise, if $\a =1$ then $r = 1$, $T_0 = m$ and $T_2 = T_3 = 0$.

\begin{definition} \begin{enumerate}
\item[(a)] We say that $(\a,r,t)\in I_n\times I_{3m}\times I_m$ is {\em feasible} if it satisfies $0\leq\a\leq 1$,
$\frac{2\a}{3c}\leq r\leq1-\frac{2(1-\a)}{3c}$, $\max(0,\frac{3r-1}{2})\leq t\leq r$ where if $\a=0$ then $r=t=0$,
and if $\a=1$ then $r=t=1$.
\item[(b)] We say that $(\a,r,t)$ is {\em extremal} if it is feasible and either
$\a=0,\a=1,r=\frac{2\a}{3c}, r=1-\frac{2(1-\a)}{3c},t=\max(0,\frac{3r-1}{2})$ or $t=r$.
\end{enumerate}
\end{definition}

 Setting $\Lambda_{3m,n}^{\a,r,t}= \Lambda_{3m,n}(\a,r,T_0,T_2,T_3)$,
substituting and factoring out common terms yields that $\ex(N^2)$ is the sum over all feasible $(\a,r,t)$ of:
\begin{eqnarray*}
F(\a,r,t)&=&
\frac{1}{S(3m,n,2)n!\Lambda_{3m,n}}\times
        \binom{n}{\a n} 4^n 3^{(1-\a)n} \\
& & \times \left(\frac{m!}{(m-3rm+2tm)!(3rm-3tm)!(tm)!} \right)
 	3^{3rm-3tm} \\
& & \times  \left( \frac{1}{4} \right)^{m}
	\left( \frac{1}{3} \right)^{2m+tm-3rm}
	2^{m-3rm+2tm} \\
 & & \times S(r3m,\a n, 2)(\a n)! S((1-r)3m, (1-\a)n, 2)((1-\a)n)!\Lambda_{3m,n}^{\a,r,t}.
\end{eqnarray*}

We will treat the cases where $\a,r,t$ are close to the endpoints of their ranges separately.  So we will take a constant $\z>0$
(to be specified in Section \ref{sborder}) and we define:
\begin{alignat*}{1}
& I_n^+ = I_n \cap [\z,1-\z], \qquad I_{3m}^+ =  I_{3m} \cap \left[\frac{2\a+\z}{3c} , 1 - \frac{2(1-\a)+\z}{3c}\right], \\
& I_m^+ =  I_m \cap \left[\max\left(0,\frac{3r-1}{2}\right)+\z, r-\z\right].
\end{alignat*}
In Section~\ref{sborder} we will prove:

\begin{lemma}\label{lborder}
Let $c > \frac{2}{3}(1 + 2\z)$. 
The sum of $F(\a,r,t)$ over all feasible $\a,r,t$ not satisfying $\a\in I_n^+, r\in I_{3m}^+,t\in I_m^+$, is $o(4^{2n-2m})$.
\end{lemma}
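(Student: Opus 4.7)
The plan is to apply Stirling's approximation to each factorial and the standard asymptotic expansion of the generalized Stirling number $S(i,j,2)$ to rewrite $F(\a,r,t) = n^{\bigo(1)}\,\exp(n\,\psi(\a,r,t))$ for a continuous function $\psi$ defined on the closure of the feasible region. Since $\ex(N)^2 = 4^{2(1-c)n}$, the lemma reduces to showing $\psi(\a,r,t) \leq 2(1-c)\log 4 - \d$ for some $\d = \d(c,\z) > 0$ uniformly over the complement of $I_n^+\times I_{3m}^+\times I_m^+$. Because there are only $\bigo(n^3)$ lattice points in this complement, such a per-point exponential gap immediately yields the desired $\smallo(4^{2n-2m})$ bound.

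First I would handle the extremal corners $\a \in \{0,1\}$ by direct calculation. When $\a = 1$, feasibility forces $r = t = 1$, $T_0 = m$ and $T_2 = T_3 = 0$; the pair $(\s,\t)$ collapses to a single assignment, and the total contribution is simply $\ex(N) = 4^{(1-c)n} = \smallo(4^{2(1-c)n})$ since $c < 1$. When $\a = 0$, feasibility forces $r = t = 0$; every clause has all three variables disagreeing between $\s$ and $\t$, each constraint is jointly satisfied with probability $\tfrac{1}{18}$, and the total contribution is of order $4^n\cdot 3^n\cdot 18^{-m} = 12^n \cdot 18^{-cn}$, which is strictly smaller than $16^n\cdot 16^{-cn} = 4^{2n-2m}$ for every $c > 0$.

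Next I would bound $\psi$ on each of the four remaining boundary faces $r = \tfrac{2\a}{3c}$, $r = 1 - \tfrac{2(1-\a)}{3c}$, $t = \max(0,\tfrac{3r-1}{2})$, and $t = r$ (together with thin $\z$-collars around each), in the regime $\a$ bounded away from $\{0,1\}$. For $r$ at its lower bound essentially all of the $\a n$ same-value variables must have degree exactly $2$, which forces $S(3rm, \a n, 2)$ to its degenerate value $(3rm)!/(2^{\a n}(\a n)!)$ and strictly reduces $\psi$ below the interior value; the upper-bound case is symmetric. The faces $t = r$ and $t = \max(0,\tfrac{3r-1}{2})$ respectively force $T_2 = 0$ or $T_3 = 0$, collapsing the associated trinomial coefficient and again strictly reducing $\psi$. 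In each case a first-derivative calculation shows that $\nabla\psi$ restricted to the face points strictly into the interior, yielding the required positive gap. The remaining regime $\a \in (0,\z)\cup(1-\z,1)$ is then handled by continuity from the two extremal corner calculations already performed.

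The main technical obstacle will be the careful handling of the asymptotics of $S(i,j,2)$ on the face $r = \tfrac{2\a}{3c}$, where the minimum-degree constraint becomes binding: the usual parameterization of $S(i,j,2)$ involves a transcendental parameter $\l$ solving $\l e^\l/(e^\l - 1 - \l) = i/j$, and $\l\to 0$ precisely as $r$ hits its lower bound, so the exponential rate must be expanded to leading order in $\l$ to extract the correct limiting behavior. The hypothesis $c > \tfrac{2}{3}(1+2\z)$ ensures that regardless of $\a$ the set $I_{3m}^+$ has positive width, so the boundary collars are well-separated from each other and the lemma is meaningfully stated.
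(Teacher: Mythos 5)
Your reduction of the lemma to a uniform exponential gap, and your direct handling of the two absolute corners $\a\in\{0,1\}$, are both correct and match the paper's Cases~1--2. But the rest of the argument has two genuine gaps, both in the step that actually does the work.

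First, the claim that a ``first-derivative calculation shows that $\nabla\psi$ restricted to the face points strictly into the interior, yielding the required positive gap'' is not a proof, and it would not yield the gap even if verified. An inward-pointing gradient at a face point is a local statement: it says $\psi$ increases as you step off the face, but it says nothing about how large $\psi$ already is at that face point relative to the global maximum $2(1-c)\ln 4$. To get a gap you must bound the value of $\psi$ on the entire boundary, not its normal derivative, and this requires locating and evaluating all critical points of the restricted function on each of the seven faces and thirteen edges. The paper does exactly this in the appendix, and it is precisely the content of Lemma~\ref{lem:fmax} (second sentence), which required a computer-aided interval-analysis proof over $28$ separate boundary cases. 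Your proof of Lemma~\ref{lborder} cannot sidestep Lemma~\ref{lem:fmax}; the paper's proof of Lemma~\ref{lborder} explicitly invokes it. Relatedly, the ``continuity from the corner calculations'' you propose for the collar $\a\in(0,\z)\cup(1-\z,1)$ does not cover it: when $\a$ is near $0$ the feasible $r$ ranges over an interval of width roughly $1-\tfrac{4}{3c}$, so most of that collar is nowhere near the corner point $(\a,r,t)=(0,0,0)$.

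Second, you correctly identify the $S(i,j,2)$ asymptotics near $r=\tfrac{2\a}{3c}$ (where $z_0\to0$) as the ``main technical obstacle,'' but you stop at naming it. Expanding the asymptotic formula $(\ref{esij})$ to leading order in $z_0$ is not enough, because Lemma~\ref{lemma:Stirling} is stated to hold uniformly only when $j$ is bounded away from $i/2$; at the boundary this uniformity fails, and the polynomial prefactor $\Phi(i,j)$ is not $n^{O(1)}$ uniformly. The paper's Cases~3--8 resolve this by abandoning the asymptotic expansion near the boundary and substituting the absolute inequalities $(\ref{eub2})$--$(\ref{eub3})$ together with $(\ref{elb1})$--$(\ref{bub1})$, and then showing that each replacement costs at most a factor $\rme^{\g n}$ with $\g\to0$ as $\z\to0$, so that the loss can be made smaller than half the gap $x$ supplied by Lemma~\ref{lem:fmax}. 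Without this substitution-plus-bookkeeping step your identity $F=n^{O(1)}\exp(n\psi)$ is simply false over part of the region you are summing.
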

In Section~\ref{sborder} we choose a $\z$ small enough that $\frac{2}{3}(1+2\z) < 0.818469\ldots$, the threshold for a non-empty
2-core. Thus, we can assume $c>\frac{2}{3}(1 + 2\z)$ and so Lemma \ref{lborder} yields

\[\ex(N^2) =  o(\ex(N)^2) + \sum_{\alpha \in I_n^+} \sum_{r \in I_{3m}^+} \sum_{t \in I_m^+} F(\a,r,t).\]

\noindent{\bf Notation:}  We use $f\sim g$ to mean that $f=g+o(g)$.

In the next few steps, we will use similar techniques as \cite{dub-man02} to approximate $\ex(N^2)$.
First, we will apply approximations for the $S$ terms of $F$.  Given Lemma~\ref{lborder}, we can restrict the analysis to
$c > \frac{2}{3}(1+2\z)$ and
points satisfying $\a\in I_n^+, r\in I_{3m}^+,t\in I_m^+$. In this domain, each of the $S$ terms is of the form
$S(i, j, 2)$ where {$j>\z n>\frac{\z}{3c}i$ and $j<\frac{i}{2}-\frac{\z}{2}n<\frac{i}{2}(1-\frac{\z}{2})$.}  To approximate the $S$ terms,
we use Lemma~\ref{lemma:Stirling} which will be presented in
Section~\ref{sec:Stirling}.  This lemma appears in \cite{dub-man02} but with
a typographical error.
The lemma is also a specific case of the general
results in \cite{hen94}.
It gives the following approximation that holds uniformly as $i \rightarrow \infty$ for all 
$\d_1 i < j \leq \frac{i}{2}(1-\d_2)$ for any positive constants $\d_1$ and $\d_2$: 
\begin{equation}\label{esij}
S(i,j,2) \sim \frac{1}{j!} \left( \frac{i}{z_0\rme} \right) ^{i}
(\rme^{z_0}-1-z_0)^j \Phi(i,j)
\end{equation}
where $z_0$ is the positive real solution of the equation
\begin{equation} \frac{j}{i}z_0 = \frac{\rme^{z_0}-1-z_0}{\rme^{z_0}-1}\label{eq:z0} \end{equation}
and where
\[\Phi(i,j) = \sqrt{\frac{ij}{z_0j(i-j)-i(i-2j)}} .\]

If we rewrite (\ref{eq:z0}) as $\frac{i}{j} = \frac{z_0(\rme^{z_0}-1)}{\rme^{z_0}-1-z_0}$, note that the right hand side is a monotonically
increasing function of $z_0$. Also, $\lim_{z_0 \to 0} \frac{z_0(\rme^{z_0}-1)}{\rme^{z_0}-1-z_0} = 2$ and
$\lim_{z_0 \to \infty} \frac{z_0(\rme^{z_0}-1)}{\rme^{z_0}-1-z_0} = \infty$.  Given these facts, the following
observation is straightforward.

\begin{observation}
\label{obs:z0}
The function $z_0 = z_0(i,j)$, defined by (\ref{eq:z0}), is a continuous and differentiable function of $i$ and $j$ in the domain $i > 2j > 0$.  
In addition, the limit of $z_0(i, j)$ as $(i, j)$ approaches $(2j', j')$, for any $j' > 0$, exists and is 0.
\end{observation}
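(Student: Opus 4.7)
The strategy is to regard $z_0$ as a function of the single variable $i/j$ and apply the inverse function theorem. Rearranging (\ref{eq:z0}) yields $i/j = F(z_0)$ with
\[ F(z) \;=\; \frac{z(\rme^z-1)}{\rme^z-1-z}. \]
The excerpt records that $F$ is strictly increasing on $(0,\infty)$, with $F(z)\to 2$ as $z\to 0^+$ and $F(z)\to\infty$ as $z\to\infty$, so $F:(0,\infty)\to(2,\infty)$ is a continuous bijection. For every $(i,j)$ in the domain $i>2j>0$ we have $i/j\in(2,\infty)$, and $z_0(i,j)$ is uniquely determined as $F^{-1}(i/j)$.

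Both numerator and denominator of $F$ are entire, and $\rme^z-1-z=z^2/2+O(z^3)$ vanishes only at $z=0$, so $F\in C^\infty((0,\infty))$. A routine differentiation gives
\[ F'(z) \;=\; \frac{(\rme^z-1)^2 - z^2\rme^z}{(\rme^z-1-z)^2}, \]
whose numerator factors as $\rme^z\bigl[(2\sinh(z/2))^2 - z^2\bigr]$. The elementary inequality $2\sinh(z/2)>z$ for $z>0$ (immediate from the Taylor series $2\sinh(z/2)=z+z^3/24+\cdots$) thus forces $F'(z)>0$ on $(0,\infty)$, and the inverse function theorem produces a $C^\infty$ inverse $F^{-1}:(2,\infty)\to(0,\infty)$. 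Since $(i,j)\mapsto i/j$ is $C^\infty$ on $\{i>2j>0\}$, the composition $z_0(i,j)=F^{-1}(i/j)$ is $C^\infty$ there, establishing the continuity and differentiability claim.

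For the boundary limit, Taylor expansion near $z=0$ gives
\[ F(z) \;=\; \frac{z\bigl(z+\tfrac{z^2}{2}+O(z^3)\bigr)}{\tfrac{z^2}{2}+O(z^3)} \;=\; \frac{1+\tfrac{z}{2}+O(z^2)}{\tfrac{1}{2}+O(z)} \;\longrightarrow\; 2 \quad (z\to 0^+), \]
so $F$ extends continuously to a strictly increasing bijection $[0,\infty)\to[2,\infty)$ with $F(0)=2$. Its inverse is then continuous on $[2,\infty)$ with $F^{-1}(2)=0$. As $(i,j)\to(2j',j')$ with $j'>0$ we have $i/j\to 2$, hence $z_0(i,j)=F^{-1}(i/j)\to 0$, completing the observation. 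The only step requiring care is verifying $F'>0$ pointwise (as opposed to merely strict monotonicity of $F$), so that differentiability of the inverse is guaranteed; once the inequality $2\sinh(z/2)>z$ is in hand, the rest is a standard application of the inverse function theorem.
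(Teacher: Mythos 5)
Your proposal is correct and follows the same route as the paper: rewrite (\ref{eq:z0}) as $i/j = F(z_0)$ with $F(z) = \frac{z(\rme^z-1)}{\rme^z-1-z}$, observe that $F$ is a strictly increasing bijection from $(0,\infty)$ onto $(2,\infty)$ with $F(z)\to 2$ as $z\to 0^+$, and invert. The paper merely asserts these facts and calls the observation ``straightforward,'' so your explicit verification of $F'(z)>0$ via the hyperbolic-sine factorization is a welcome filling-in of the detail needed to invoke the inverse function theorem.
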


For $\a\in I_n^+, r\in I_{3m}^+,t\in I_m^+$, we have $j,i-2j>\z n$ in each of the following three approximations,
and so they hold uniformly as $n\rightarrow \infty$ for all such $\a,r,t$.
\[ \begin{array}{lcl}
\multicolumn{3}{l}{S(3m,n,2)n! \sim (\rme^x-1-x)^nx^{-3m}\rme^{-3m}(3m)^{3m}\Phi(3m,n)} \\
\multicolumn{3}{l}{S(r3m,\alpha n,2)(\alpha n)! \sim (\rme^z-1-z)^{\alpha n} z^{-r3m}\rme^{-r3m}(r3m)^{r3m}\Phi(r3m,\alpha n)} \\
S((1-r)3m,(1-\alpha)n,2)((1-\alpha)n)! & \sim & (\rme^y-1-y)^{(1-\alpha)n} y^{-(1-r)3m} \\
 & & \rme^{-(1-r)3m}((1-r)3m)^{(1-r)3m} \\
 & & \Phi((1-r)3m,(1-\alpha)n)
\end{array} \]
for some $x,y,z > 0$ such that
\begin{equation}
\frac{\rme^x-1-x}{\rme^x-1}-\frac{x}{3c} =
\frac{\rme^y-1-y}{\rme^y-1}-\frac{y(1-\alpha)}{3c(1-r)} =
\frac{\rme^z-1-z}{\rme^z-1}-\frac{z\alpha}{3cr} = 0.
\label{eq:implicits}
\end{equation}

To complete our approximation of $F(\a,r,t)$, we apply Stirling's formula, $i!\sim i^i\rme^{-i}\sqrt{2\pi i}$, to $m!,(m-3rm+2tm)!,(3rm-3tm)!,(tm)!$.  For $t\in I_{m}^+$, we have $m-3rm+2tm, 3rm-3tm,tm\geq \z m$ and so this
is an asymptotically tight approximation as $n\rightarrow\infty$. We then group the non-exponential and exponential terms and simplify.  Lemma \ref{lborder} along with the uniformity of our approximations for the
$S(i,j,2)$ terms and of Stirling's approximation, allow us to write $\ex(N^2)$ in the form:
\begin{equation}\label{eap1}
\ex(N^2) \sim o(E(N)^2) +
 \sum_{\alpha \in I_n^+}
\sum_{r \in I_{3m}^+}
\sum_{t \in I_m^+}
g(\alpha , r,t) \rme^{nf(\alpha ,r,t)}
\end{equation}
where
\begin{eqnarray*}
g(\alpha,r,t) & = & (\Phi(3m,n)\Lambda_{3m,n})^{-1} \Phi(r3m,\alpha n) \Phi((1-r)3m,(1-\alpha) n)
	\Lambda_{3m,n}^{\a,r,t} \\
 & & \mbox{} \times \frac{1}{\sqrt{(2 \pi)^3 nm^2 \a (1-\a)(1-3r+2t)(3r-3t)t }} \\
f(\alpha,r,t) & = & \frac{1}{n} \left[ n \ln 4 + (1-\alpha)n \ln 3 -m \ln 4
 -m(2+t-3r)\ln 3 \right. \\
 & & \mbox{} + m(1-3r+2t)\ln 2
     + m(3r-3t)\ln 3 -n\alpha \ln \alpha \\
 & & \mbox{} -n(1-\alpha) \ln (1-\alpha) - m(1-3r+2t) \ln (1-3r+2t) \\
 & & \mbox{} - m(3r-3t) \ln (3r -3t) -mt \ln t + \alpha n \ln (\rme^z-1-z) \\
 & & \mbox{} -r3m \ln z -r3m + r3m \ln (r3m) + (1 - \alpha) n \ln(\rme^y-1-y) \\
 & & \mbox{} -(1-r)3m \ln y - (1-r)3m + (1-r)3m \ln ((1-r)3m) \\
 & & \mbox{} \left.  -n \ln(\rme^x-1-x) +3m \ln x + 3m - 3m \ln (3m) \right].
\end{eqnarray*}

By considering each term of $f$ separately, and from Observation~\ref{obs:z0}, we have:
\begin{observation}\label{obs:f}
For all $c > \frac{2}{3}$,
$f(\a, r, t)$ is continuous and differentiable in the domain $0 < \a < 1$, $\frac{2\a}{3c} < r < 1 - \frac{2(1-\a)}{3c}$, and
$\max\left\{0, \frac{3r-1}{2}\right\} < t < r$.  Furthermore,
for every point $(\a_1, r_1, t_1)$ on the boundary of this domain
$\lim_{(\a,r,t) \to (\a_1, r_1, t_1)} f(\a, r, t)$ exists.
\end{observation}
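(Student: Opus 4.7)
The plan is to verify the statement in two parts: (a) smoothness of $f$ on the open interior, and (b) existence of boundary limits (understood in the extended real sense).

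For (a), $f(\alpha, r, t)$ is a finite sum of three kinds of terms: polynomial terms in $\alpha, r, t$, which are smooth everywhere; explicit logarithms $\ln \alpha$, $\ln(1-\alpha)$, $\ln(1-3r+2t)$, $\ln(3r-3t)$, $\ln t$, which are smooth wherever their arguments are positive (as guaranteed by the open-domain constraints); and terms involving the implicitly defined quantities $x, y, z$ from (\ref{eq:implicits}). Now $x$ depends only on $c$ and so is constant. The quantities $y$ and $z$ correspond to $z_0(i,j)$ of (\ref{eq:z0}) with $j/i = (1-\alpha)/(3c(1-r))$ and $j/i = \alpha/(3cr)$ respectively. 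The open-domain inequalities force both ratios into $(0, 1/2)$, hence $i > 2j > 0$, so by Observation~\ref{obs:z0} the functions $y(\alpha, r)$ and $z(\alpha, r)$ are continuous and differentiable on the open domain; the chain rule then gives smoothness of the contributions $\alpha n \ln(\rme^z - 1 - z)$, $-3rm \ln z$, and the analogous $y$-expressions. Summing, $f$ is continuous and differentiable on the open interior.

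For (b), I would treat each face of the boundary. Three faces are routine: on $t \to 0$, $t \to (3r-1)/2$, or $t \to r$, exactly one of the arguments $t$, $1-3r+2t$, or $3r-3t$ vanishes, and $s \ln s \to 0$ handles the corresponding term while the rest remain continuous. The faces $r = 2\alpha/(3c)$ and $r = 1 - 2(1-\alpha)/(3c)$ are the delicate ones: Observation~\ref{obs:z0} gives $z \to 0$ on the former and $y \to 0$ on the latter, and the individual terms $\alpha n \ln(\rme^z - 1 - z)$ and $-3rm \ln z$ both diverge. One checks the cancellation by expanding $\ln(\rme^z - 1 - z) = 2 \ln z - \ln 2 + O(z)$ near $z = 0$ and observing that the coefficients combine to give $(2\alpha n - 3rm)\ln z + O(1)$. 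Since on this face $2\alpha - 3rc = 0$ and the defining equation (\ref{eq:z0}) forces $z$ to vanish at the same linear rate as $2\alpha - 3rc$, the product $(2\alpha n - 3rm)\ln z$ is of the form $\epsilon \ln \epsilon$ and tends to zero, leaving the finite limit $-\alpha n \ln 2$. The $y$-face is symmetric. Finally, on the faces $\alpha = 0$ or $\alpha = 1$ with $r$ and $t$ interior, the implicit quantities $z$ or $y$ diverge (via the behaviour of (\ref{eq:z0}) as $j/i \to 0$), and the unbalanced $-3rm \ln z$ or $-3(1-r) m \ln y$ term drives $f$ to $-\infty$ while the other terms remain bounded; the limit therefore exists in the extended sense. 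Corner points of the domain reduce to combinations of the above cases.

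The main obstacle is the cancellation on the two $r$-boundaries. It requires a quantitative local expansion of the implicit equation (\ref{eq:z0}) near $z = 0$ to verify that $z$ and the coefficient $2\alpha n - 3rm$ vanish at matching linear rates in the distance to the boundary. Once this expansion is in hand, the standard $\epsilon \ln \epsilon \to 0$ estimate closes the case, and all other faces are handled by direct substitution together with the $s \ln s \to 0$ identity.
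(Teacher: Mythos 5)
Your interior-smoothness argument and your face-by-face limit analysis are careful and correct, and they match what the paper intends (which is compressed into a single sentence pointing to Observation~\ref{obs:z0}). The step that does not close is the final one: ``Corner points of the domain reduce to combinations of the above cases'' does not in fact work, and the observation as stated fails at two corners even in the extended real sense you invoke.

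Consider the corner $(\a_1,r_1,t_1)=(0,0,0)$, where the face $\a=0$ meets the face $r=2\a/(3c)$. The ratio $\a/(3cr)$, which fixes $z$ through~(\ref{eq:z0}), can range over all of $(0,1/2)$ as $(\a,r)\to(0,0)$, so $z$ can tend to $0$, to a positive constant, or to $+\infty$ depending on the direction of approach. The $z$-contribution $\a\ln(\rme^z-1-z)-3rc\ln z$ therefore has no unique limit: along $\a=r$, $t=r/2$ the ratio is constant, $z$ is fixed, and the contribution tends to $0$; along the equally feasible path $\a=\rme^{-1/r}$, $t=r/2$ one has $z\sim 3cr/\a$, hence $\ln z\sim 1/r$, so $-3rc\ln z\to -3c$ while $\a\ln(\rme^z-1-z)\sim\a z\to 0$, and the contribution tends to $-3c$. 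Every other term of $f$ has the same limit along both paths, so $\lim_{(\a,r,t)\to(0,0,0)}f$ does not exist; the $y$-terms create the analogous difficulty at the corner $(1,1,1)$. What actually holds --- and what Lemma~\ref{lborder} together with the appendix case analysis requires --- is that the $\limsup$ of $f$ at each boundary point is finite, is strictly below $2(1-c)\ln 4$, and that $f$ has a genuine limit along the particular extremal curves the appendix evaluates. So your face computations are exactly what the later argument needs, but the corner step should be replaced by the $\limsup$ statement rather than an appeal to the face-by-face limits, and it is worth flagging that ``the limit exists'' is not literally true at the two degenerate corners.
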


We fix $c$ and replace $m$ with $cn$.
Combining common terms, and dividing $f$
through by $n$ gives
\begin{eqnarray}
g(\alpha,r,t) & = & (\Phi(3cn,n)\Lambda_{3m,n})^{-1} \Phi(r3cn,\alpha n) \Phi((1-r)3c n,(1-\alpha) n)
		\Lambda_{3m,n}^{\a,r,t} \notag \\
 & & \mbox{} \times \frac{1}{(2 \pi n)^{3/2} c\sqrt{\a (1-\a)(1-3r+2t)(3r-3t)t }} \label{eq:g} \\
f(\alpha , r,t) & = & \ln 4 -c \ln 4 + (1-\a)\ln 3 - c(2+t-3r)\ln 3 \notag \\
 & & \mbox{} +c(1-3r+2t)\ln 2 -\alpha \ln \alpha -(1-\alpha)\ln (1-\alpha) \notag \\
 & & \mbox{} -c(1-3r+2t) \ln (1-3r+2t) -c(3r-3t) \ln (r-t) -ct \ln t \notag \\
 & & \mbox{} + r3c \ln r + (1-r)3c \ln (1-r) + \alpha \ln (\rme^z-1-z)-r3c \ln z \notag \\
 & & \mbox{} + (1-\alpha)\ln (\rme^y-1-y) - (1-r)3c \ln y - \ln (\rme^x-1-x) \notag \\
 & & \mbox{} +3c \ln x, \label{eq:f}
\end{eqnarray}
and thus,
\[ \frac{E(N^2)}{E(N)^2} \sim \sum_{\alpha \in I_n^+}
\sum_{r \in I_{3cn}^+}
\sum_{t \in I_{cn}^+}
g(\alpha , r,t) e^{n(f(\alpha ,r,t) - 2(1-c)\ln 4)} . \]

To analyze this sum, we will need a few technical lemmas about $f$ and $g$.
We begin with the key - determining a global maximum for $f$:
\begin{lemma} \label{lem:fmax}
For all $\frac{2}{3} \leq c < 1$, the unique global maximum of
$f(\alpha, r, t)$ in the domain $0 < \alpha < 1$, $\frac{2\a}{3c} < r < 1-\frac{2(1-\a)}{3c}$,
and $\max \{0, \frac{3r-1}{2}\} < t < r$ occurs
at $f(\frac{1}{4}, \frac{1}{4}, \frac{1}{16})=2(1-c)\ln 4$.  Furthermore, 
for all $.67 \leq c \leq 1$, there exists a constant $b = b(c)$ such that
 at every extremal $(\a,r,t)$,
$f$ approaches a limit that is less than $2(1-c)\ln 4 - b$.
\end{lemma}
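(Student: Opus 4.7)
The proof strategy has three parts: (A) verify that $(\tfrac14, \tfrac14, \tfrac{1}{16})$ is a critical point of $f$ with value $2(1-c)\ln 4$; (B) show it is the unique global maximum in the interior; and (C) bound $f$ uniformly away from $2(1-c)\ln 4$ on every extremal face.

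For (A), observe first that the implicit equations (\ref{eq:implicits}) defining $x, y, z$ are exactly the stationarity conditions $\partial f/\partial x = \partial f/\partial y = \partial f/\partial z = 0$ (a direct algebraic check on the $x, y, z$-terms of (\ref{eq:f})). Hence by the envelope theorem all chain-rule contributions through $x, y, z$ cancel when differentiating $f$ in $(\alpha, r, t)$, and the gradient of $f$ along the implicit manifold is simply the partial gradient in the explicit variables. At $\alpha = r = \tfrac14$ we have $\alpha/r = (1-\alpha)/(1-r) = 1$, so the three equations in (\ref{eq:implicits}) coincide and force $x = y = z$; substituting this and simplifying (\ref{eq:f}) yields $f = 2(1-c)\ln 4$ by direct algebra. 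A routine computation verifies that $\partial_\alpha f = \partial_r f = \partial_t f = 0$ at $(\tfrac14, \tfrac14, \tfrac{1}{16})$, and the Hessian at this point is strictly negative definite, so the point is a strict local maximum.

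Part (B) is the main obstacle: because $x, y, z$ are only implicitly defined by (\ref{eq:implicits}), neither global concavity of $f$ nor uniqueness of its critical point is accessible in closed form. The plan, matching the paper's preview, is a rigorous interval-arithmetic proof. I partition the compact $(\alpha, r, t)$-domain into small sub-boxes; for each sub-box I first produce rigorous interval enclosures for $x, y, z$ by combining the strict monotonicity in the implicit variable of each equation in (\ref{eq:implicits}) with bisection or a Newton step over the parameter box, and then evaluate $f$ as an interval expression. Any sub-box whose upper bound for $f$ falls strictly below $2(1-c)\ln 4$ is discarded. Surviving boxes must cluster near the known critical point, and there the strict negative-definite Hessian from (A) — applied via a centered Taylor form to control interval overestimation — closes the gap. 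The dependence on $c \in [\tfrac23, 1)$ is folded into the computation by treating $c$ as an interval variable over $[\tfrac23, 1]$, exploiting monotonicity in $c$ of the individual terms of $f$ wherever possible to keep the enclosures tight.

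For (C), each extremal face corresponds to fixing one coordinate at its boundary value, reducing $f$ to a lower-dimensional function whose continuous limit exists by Observation \ref{obs:f}. The cases $\alpha \in \{0, 1\}$ collapse $f$ to an explicit one-variable function of $c$ which I bound directly. The faces $r = 2\alpha/(3c)$ and $r = 1 - 2(1-\alpha)/(3c)$ correspond to $z \to 0$ or $y \to 0$ (minimum-degree saturation on one side of the variable partition); after taking the limit, $f$ reduces to a two-dimensional expression that I handle either by direct calculus or, if needed, by a lower-dimensional interval-arithmetic sweep on the reduced domain. The faces $t = \max(0, (3r-1)/2)$ and $t = r$ are cleanest since they do not involve $x, y, z$ degeneracies and the reduction is purely in the elementary terms of (\ref{eq:f}). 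Compactness of each boundary face together with continuity of the resulting gap in $c \in [0.67, 1]$ then yields the uniform constant $b(c) > 0$.
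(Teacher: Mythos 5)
Your parts (A) and (C) are broadly in line with what the paper does, but part (B) takes a genuinely different route and misses the structural insight that makes the computation feasible.

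For (B), you propose a full interval sweep over $(\alpha, r, t, c)$-boxes, discarding boxes whose $f$-enclosure lies below $2(1-c)\ln 4$, and handling the surviving boxes near the critical point with a centered Taylor form driven by the negative-definite Hessian. The paper instead first \emph{characterizes} the stationary points: starting from the gradient conditions (\ref{eq:dalpha})--(\ref{eq:dt}) and the implicit relations (\ref{eq:implicits}), Lemma~\ref{lem:interior} in the appendix eliminates $\alpha, r, t$ to obtain a single scalar equation (\ref{eq:interior}) in $(y,z)$ that every interior critical point must satisfy, and proves that for each fixed $y>0$ it has exactly two solutions: $z=y$, which is the known maximum, and one with $z>y$. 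This collapses the interior search to a \emph{one-parameter} interval sweep in $y$: for each small $y$-interval, rigorous enclosures for $z$ are derived from (\ref{eq:interior}), and then for $c, x, \alpha, r, t$ from (\ref{eq:implicits}) and the gradient conditions, after which the paper evaluates an explicit auxiliary function $F(c,\alpha,r,t,x,y,z)$. Your 4D sweep is sound in principle, but a naive interval evaluation of $f$ will not separate values on boxes containing $(\tfrac14,\tfrac14,\tfrac1{16})$ from the target $2(1-c)\ln4$, which is what forces your Taylor/Hessian step --- and that step requires rigorous interval bounds on \emph{third} derivatives of a function defined through the implicit $y(\alpha,r,c)$ and $z(\alpha,r,c)$, a substantial extra apparatus that the paper never needs because the restriction $z>y$ keeps its sweep bounded away from the critical point. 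The dimension reduction via (\ref{eq:interior}) is the key structural idea of the appendix; you should make it explicit rather than rely on the brute-force sweep converging despite interval dependency and overestimation.

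For (C), your outline is roughly right, but the claim that ``the cases $\alpha\in\{0,1\}$ collapse $f$ to an explicit one-variable function of $c$'' is incorrect. When $\alpha\to0$ the remaining constraints still allow $r\in\left(0,\,1-\tfrac{2}{3c}\right)$ and $t\in(0,r)$, so that face is two-dimensional; similarly for $\alpha\to1$. You have conflated the faces with the corner points such as $(\alpha,r,t)=(0,0,0)$. The paper's appendix works through 8 corner points, 13 edges, and 7 faces, in each case using the stationary-point system \emph{on that face} to cut the dimensionality before the interval sweep, mirroring the interior argument. Also note that the $t$-faces ($t=0$, $t=\tfrac{3r-1}{2}$, $t=r$) are not free of implicit-variable subtleties: $y$ and $z$ still depend on $(\alpha,r,c)$ there; those faces merely avoid the $y\to0$ and $z\to0$ degeneracies.
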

The proof of this lemma is computer-aided.  We provide an outline in Section~\ref{sd}, and then more
details in the appendix.

The analysis of $g$ is not as delicate.  The proofs of the following lemmas appear in Section \ref{sd}.
First, we show that $g$ is bounded:
\begin{lemma}\label{lgbound}
There exists a constant $\n$, independent of $n$,
such that for all $\alpha \in I_n^+$, $ r \in I_{3m}^+$,
and $t \in I_m^+$
we have $g(\a,r,t) \leq \n$. 
\end{lemma}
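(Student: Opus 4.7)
The plan is to bound each factor of $g(\a,r,t)$ separately on the restricted domain $\a \in I_n^+$, $r \in I_{3m}^+$, $t \in I_m^+$. The key observation is that on this domain every coordinate is bounded away from its extremal values by the positive constant $\z$, and so every quantity that enters $g$ should be uniformly bounded above and below by constants depending only on $c$ and $\z$. In fact, I expect to show something stronger than the lemma demands: that $g(\a,r,t) = \bigo(n^{-3/2})$ uniformly on this domain, from which the constant $\n$ is immediate (finitely many small $n$ are absorbed into $\n$).

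First I would treat the three $\Phi$ terms. Each has the form $\Phi(i,j) = \sqrt{ij/(z_0 j(i-j) - i(i-2j))}$ with $i,j = \Theta(n)$ and, by the $\z$-separation, both $j$ and $i-2j$ of order $n$ as well. By Observation~\ref{obs:z0}, the implicit function $z_0 = z_0(i,j)$ is continuous on the rescaled (hence compact) region cut out by our inequalities, and it remains bounded away from $0$ and $\infty$. Using the defining equation for $z_0$ one can rewrite $z_0 j(i-j) - i(i-2j)$ in a manifestly positive form, so that it is $\Theta(n^2)$ uniformly. Therefore each of $\Phi(3cn,n)$, $\Phi(3crn,\a n)$, and $\Phi(3c(1-r)n,(1-\a)n)$ is $\Theta(1)$ uniformly, and in particular the combination $\Phi(3cn,n)^{-1}\Phi(3crn,\a n)\Phi(3c(1-r)n,(1-\a)n)$ is bounded above by a constant.

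Next I would handle the ratio $\Lambda_{3m,n}^{\a,r,t}/\Lambda_{3m,n}$. The numerator is at most $1$ since it is a probability. For the denominator, a standard configuration-model style computation (as in random $k$-uniform hypergraph arguments) shows that $\Lambda_{3m,n}$ tends to a positive constant as $n\to\infty$: the number of within-triple vertex coincidences in the random partition is asymptotically Poisson with bounded mean, so the probability of no coincidence is bounded away from $0$. Thus $\Lambda_{3m,n}^{-1}$ is uniformly bounded for all large $n$, and so the whole ratio is uniformly bounded. Finally, on the restricted domain each of $\a$, $1-\a$, $1-3r+2t$, $3r-3t$, $t$ is bounded below by a positive constant depending only on $\z$, so $(\a(1-\a)(1-3r+2t)(3r-3t)t)^{-1/2}$ is bounded above by a constant as well.

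Combining all of these estimates with the prefactor $(2\pi n)^{-3/2}/c$ yields $g(\a,r,t) = \bigo(n^{-3/2})$ uniformly on the restricted domain, and in particular $g(\a,r,t) \leq \n$ for some constant $\n$. The main obstacle is confirming the uniform positivity of $\Lambda_{3m,n}$; once that is in hand, everything else is direct bookkeeping from the boundedness estimates on $\Phi$ and on the explicit algebraic factors.
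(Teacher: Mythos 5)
Your proposal is correct and follows essentially the same route as the paper's own proof: both arguments bound each factor of $g$ separately, observing that on the $\z$-separated domain the algebraic factors and the $\Phi$ terms are uniformly $\Theta(1)$, that $\Lambda_{3m,n}^{\a,r,t}$ is a probability while $\Lambda_{3m,n}$ is bounded away from zero by the Poisson-approximation argument of Lemma~\ref{lambound}, and that the explicit $(2\pi n)^{-3/2}$ prefactor supplies the decay. The only cosmetic difference is that the paper verifies the $\Phi$ bound by isolating and explicitly bounding the $\chi(z_0)$ factor, whereas you invoke compactness of the restricted parameter region together with Observation~\ref{obs:z0}; both are valid, and your stronger conclusion $g = \bigo(n^{-3/2})$ is indeed what the computation yields.
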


Next we provide an approximation of $g(\a,r,t)$ when $\a,r,t$ are near the critical point from Lemma \ref{lem:fmax}.

\begin{lemma}\label{lem:distribution} If $\a=\frac{1}{4}+o(1),r=\frac{1}{4}+o(1),t=\frac{1}{16}+o(1)$ then
\[g(\a,r,t) = \Phi(3cn,n)
\times (\pi n)^{-\frac{3}{2}}\frac{2^6}{9c}(1+o(1)).\]
\end{lemma}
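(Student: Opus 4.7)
The plan is to evaluate each ingredient of $g(\alpha,r,t)$ at the critical point $(\tfrac14,\tfrac14,\tfrac{1}{16})$ and extend the result to the $o(1)$ neighborhood by continuity. The ingredients are the three $\Phi$ factors, the explicit radical algebraic factor, and the ratio $\Lambda^{\alpha,r,t}_{3m,n}/\Lambda_{3m,n}$. For the $\Phi$ factors, note first that the defining equation (\ref{eq:z0}) of $z_0$ depends only on the ratio $j/i$, so $z_0(ki,kj)=z_0(i,j)$ for every $k>0$; substituting into $\Phi(i,j)=\sqrt{ij/[z_0 j(i-j)-i(i-2j)]}$ yields $\Phi(ki,kj)=\Phi(i,j)$, so $\Phi$ depends only on the ratio $i/j$. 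At the critical point, $(r\cdot 3cn)/(\alpha n)=3c=((1-r)\cdot 3cn)/((1-\alpha)n)$, and hence
\[\Phi(r3cn,\alpha n)=\Phi((1-r)3cn,(1-\alpha)n)=\Phi(3cn,n)\]
exactly. The two numerator $\Phi$'s in the definition of $g$, combined with the denominator $\Phi(3cn,n)^{-1}$, therefore collapse to the single factor $\Phi(3cn,n)$, and Observation \ref{obs:z0} promotes this equality to an asymptotic one throughout the $o(1)$ neighborhood.

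For the algebraic factor, a direct substitution gives
\[\alpha(1-\alpha)(1-3r+2t)(3r-3t)t=\tfrac{3}{16}\cdot\tfrac{3}{8}\cdot\tfrac{9}{16}\cdot\tfrac{1}{16}=\tfrac{81}{32768},\]
whose reciprocal square root is $128\sqrt{2}/9$. Multiplying by $((2\pi n)^{3/2}c)^{-1}=(2\sqrt{2}\,\pi^{3/2}n^{3/2}c)^{-1}$ yields $(\pi n)^{-3/2}\cdot \tfrac{2^6}{9c}$, exactly the stated constant. Since the factors under the radical are continuous and bounded away from zero near $(\tfrac14,\tfrac14,\tfrac{1}{16})$, the same asymptotic holds on the $o(1)$ neighborhood.

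The remaining and main technical step is to show $\Lambda^{\alpha,r,t}_{3m,n}/\Lambda_{3m,n}=1+o(1)$ on this neighborhood. Both quantities are probabilities, taken over a uniform random partition of $3m$ places into $n$ parts of size at least $2$, that no two places of a common clause lie in the same part; the numerator further conditions on the assignment of the clauses to the $(T_0,T_2,T_3)$-pattern relative to the distinguished $\alpha n$ variables. The plan is to mimic the corresponding computation of Dubois--Mandler: show that in both the conditional and unconditional models the number of ``bad'' (same-clause, same-part) pairs is asymptotically Poisson with the same constant mean, so both $\Lambda$ values share a common $e^{-\Theta(1)}$ limit. The delicate point, and the main obstacle, is verifying that at the critical point the degree statistics induced on the $\alpha n$ and $(1-\alpha)n$ variable classes agree with those of the unconditioned model, so that the Poisson means coincide to leading order; the $o(1)$ perturbation is then absorbed by continuity of the implicit equations (\ref{eq:implicits}) governing $x,y,z$. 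Multiplying the three limits produces the claimed asymptotic for $g(\alpha,r,t)$.
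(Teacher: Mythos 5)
Your proposal takes essentially the same route as the paper: the paper proves Lemma~\ref{lem:distribution} by first establishing (as Lemma~\ref{lambound}) that $\Lambda_{3m,n}=\Lambda^{\a,r,t}_{3m,n}+o(1)$ via the same Poisson/Poissonization argument you sketch, and then performing exactly the $\Phi$-collapse and radical computation you carry out explicitly. Your observations that $\Phi$ is scale-invariant (so all three $\Phi$ factors coincide at $\a=r=\tfrac14$) and that $\sqrt{\a(1-\a)(1-3r+2t)(3r-3t)t}=9/(128\sqrt{2})$ are the ``straightforward simplification'' the paper invokes, and your plan for the $\Lambda$ ratio matches the mechanism of Lemma~\ref{lambound}.
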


By these lemmas, it is straightforward (see Section \ref{sd}) to show that the mass of the sum will
occur when $\a,r,t$ are near $\inv{4},\inv{4},\inv{16}$ and so we can replace $g(\a,r,t)$ by its approximation
from Lemma \ref{lem:distribution}, obtaining:
\begin{equation}\label{esum}
\frac{E(N^2)}{E(N)^2} \sim \sum_{\alpha \in I_n^+} \sum_{r \in I_{3cn}^+} \sum_{t \in I_{cn}^+}
\Phi(3cn,n) \times (\pi n)^{-\frac{3}{2}}\frac{2^6}{9c}
 \rme^{n(f(\alpha ,r,t) - 2(1-c)\ln 4)} .
\end{equation}
Next, we replace the summation with an integral.  The summation is
essentially a Riemann sum, and as $n$ tends to infinity, the error term
from approximating the summation with an integral tends to 0.
\begin{eqnarray*}
\frac{\ex(N^2)}{\ex(N)^2} & \sim &
\Phi(3cn,n) \times (\pi n)^{-\frac{3}{2}}\frac{2^6}{9c}
\sum_{\alpha \in I_n^+}
\sum_{r \in I_{3cn}^+}
\sum_{t \in I_{cn}^+}
 \rme^{n(f(\alpha ,r,t) - 2(1-c)\ln 4)} \\
 & \sim &
\Phi(3cn,n) \times (\pi n)^{-\frac{3}{2}}\frac{2^6}{9c} \times
n \times 3cn \times cn \\
 & & \times \int_{\z}^{1-\z} \!\! \int_{\frac{2\a+\z}{3c}}^{1-\frac{2(1-\a)+\z}{3c}} \!\! \int_{\max \{ 0, \frac{3r-1}{2} \}+\z }^{r-\z}
\rme^{n(f(\alpha ,r,t) - 2(1-c)\ln 4)} dt\, dr\, d\alpha \\
 & < & \Phi(3cn,n) \times n^{3/2}\frac{2^6c}{3\pi^{3/2}} \\
 & &\times
\int_0^1 \!\! \int_{\frac{2\a}{3c}}^{1-\frac{2(1-\a)}{3c}} \!\! \int_{\max \{ 0, \frac{3r-1}{2} \} }^{r}
\rme^{n(f(\alpha ,r,t) - 2(1-c)\ln 4)} dt\, dr\, d\alpha .
\end{eqnarray*}

Continuing with the technique of \cite{dub-man02}, we approximate this sum by using the Laplace
Method, as given in eg.\
 \cite{ben-ors78} and \cite{bru81}.
The Laplace Method for a triple integral can be stated as follows.

\begin{lemma} [\cite{bru81}]\label{lb}
Let
\[ F(n) = \int_{a_1}^{b_1} \!\! \int_{a_2}^{b_2} \!\! \int_{a_3}^{b_3}
\rme^{nh(x_1,x_2,x_3)}dx_1\,dx_2\,dx_3 \]
where
\begin{itemize}
\item[(a)] $h$ is continuous in $a_i \leq x_i \leq b_i$,
\item[(b)] $h(c_1,c_2,c_3)=0$ for some point $(c_1,c_2,c_3)$
with $a_i < c_i < b_i$ \\
and $h(x_1,x_2,x_3) < 0$ for all other points in
the range,
\item[(c)]
$\displaystyle{h(x_1,x_2,x_3) =
    -\frac{1}{2}\sum_{i=1}^{3}\sum_{j=1}^{3}a_{ij}x_ix_j
    + \smallo(x_1^2+x_2^2+x_3^2)}$ \\
with $(x_1^2+x_2^2+x_3^2 \rightarrow 0)$, and
\item[(d)] the quadratic form $\sum\sum a_{ij}x_ix_j$ is positive definite.
\end{itemize}
Then,
\[ F(n) \sim (2 \pi)^{\frac{3}{2}}D^{-\frac{1}{2}}n^{-\frac{3}{2}} \]
where $D$ is the determinant of the matrix $(a_{ij})$.
\end{lemma}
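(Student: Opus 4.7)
The plan is to apply the standard Laplace-localization argument: since $h$ attains its unique maximum value $0$ at the interior point $(c_1,c_2,c_3)$, the integrand $\rme^{nh}$ concentrates sharply around this point as $n\to\infty$, and a local Gaussian approximation captures the leading asymptotic. After shifting coordinates so that the maximum sits at the origin via $y_i=x_i-c_i$, I would split the integration domain into a small box $B_\delta=\{y:\|y\|_\infty<\delta\}$ around $0$ and its complement $R_\delta$ in the translated rectangle $\prod_{i=1}^3[a_i-c_i,b_i-c_i]$.

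For the outer region, I would use hypothesis (b) together with the continuity of $h$ on the compact set $R_\delta$ to obtain a constant $\eta=\eta(\delta)>0$ with $h\le-\eta$ throughout $R_\delta$. The contribution from $R_\delta$ is then at most $(b_1-a_1)(b_2-a_2)(b_3-a_3)\,\rme^{-n\eta}$, which decays exponentially and is therefore negligible against the target order $n^{-3/2}$.

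For the inner region, I would use hypothesis (c) to write $h(c+y)=-\tfrac12\, y^{T}\!A\,y+\rho(y)$ with $A=(a_{ij})$ and $\rho(y)=\smallo(\|y\|^2)$. Because (d) forces $\lambda_{\min}(A)>0$, I can shrink $\delta$ so that $|\rho(y)|\le\tfrac14\lambda_{\min}(A)\|y\|^2$ throughout $B_\delta$. Substituting $u=\sqrt{n}\,y$, the inner integral becomes
\[
n^{-3/2}\int_{\sqrt{n}\,B_\delta}\exp\!\left(-\tfrac12\, u^{T}\!A\,u\;+\;n\rho(u/\sqrt n)\right)du,
\]
whose integrand converges pointwise to $\rme^{-\tfrac12 u^{T}\!A u}$ and is dominated by the integrable function $\rme^{-\tfrac14 u^{T}\!A u}$ on all of $\reals^3$. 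Dominated convergence then gives the limit $n^{-3/2}\int_{\reals^3}\rme^{-\tfrac12 u^{T}\!Au}\,du=n^{-3/2}(2\pi)^{3/2}(\det A)^{-1/2}$, which is precisely the claimed asymptotic.

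The main technical obstacle is controlling the $\smallo(\|y\|^2)$ remainder uniformly enough to justify dominated convergence after the $\sqrt n$-rescaling; the key trick is to pick $\delta$ so small that $\rho$ is absorbed into a fraction of the leading quadratic form, producing an integrable majorant. Everything else — the Gaussian evaluation, the exponential bound on $R_\delta$, and the matching of constants — is routine. In practice, for a paper like the present one, it is simpler to quote the multivariate Laplace method from de Bruijn \cite{bru81} verbatim, which is presumably what the authors do.
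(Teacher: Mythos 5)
The paper does not prove this lemma at all: it is quoted verbatim from de Bruijn \cite{bru81} as a black box, with no proof supplied. Your sketch is a correct rendition of the standard three-variable Laplace argument, and the key step is handled properly: after centring at the maximum and rescaling by $\sqrt{n}$, you absorb the $\smallo(\|y\|^2)$ remainder into a fraction of the quadratic form on a shrunken box (using $\lambda_{\min}(A)\|u\|^2 \le u^{T}Au$ to get the integrable majorant $\rme^{-\frac14 u^{T}Au}$), so dominated convergence yields the Gaussian integral $(2\pi)^{3/2}(\det A)^{-1/2}$, and the complement of the box contributes only $\bigo(\rme^{-\eta n})$ by continuity of $h$ and hypothesis (b). Your closing remark — that the authors presumably just cite de Bruijn — is exactly right; you have filled in the omitted argument rather than duplicated one.
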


Just as is done in \cite{dub-man02},
we apply Lemma \ref{lb} by letting $h(x_1,x_2,x_3) = f(\alpha,r,t)-2(1-c)\ln 4$. By Observation~\ref{obs:f}, this satisfies point (a).
Point (b) is satisfied by Lemma \ref{lem:fmax}. Point (c) is
satisfied if we approximate $h$ by the Taylor expansion about the
point $\alpha = \frac{1}{4}$, $r=\frac{1}{4}$, $t=\frac{1}{16}$ and
take the $a_{ij}$'s from the second partial derivatives of $h$.
Part (d) is satisfied by the following lemma which
is proved in Section \ref{sd}:

\begin{lemma}\label{lpd}
The quadratic form $\sum\sum a_{ij}x_ix_j$ is positive definite, and the determinant of
the matrix $(a_{ij})$ is $D=\frac{2^{15}}{9}c^2(\Phi(3cn,n))^2$.
\end{lemma}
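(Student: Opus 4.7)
The plan is to obtain the Hessian of $f$ at the interior critical point $(\a, r, t) = (\inv{4}, \inv{4}, \inv{16})$ identified by Lemma~\ref{lem:fmax}, set $A = (a_{ij}) = -\mbox{Hess}\,f$, and verify the two claims by direct computation. First I would exploit the symmetry at the critical point: substituting $\a = r = \inv{4}$ into (\ref{eq:implicits}) shows the three implicit equations for $x, y, z$ collapse to the same equation $\phi(w) = w/(3c)$ with $\phi(w) := (\rme^w - 1 - w)/(\rme^w - 1)$, so $x = y = z$ at the critical point; call this common value $x = x(c)$.

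Second, to avoid differentiating the implicit functions $y(\a, r), z(\a, r)$ twice, I would view the right-hand side of (\ref{eq:f}) as an explicit function $\tilde f(\a, r, t, y, z)$ of five free variables, so that $f(\a, r, t) = \tilde f(\a, r, t, y(\a, r), z(\a, r))$ and the conditions $\partial\tilde f/\partial y = \partial\tilde f/\partial z = 0$ recover (\ref{eq:implicits}) exactly. Standard envelope-theorem bookkeeping, together with the observation that $\tilde f_{yz} = 0$ (visible in (\ref{eq:f}), since the $y$-terms and $z$-terms are additively separable), yields the Schur-complement formula
\[\mbox{Hess}\,f = \bigl[\tilde f_{ii'} - \tilde f_{iy}\tilde f_{yy}^{-1}\tilde f_{yi'} - \tilde f_{iz}\tilde f_{zz}^{-1}\tilde f_{zi'}\bigr]_{i,i'\in\{\a,r,t\}},\]
expressing every entry of $A$ in terms of second derivatives of the \emph{explicit} function $\tilde f$.

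Third, I would evaluate and assemble. The $t$-direction decouples from the implicit block since $\tilde f_{ty} = \tilde f_{tz} = 0$ (the $t$-terms in $f$ do not involve $y$ or $z$), so the last row and column of $A$ can be read off from elementary derivatives. Arithmetic simplifications at the critical point ($1 - \a = 3/4$, $1 - 3r + 2t = 3/8$, $3r - 3t = 9/16$, $r - t = 3/16$, $1 - r = 3/4$) together with the identity $\phi(x) = x/(3c)$ reduce everything to polynomial expressions in $c$ and $x$. Expanding $\det A$ should cause
\[\Phi(3cn, n)^2 = \frac{3c}{x(3c-1) - 3c(3c-2)}\]
(which is indeed independent of $n$) to appear as a common factor from the Schur-complement correction, producing the claimed $D = \frac{2^{15}}{9}c^2\Phi(3cn, n)^2$. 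Positive definiteness is then essentially free: Lemma~\ref{lem:fmax} guarantees that the critical point is a strict interior global maximum of $f$, so $A$ is positive semi-definite, and the explicit determinant formula---strictly positive for $2/3 \leq c < 1$---upgrades this to positive definite.

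The main obstacle is purely algebraic bookkeeping. $\tilde f$ has many terms, and the clean $\Phi(3cn, n)^2$ factor only emerges after the envelope corrections are combined carefully with the elementary second derivatives; the most practical route is probably to split off the $t$-entry, work with the $(\a, r)$-subblock separately, and only at the end combine via cofactor expansion along the $t$-row.
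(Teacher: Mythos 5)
Your computational route is genuinely different from the paper's and, up to the algebra, equivalent. The paper computes the implicit partial derivatives $\partial y/\partial\a$, $\partial z/\partial\a$, $\partial y/\partial r$, $\partial z/\partial r$ explicitly, plugs them into direct formulas for the nine second partials $f_{\a\a},\ldots,f_{tt}$, evaluates everything at $x=y=z$, collects terms into the single constant $K = \frac{c(\rme^x-1)^2}{(\rme^x-1)^2+3c(\rme^x-x\rme^x-1)}$, and then verifies Sylvester's criterion and the determinant by direct cofactor expansion. Your Schur-complement reorganization avoids ever writing the implicit derivatives down as separate objects, and the observations that $\tilde f_{yz}=0$ and $\tilde f_{ty}=\tilde f_{tz}=0$ correctly collapse the correction to the $(\a,r)$ subblock, which is exactly why the paper's $f_{\a t}$ and $f_{t\a}$ come out identically zero and $f_{rt}$ agrees with the naive partial. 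Both routes should arrive at the same $(a_{ij})$ with $D=\frac{2^{15}}{9}Kc=\frac{2^{15}}{9}c^2(\Phi(3cn,n))^2$.

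The one genuine problem is your positive-definiteness argument. You invoke Lemma~\ref{lem:fmax} to conclude that the Hessian is negative semi-definite, but in this paper the proof of Lemma~\ref{lem:fmax} explicitly cites Lemma~\ref{lpd} to establish that $(\frac14,\frac14,\frac{1}{16})$ is a local maximum before the interval analysis extends this to a global maximum; as written, your argument is circular. The paper sidesteps this by checking the three leading principal minors directly (each reduces to a positive polynomial in $c$ and $K$, with $K\ge0$ following from the simplification $K=c(\Phi(3cn,n))^2$), which is self-contained and incurs essentially no extra work once the matrix is assembled. You could likely rescue your shortcut by observing that the interval-analysis portion of the Lemma~\ref{lem:fmax} proof does not actually need local maximality of the critical point (ruling out all other stationary points and the boundary already suffices), but that requires restructuring the paper's stated logical dependencies, and it is cleaner here to just run Sylvester's criterion as the paper does.
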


Now, we can apply the Laplace Method to obtain,
\begin{eqnarray*}
\frac{\ex(N^2)}{\ex(N)^2} & \sim & \Phi(3cn,n) \times n^{\frac{3}{2}}\frac{2^6c}{3\pi^{3/2}}
\times(2 \pi)^{\frac{3}{2}}
    \left(\frac{2^{15}}{9} c^2(\Phi(3cn,n))^2\right)^{-\frac{1}{2}}
    n^{-\frac{3}{2}}\\
  &\sim & 1
\end{eqnarray*}
proving (\ref{lsm1}) and thus completing the proof of Lemma \ref{l3}.
\proofend

\section{Further details}\label{sd}
Here we present most of the details that were postponed from the analysis in the previous section.
We'll start with the straightforward derivation of (\ref{esum}).  By Lemma \ref{lem:fmax},
$f(\frac{1}{4}, \frac{1}{4}, \frac{1}{16})$ is a global maximum.  Therefore, there exists a function
$\rho(n)=o(1)$ such that, defining $\Psi=\{(\a,r,t):|\a-\inv{4}|,|r-\inv{4}|,|t-\inv{16}|>\rho(n)\}$,
we have $f(\a,r,t)<f(\frac{1}{4}, \frac{1}{4}, \frac{1}{16})-10\log n/n$ for all  $(\a,r,t)\in \Psi$.
This and Lemma \ref{lgbound} imply that there exists a constant $\n$ independent of $n$
such that
\begin{eqnarray*}
\lefteqn{\sum_{\alpha,r,t \in \Psi} g(\alpha , r,t) e^{n(f(\alpha ,r,t) - 2(1-c)\ln 4)}} \\
&<&\sum_{\alpha,r,t \in \Psi} \n \, \rme^{n(f(\inv{4},\inv{4},\inv{16}) - 2(1-c)\ln 4-10\log n/n)}\\
&\leq& |I_n^+||I_{3cn}^+||I_{cn}^+|\times \n \times n^{-10}\rme^{n(f(\inv{4},\inv{4},\inv{16}) - 2(1-c))}\\
&=&o(e^{n(f(\inv{4},\inv{4},\inv{16}) - 2(1-c))}).
\end{eqnarray*}
Thus, if for $\alpha,r,t \in \Psi$, we replace $g(\alpha , r,t)$ by any constant, specifically  $g(\inv{4},\inv{4},\inv{16})$, then it will have a negligible effect on the sum.  For $\alpha,r,t \notin \Psi$,
Lemma \ref{lem:distribution} implies that we can also replace $g(\alpha , r,t)$ by
$g(\inv{4},\inv{4},\inv{16})\sim \Phi(3cn,n)\times (\pi n)^{-\frac{3}{2}}\frac{2^6}{9c}$
and so obtain
\[\frac{\ex(N^2)}{\ex(N)^2} \sim \sum_{\alpha \in I_n^+} \sum_{r \in I_{3cn}^+} \sum_{t \in I_{cn}^+}
\Phi(3cn,n) \times (\pi n)^{-\frac{3}{2}}\frac{2^6}{9c}
 \rme^{n(f(\alpha ,r,t) - 2(1-c)\ln 4)} ,\]
 which is (\ref{esum}).
 \proofend

\subsection{The global maximum}
We start with our proof that $f(\frac{1}{4}, \frac{1}{4}, \frac{1}{16})$ is a global maximum.  Recall our statement:

\begin{prevlemma}{\ref{lem:fmax}}
For all $\frac{2}{3} \leq c < 1$, the unique global maximum of
$f(\alpha, r, t)$ in the domain $0 < \alpha < 1$, $\frac{2\a}{3c} < r < 1-\frac{2(1-\a)}{3c}$,
and $\max \{0, \frac{3r-1}{2}\} < t < r$ occurs
at $f(\frac{1}{4}, \frac{1}{4}, \frac{1}{16})=2(1-c)\ln 4$.  Furthermore, 
for all $.67 \leq c \leq 1$, there exists a constant $b = b(c)$ such that
 at every extremal $(\a,r,t)$,
$f$ approaches a limit that is less than $2(1-c)\ln 4 - b$.
\end{prevlemma}

Most of the computer-aided proof of this lemma is deferred to the appendix.
We lay down some of the initial steps here.

If we differentiate $f$ with respect to $\alpha, r, t$, we get
\begin{alignat}{1}
\frac{\partial f}{\partial \alpha} &= D_\alpha [(1-\alpha) \ln 3
   - \alpha \ln \alpha - (1- \alpha ) \ln (1-\alpha) + \alpha \ln (\rme^z-1-z)
   \notag \\
 & \qquad -r3c \ln z + (1-\alpha) \ln (\rme^y-1-y) - (1-r)3c \ln y ]
   \notag \\
 &= - \ln 3 - \ln \alpha + \ln (1-\alpha) + \ln (\rme^z-1-z) -
   \ln (\rme^y-1-y) \notag \\
 & \qquad + \left[\frac{\alpha(\rme^z-1)}{\rme^z-1-z} -
     \frac{3rc}{z}\right]\frac{\partial z}{\partial \alpha}
    + \left[\frac{(1-\alpha)(\rme^y-1)}{\rme^y-1-y}
    - \frac{(1-r)3c}{y}\right]\frac{\partial y}{\partial \alpha} \notag \\
 &= - \ln 3 - \ln \alpha + \ln (1-\alpha) + \ln (\rme^z-1-z) -
   \ln (\rme^y-1-y) \tag*{by (\ref{eq:implicits})}
\end{alignat}
\begin{alignat}{1}
\frac{\partial f}{\partial r} &= D_r [ 3rc \ln 3 - 3rc \ln 2
-c(1-3r+2t)\ln(1-3r+2t) \notag \\ & \qquad - c(3r-3t)\ln (r-t)
 + r3c \ln r + (1-r)3c \ln (1-r) + \alpha \ln (\rme^z-1-z) \notag \\
  & \qquad - r3c \ln z
 + (1-\alpha) \ln (\rme^y-1-y) - (1-r)3c \ln y ] \notag \\
&= 3c \ln 3 - 3c \ln 2 + 3c\ln(1-3r+2t) - 3c \ln (r-t) + 3c \ln r \notag \\
 & \qquad - 3c \ln (1-r) - 3c \ln z + 3c \ln y
 + \left[\frac{\alpha (\rme^z-1)}{\rme^z-1-z} - \frac{r3c}{z}\right]
    \frac{\partial z}{\partial r} \notag \\
& \qquad + \left[\frac{(1-\alpha)(\rme^y-1)}{\rme^y-1-y} - \frac{(1-r)3c}{y}\right]
    \frac{\partial y}{\partial r} \notag \\
&= 3c \ln 3 - 3c \ln 2 + 3c\ln(1-3r+2t) - 3c \ln (r-t) + 3c \ln r \notag \\
 & \qquad - 3c \ln (1-r) - 3c \ln z + 3c \ln y \tag*{by (\ref{eq:implicits})}
\end{alignat}
\begin{eqnarray*}
\frac{\partial f}{\partial t} & = & D_t [ -ct \ln 3 + c2t \ln 2
-c(1-3r+2t)\ln(1-3r+2t) \\
 & & \mbox{} - c(3r - 3t) \ln (r-t) - ct \ln t ] \\
 & = & -c\ln 3 + 2c \ln 2 - 2c\ln(1-3r+2t) + 3c \ln (r-t) - c \ln t  .
\end{eqnarray*}

Setting
$\frac{\partial f}{\partial \alpha} = \frac{\partial f}{\partial r} =
  \frac{\partial f}{\partial t} =  0$ implies
\begin{alignat}{1}
\frac{1-\alpha}{\alpha} &= 3 \frac{\rme^y-1-y}{\rme^z-1-z} \label{eq:dalpha} \\
\frac{1-r}{r} &= \frac{3}{2} \cdot \frac{y}{z} \cdot \frac{(1-3r+2t)}{(r-t)}
\label{eq:dr} \\
\frac{(r-t)^3}{(1-3r+2t)^2} &= \frac{3 t}{4} .  \label{eq:dt}
\end{alignat}

We will show that the maximum for $f$ is obtained by setting $x=y=z$.
The intuition for this is that
$x$, $y$, and $z$ correspond to the parameters for the truncated Poisson random variables
used to model the degrees of all the variables
in the 2-core, the variables in the 2-core that are not in the set of
$\a n$ variables, and the variables in the 2-core that are the set of
$\a n$ variables, respectively.  Since $f$ is considering all possible sets
of size $\a n$, it is reasonable to guess that the expected degrees for the
variables in each set are equal when $f$ is maximized.

Plugging $x=y=z$ into
(\ref{eq:dalpha}), (\ref{eq:dr}) and (\ref{eq:dt}) gives
$\alpha = \frac{1}{4}, r = \frac{1}{4}, t = \frac{1}{16}$ as a point where each of the partial
derivatives are 0.  To confirm that this point is in the
domain of $f$, we note that setting $\a=r$ in (\ref{eq:implicits}) yields
$x = y = z$.  Thus $\a=\frac{1}{4}$, $r=\frac{1}{4}$, $t = \frac{1}{16}$
is a stationary point of $f$ with
\[
f\left(\frac{1}{4}, \frac{1}{4}, \frac{1}{16}\right) = 2(1-c) \ln 4 .
\]
Lemma \ref{lpd}, proved below, establishes that this is a local maximum.
In the appendix, we will present a computer-aided proof that it is a global maximum.
As part of the proof, the computer program gives a rigorous upper bound of
$f^+(\a,r,t) \geq \lim_{(\a', r', t') \to (\a, r, t)} f(\a',r',t')$ for each triple
$\a, r, t$ on the boundary, and the program proves that $2(1-c)\ln 4 - f^+(\a, r, t) \geq b$
where $b > 0$ is a constant independent of $\a$, $r$, and $t$, but it may be dependent on $c$.

\subsection{Proof of Lemma \ref{lpd}}
Recall the statement:

\begin{prevlemma}{\ref{lpd}}
 The quadratic form $\sum\sum a_{ij}x_ix_j$ is positive definite, and the determinant of
the matrix $(a_{ij})$ is $D=\frac{2^{15}}{9}c^2(\Phi(3cn,n))^2$.
\end{prevlemma}

\proofstart
The second partial derivatives of $f$ are:
\begin{eqnarray*}
f_{\alpha \alpha} & = & -\frac{1}{\alpha} - \frac{1}{1-\alpha}
    + \frac{\rme^z-1}{\rme^z-1-z} \frac{\partial z}{\partial \alpha}
    - \frac{\rme^y-1}{\rme^y-1-y} \frac{\partial y}{\partial \alpha} \\
f_{\alpha r} & = & \frac{\rme^z-1}{\rme^z-1-z} \frac{\partial z}{\partial r}
    - \frac{\rme^y-1}{\rme^y-1-y} \frac{\partial y}{\partial r} \\
f_{\alpha t} & = & 0 \\
f_{r \alpha} & = & \frac{-3c}{z}\frac{\partial z}{\partial \alpha}
        + \frac{3c}{y}\frac{\partial y}{\partial \alpha} \\
f_{rr} & = & - \frac{3c}{r-t} + \frac{3c}{r} + \frac{3c}{1-r}
        - \frac{9c}{1-3r+2t} - \frac{3c}{z}\frac{\partial z}{\partial r}
        + \frac{3c}{y}\frac{\partial y}{\partial r} \\
f_{rt} & = & \frac{3c}{r-t} + \frac{6c}{1-3r+2t} \\
f_{t\alpha} & = & 0 \\
f_{tr} & = & \frac{3c}{r-t} + \frac{6c}{1-3r+2t} \\
f_{tt} & = & - \frac{3c}{r-t} - \frac{c}{t} - \frac{4c}{1-3r+2t}
\end{eqnarray*}
where, from (\ref{eq:implicits}),
\begin{eqnarray*}
\frac{\partial z}{\partial \alpha} & = & \frac{-z(\rme^z-1)^2}{\alpha(\rme^z-1)^2
        +3rc(\rme^z(\rme^z-1-z)-(\rme^z-1)^2)} \\
\frac{\partial z}{\partial r} & = & \frac{\alpha z (\rme^z-1)^2}{r[\alpha(\rme^z-1)^2
        +3rc(\rme^z(\rme^z-1-z)-(\rme^z-1)^2)]} \\
\frac{\partial y}{\partial \alpha} & = & \frac{y(\rme^y-1)^2}{(1-\alpha)(\rme^y-1)^2
        +3(1-r)c(\rme^y(\rme^y-1-y)-(\rme^y-1)^2)} \\
\frac{\partial y}{\partial r} & = & \frac{-(1-\alpha)y(\rme^y-1)^2}{(1-r)[(1-\alpha)(\rme^y-1)^2
        +3(1-r)c(\rme^y(\rme^y-1-y)-(\rme^y-1)^2)]}.
\end{eqnarray*}

Since $x=y=z$ at the maximum and setting
$K = \frac{c(\rme^x-1)^2}{(\rme^x-1)^2+3c(\rme^x-x\rme^x-1)}$,
we have
\begin{eqnarray*}
f_{\alpha \alpha}(\frac{1}{4},\frac{1}{4},\frac{1}{16}) & = & - \frac{16}{3} - 16 K \\
f_{\alpha r}(\frac{1}{4},\frac{1}{4},\frac{1}{16}) & = & 16 K \\
f_{\alpha t}(\frac{1}{4},\frac{1}{4},\frac{1}{16}) & = & 0 \\
f_{r \alpha}(\frac{1}{4},\frac{1}{4},\frac{1}{16}) & = & 16 K \\
f_{r r}(\frac{1}{4},\frac{1}{4},\frac{1}{16}) & = & - 24 c - 16 K \\
f_{r t}(\frac{1}{4},\frac{1}{4},\frac{1}{16}) & = & 32 c \\
f_{t \alpha}(\frac{1}{4},\frac{1}{4},\frac{1}{16}) & = & 0 \\
f_{t r}(\frac{1}{4},\frac{1}{4},\frac{1}{16}) & = & 32 c \\
f_{t t}(\frac{1}{4},\frac{1}{4},\frac{1}{16}) & = & - \frac{128}{3} c.
\end{eqnarray*}
Thus,
\[ (a_{ij}) = \left[ \begin{array}{ccc}
    \frac{16}{3}+16K & -16K & 0 \\
    -16K & 24c+16K & -32c \\
    0 & -32c & \frac{128}{3}c \end{array} \right]. \]

The quadratic form is positive definite if the
following determinants are all positive (see, e.g., \cite{apo57} p. 152).
\begin{alignat*}{1}
\left| a_{11} \right| &= \frac{16}{3}+16K \\
\left| \begin{array}{cc} a_{11} & a_{12} \\
                a_{21} & a_{22} \end{array} \right|
  &=  \left(\frac{16}{3} + 16K\right)\left(24c+16K\right) - (16 K)^2 \\
  &= 16^2\left( \left(\frac{1}{3} + K\right)\left(\frac{3}{2}c + K\right) - K^2\right) \\
  &= 2^8\left(\frac{c}{2} + \frac{K}{3} +\frac{3cK}{2}\right) \\
\left| \begin{array}{ccc}
    a_{11} & a_{12} & a_{13} \\
    a_{21} & a_{22} & a_{23} \\
    a_{31} & a_{32} & a_{33} \end{array} \right|
 &=
 \left( \frac{16}{3}+16K\right) \left| \begin{array}{cc}
        a_{22} & a_{23} \\ a_{32} & a_{33} \end{array} \right|
    \mbox{} - (-16K)\left| \begin{array}{cc}
        a_{21} & a_{23} \\  a_{31} & a_{33} \end{array} \right| \\
  &= 16
	\left( \left(\frac{1}{3}+K\right)\left[2^{10}c^2 + \frac{2^{11} K c}{3} - 2^{10}c^2\right] \right. \\
 & \qquad \qquad \left. - (- K) \left[ - \frac{2^{11} K c}{3}\right]\right) \\
  &= 2^{15}
	\left( \frac{K c}{9}+\frac{K^2 c}{3} - \frac{K^2 c}{3} \right) \\
  &=  \frac{2^{15}}{9}Kc.
\end{alignat*}

From (\ref{eq:implicits}), $\rme^x = 1 + \frac{3cx}{3c-x}$.  Using this, we
can simplify $K$.
\begin{alignat}{1}
K &= \frac{c(\rme^x-1)^2}{(\rme^x-1)^2+3c(\rme^x-1-x\rme^x)} \notag \\
  &= \frac{c(3cx)^2}{(3cx)^2+3c(3cx)(3c-x)-3cx(3c-x)^2-(3cx)^2(3c-x)} \notag \\
  &= \frac{3c^2}{x(3c-1)-3c(3c-2)} \notag \\
  &= c \times \frac{3cn^2}{xn(3cn-n)-3cn(3cn-2n)} \notag \\
  &= c(\Phi(3cn,n))^2.
\end{alignat}
Thus, $K\geq0$, which yields that the determinants above are all positive and so the quadratic form is positive definite. This establishes that the determinant $D$ of $(a_{ij})$ is $\frac{2^{15}}{9} K c$, as required.
\proofend

\subsection{Approximating $g$}
First we prove our bound on $g$.
Recall that
\begin{alignat*}{1}
g(\alpha,r,t) &= (\Phi(3cn,n)\Lambda_{3m,n})^{-1} \Phi(r3cn,\alpha n) \Phi((1-r)3c n,(1-\alpha) n)
		\Lambda_{3m,n}^{\a,r,t} \\
 & \mbox{} \times \frac{1}{(2 \pi n)^{3/2} c\sqrt{\a (1-\a)(1-3r+2t)(3r-3t)t }} \\
\intertext{where}
\Phi(i,j) &= \sqrt{\frac{ij}{z_0j(i-j)-i(i-2j)}},
\end{alignat*}
and where $\Lambda_{3m,n}$ is the probability that if we distribute $3m$ objects from
$m$ triples into
$n$ sets such that set receives at least 2 objects, then we do not have two objects from the same
triple in the same set. $\Lambda_{3m,n}^{\a,r,t}$ is a similar probability but with the distribution
further restricted by $\a$, $r$, and $t$.
Our main job is to approximate the $\Lambda$ terms:
\begin{lemma} \label{lambound}
Let $\a = \frac{1}{4}+o(1)$, $r = \frac{1}{4}+o(1)$, and $t = \frac{1}{16}+o(1)$, then
\[\Lambda_{3m,n}=\Lambda^{\a,r,t}_{3m,n}+o(1).\]
\end{lemma}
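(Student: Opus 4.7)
The plan is to show that, in both models, the number $L$ of \emph{loops} (same-clause pairs of places assigned to the same variable) converges in distribution to Poisson$(\la)$ with the same parameter $\la$. Since $\Lambda = \pr(L=0)$ in either model, this will give $\Lambda_{3m,n} = \Lambda^{\a,r,t}_{3m,n} + o(1) = e^{-\la} + o(1)$.

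For the first moment in the unrestricted model, recognize the random partition as (asymptotically) a configuration model with truncated-Poisson$(x)$ degrees, where $x$ solves the first equation in (\ref{eq:implicits}). Writing $\eta_x = 1 - e^{-x}(1+x)$, one has $\ex[d(d-1)] = x^2/\eta_x$, and hence the per-pair coincidence probability is $\mu \sim n\,\ex[d(d-1)]/(3m)^2 = x^2/(9c^2 n \eta_x)$. Summing over the $3m$ same-clause pairs yields $\ex(L) \to \la := 3m\mu = x^2/(3c\eta_x)$. In the restricted model at $(\a,r,t)=(\frac{1}{4}+o(1),\frac{1}{4}+o(1),\frac{1}{16}+o(1))$, the equality $\a=r$ forces $y=z=x+o(1)$ via (\ref{eq:implicits}), so both the ``special'' and ``non-special'' subsystems inherit the same truncated-Poisson$(x)$ degree distribution as the full system. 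Decompose the same-clause pairs by clause type: a $T_0$ clause contributes $3$ special/special pairs, a $T_3$ clause contributes $3$ non-special/non-special pairs, and a $T_2$ clause contributes only its single non-special/non-special pair (its two cross-type pairs have zero coincidence probability, since special and non-special places land in disjoint variable subsets). Using $\mu_S \sim (\a n)(x^2/\eta_x)/(3rm)^2 = 4\mu$ and $\mu_{NS} \sim ((1-\a)n)(x^2/\eta_x)/(3(1-r)m)^2 = \frac{4}{3}\mu$, one obtains
\begin{equation*}
\ex(L) \sim 3T_0\mu_S + (T_2+3T_3)\mu_{NS} = \frac{3}{4}\,m\mu + \frac{9}{4}\,m\mu = 3m\mu = \la.
\end{equation*}
The value $t=\frac{1}{16}$ is precisely what makes these three terms balance to $\la$, matching the unrestricted case.

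For the higher factorial moments, the standard configuration-model expansion shows that $\ex(L^{(k)})$ is dominated by $k$-tuples of same-clause pairs drawn from $k$ distinct clauses, for which the $k$ coincidence events are asymptotically independent (contributions from $k$-tuples sharing a place or a clause are $O(1/n)$ smaller). Summing over the type-composition of the $k$ clauses in the restricted model and applying the multinomial theorem gives
\begin{equation*}
\ex(L^{(k)}) \sim \sum_{k_0+k_2+k_3=k}\binom{k}{k_0,k_2,k_3}(3T_0\mu_S)^{k_0}(T_2\mu_{NS})^{k_2}(3T_3\mu_{NS})^{k_3} = \bigl(\ex(L)\bigr)^k \to \la^k,
\end{equation*}
and the analogous simpler calculation yields $\la^k$ in the unrestricted model. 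By the method of factorial moments, $L$ converges in distribution to Poisson$(\la)$ in both models, so $\pr(L=0) \to e^{-\la}$ in both, proving the lemma.

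The main obstacle will be verifying that the rigid conditioning on the joint profile $(\a,r,T_0,T_2,T_3)$ does not disturb the asymptotic independence of coincidence events across distinct clauses, and that the degenerate $k$-tuple contributions remain $o(1)$ uniformly in $k$; this is routine but tedious configuration-model bookkeeping. A cleaner alternative would be a switching argument that directly compares the restricted and unrestricted partition enumerations, but the moment method seems the most transparent route.
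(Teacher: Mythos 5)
Your proposal matches the paper's proof: both establish Poisson convergence of the bad-pair count via the method of factorial moments, decompose pairs by clause type (observing that the two cross-type pairs in each $T_2$ clause have zero coincidence probability), and use $\alpha = r + o(1)$ to force $y = z = x + o(1)$ so the restricted and unrestricted truncated-Poisson bucket-size distributions coincide. The Poissonization/concentration step you flag as ``routine but tedious'' is precisely what the paper carries out to show that $\sum_i B_i(B_i-1)$ splits additively between the $A$-buckets and $B$-buckets up to $o(1)$, which is what identifies the two Poisson rates.
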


\proofstart
We recast the setting slightly, focusing only on what is relevant to this lemma:
we consider a random partition of $3m$ elements into $n$ buckets, where the
elements come in $m$ triples, and the partition is uniform conditional on each bucket containing at least two elements.
$\Lambda_{3m,n}$ is the probability that no bucket contains two elements from one triple.

Now suppose that we split the buckets into two groups $A,B$ with $|A|=\a n,|B|=(1-\a)n$, and fix
\begin{itemize}
\item the total number of elements that land in the buckets of $A$ is $3rm$;
\item the number of triples, all of whose elements lie in $A$, is $tm$.
\end{itemize}
$\Lambda^{\a,r,t}_{3m,n}$ is the probability, under this conditioning, that no bucket contains two elements from one triple.

To approximate $\Lambda_{3m,n}$, we will first expose the number of elements in each bucket, and then
we will expose which elements go into the buckets.  Let $B_i$ denote the number of elements
in bucket $i$, and let
\[\B=\sum_{i=1}^n B_i(B_i-1).\]
Having exposed the bucket sizes, we take a uniformly random partition of the elements into parts
matching the bucket sizes. Let $Y$ denote the number of pairs of elements that lie in the same triple and
in the same bucket. There are $3m$ pairs that lie in the same triple, and so
\begin{eqnarray*}
\E(Y)=\mu_Y&=&3m\times\sum_{i\geq 1}B_i(B_i-1)\times\frac{1}{(3m)(3m-1)}\\
&\sim&\inv{3m}\times\B\\
&=&\frac{1}{3m}\B.
\end{eqnarray*}

For each $j\geq2$ a straightforward calculation shows that the expected number of $j$-tuples of pairs
that are counted by $Y$ is $\E(Y)^j+o(1)$; the $o(1)$ term comes from the small probability that all three
elements of a triple lie in the same bucket, and from the fact that the $\frac{1}{(3m)(3m-1)}$ term becomes
$\frac{1}{(3m)(3m-1)...(3m-2j+1)}$ rather than $\left(\frac{1}{(3m)(3m-1)}\right)^j$.  Thus, the ``method of moments'' (see, eg.\ Section 6.1 of \cite{jan-luc-ruc00}) implies that $Y$ is asymptotically distributed like a Poisson and so
\[\Lambda_{3m,n}=\pr(Y=0)=\rme^{-\mu_Y}+o(1).\]

For $\Lambda^{\a,r,t}_{3m,n}$, we take the same approach, first dealing with the buckets in $A$ and then
with those in $B$.  Let $B'_i$ denote the the number of elements in bucket $i$ (and note that since this
is a different experiment, we do not necessarily have $B_i=B'_i$).  Then let
\[\B_A=\sum_{i\in A} B'_i(B'_i-1);\qquad\qquad\B_B=\sum_{i\in B} B'_i(B'_i-1). \]

The number of triples with all three elements in $A$ is $tm$, and
as discussed in Section \ref{sec:3-4-thresh}, no triples have two elements in $A$.
So letting $Y_A$ denote the number of pairs of elements that lie in the same triple and in the same bucket of $A$, we have:
\begin{eqnarray*}
\E(Y_A)=\mu_A&=&3tm\times\sum_{i\in A} B'_i(B'_i-1)\times\frac{1}{(3rm)(3rm-1)}\\
&\sim&\frac{t}{3r^2m}\times\B_A\\
&\sim&\frac{1}{3m}\B_A.
\end{eqnarray*}
From Section \ref{sec:3-4-thresh}, the number of triples with all three elements in $B$ is $T_3=m-3rm+2tm$ and the
number with two elements in $B$ is $T_2=3rm-3tm$. So, defining $Y_B$ analogously to $Y_A$, we have:
\begin{eqnarray*}
\E(Y_B)=\mu_B&=&\left(3(m-3rm+2tm)+(3rm-3tm)\right)
\times\sum_{i\in B} B'_i(B'_i-1) \\ & & \mbox{} \times\frac{1}{(3(1-r)m)(3(1-r)m-1)}\\
&\sim&\frac{1-2r+t}{3(1-r)^2m}\times\B_B\\
&\sim&\frac{1}{3m}\B_B.
\end{eqnarray*}
Again, the method of moments yields:
\[\Lambda^{\a,r,t}_{3m,n}=\pr(Y_A=0)\pr(Y_B=0)=\rme^{-\mu_A-\mu_B}+o(1).\]
We will show below that a.a.s.\ $\B=\B_A+\B_B+o(1)$ and so $\mu_Y=\mu_A+\mu_B+o(1)$, thus yielding
the lemma.

To study $\B,\B_A,\B_B$, we will apply  Poissonization to allow us to treat $B_i,B'_i$ as independent variables.
(See Section 5.4 of \cite{mitup} for a discussion of this technique.)  Let $Z(\mu)$ denote a Poisson variable
with mean $\mu$ and let $Z_{\geq2}(\mu)$ denote the same variable but truncated as being at least two; i.e., $\pr(Z_{\geq2}(\mu)=i)$ is $0$ for $i<2$ and is
$\pr(Z(\mu)=i)/\pr(Z(\mu)\geq2)$ otherwise. Let $b_1,...,b_n$ be independent variables distributed like $Z_{\geq2}(3m/n)$ and let $b$ be the random vector $(b_1,...,b_n)$ conditional on the
event $E$ that $b_1+...+b_n=3m$.  It is straightforward to show that this vector has the same distribution
as the vector $(B_1,...,B_n)$.  Indeed, for any $x_1,...,x_n\geq2$ that sum to $3m$, the probability
that $(B_1,...,B_n)=(x_1,...,x_n)$ is
\[\binom{3m}{x_1,...,x_n}\inv{S(3m,n,2)},\]
and the probability that $(b_1,...,b_n)=(x_1,...,x_n)$ is
\begin{eqnarray*}\inv{\pr(E)}\prod_{i=1}^n\frac{\pr(Z(3m/n)=x_i)}{\pr(Z(3m/n)\geq2)}
&=&\inv{\pr(E)\pr(Z(3m/n)\geq2)^n}\prod_{i=1}^n\rme^{-3m/n}\frac{(3m/n)^x_i}{x_i!}\\
&=&\frac{\rme^{-3m}(3m/n)^{3m}}{\pr(E)\pr(Z(3m/n)\geq2)^n}\times\inv{\prod_{i=1}^n x_i!}.
\end{eqnarray*}
These probabilities are both proportional to $1/\prod_{i=1}^n x_i!$ and hence are both equal.

Let $x_1,...,x_n$ be a sequence of $n$ independent variables distributed like $Z_{\geq2}(3m/n)$.
Note that $\pr(E)=\Theta(n^{-1/2})$.  Therefore, if a property fails with probability $o(n^{-1/2})$
for $x_1,...,x_n$ then it holds a.a.s.\ for $B_1,...,B_n$.
(In fact, for monotone properties, one can show that if it holds a.a.s.\ for
$x_1,...,x_n$ then it holds a.a.s.\ for $B_1,...,B_n$, but we won't need this here.)
Let $\B^*=\sum_{i=1}^n x_i(x_i-1)$.  It is straightforward to show, eg.\ by Azuma's Inequality,
that $\B^*$ is highly concentrated around it's mean; eg.\ that
$\pr(|\B^*-\ex(\B^*)|\geq n^{2/3})\ll n^{-1/2}$.
Therefore, a.a.s.\ $|\B-\ex(\B^*)|<n^{2/3}$.

The same analysis shows that a.a.s.\ $\B_A$, $\B_B$ are concentrated around $\ex(\B_A^*)$, $\ex(\B_B^*)$, where
$\B_A^*$, $\B_B^*$ are defined in the same manner except that the means of the truncated Poisson are
$3rm/\a n, 3(1-r)m/(1-\a)n$ rather than $3m/n$.  Since $\a=r+o(1)$, we have
$\ex(\B_A^*)=\frac{|A|}{n}\ex(\B^*)+o(1) ,\ex(\B_B^*)=\frac{|B|}{n}\ex(\B^*)+o(1)$ and
so a.a.s.\ $\B=\B_A+\B_B+o(1)$ as required.
\proofend

{\bf Remark:}  It seems fortuitous that we obtained  $\mu_A+\mu_B\sim\mu_Y$ here.  Had this not occurred,
we still would have had  $\Lambda_{3m,n}=L\times\Lambda^{\a,r,t}_{3m,n}+o(1)$, for some constant $L$
which would have yielded $\frac{\ex(N^2)}{\ex(N)^2}\sim L$.  This, in turn would have showed that
the probability of satisfiability is at least a constant.  Then, a  standard application of Friedgut's Theorem~\cite{fri99} would have shown that the probability of satisfiability is indeed $1-o(1)$. (See, eg.~\cite{ach01}
for a similar argument.)

This now easily yields Lemma~\ref{lem:distribution}:

\begin{prevlemma}{\ref{lem:distribution}}
If $\a=\frac{1}{4}+o(1),r=\frac{1}{4}+o(1),t=\frac{1}{16}+o(1)$ then
\[g(\a,r,t) = \Phi(3cn,n)
\times (\pi n)^{-\frac{3}{2}}\frac{2^6}{9c}+o(1).\]
\end{prevlemma}

\begin{proof}
This follows immediately from Lemma~\ref{lambound} and a straightforward simplification.
\end{proof}

We close this section with the proof of
Lemma~\ref{lgbound}:

\begin{prevlemma}{\ref{lgbound}}
There exists a constant $\n$, independent of $n$,
such that for all $\alpha \in I_n^+$, $ r \in I_{3m}^+$,
and $t \in I_m^+$
we have $g(\a,r,t) \leq \n$. 
\end{prevlemma}

\begin{proof}
We will consider the terms of $g(\a, r, t)$ separately.
Recall that $I_n^+ = [\zeta, 1-\zeta]$ for some constant $\zeta$.  
As a result, $\frac{1}{\sqrt{2 \pi n \a (1-\a)}} < 1$ when $\a \in I_n^+$. 
Similarly, $\frac{1}{\sqrt{4 \pi^2 m^2(1-3r+2t)(3r-3t)t}} < 1$ 
when $t \in I_m^+$.

Next we consider the $\Phi$ terms of $g$.
Let $\chi(i,j) = \sqrt{\frac{j(i-2j)}{z_0j(i-j)-i(i-2j)}}$ where $\frac{i}{j} = \frac{z_0\left(\rme^{z_0}-1\right)}{\rme^{z_0}-1-z_0}$,
then $\Phi(i,j) = \chi(i,j) \times \frac{\sqrt{2 \pi i}}{\sqrt{2\pi(i-2j)}}$.
We define
$\chi(z_0) = \chi(i, j)$ where
\begin{alignat*}{1}
\chi(i,j) &=
\sqrt{\frac{j(i-2j)}{z_0 j(i-j)-i(i-2j)}} \\
  &= \sqrt{\frac{\frac{z_0(\rme^{z_0}-1)}{\rme^{z_0}-1-z_0} - 2}
      {z_0\left(\frac{z_0(\rme^{z_0}-1)}{\rme^{z_0}-1-z_0} - 1\right) - \frac{z_0(\rme^{z_0}-1)}{\rme^{z_0}-1-z_0}\left(\frac{z_0(\rme^{z_0}-1)}{\rme^{z_0}-1-z_0}-2\right)}} \\
&= \sqrt{\frac{z_0(\rme^{z_0}-1)-2(\rme^{z_0}-1) + 2z_0}{z_0(\rme^{z_0}-1)+z_0^2\left(\frac{\rme^{z_0}-z_0\rme^{z_0}-1}{\rme^{z_0}-1-z_0}\right)}} \\
  &= \sqrt{\frac{z_0(\rme^{z_0}-1)-2(\rme^{z_0}-1-z_0)}{z_0(\rme^{z_0}-1)-z_0^2\left(\frac{z_0(\rme^{z_0}-1)}{\rme^{z_0}-1-z_0}-1\right)}} \\
  &= \chi(z_0) .
\end{alignat*}
As a result, we have
\begin{alignat}{1}
\Phi(r3cn, \a n) &= \chi(z) \times \frac{\sqrt{2 \pi(3rcn)}}{\sqrt{2 \pi (3rcn-2\a n)}} \label{eq:phi1} \\
\Phi((1-r)3cn, (1-\a)n) &= \chi(y) \times \frac{\sqrt{2 \pi (3(1-r)c n)}}{\sqrt{2 \pi (3(1-r)cn-2(1-\a)n)}} \label{eq:phi2} \\
\Phi(3cn, n) &= \chi(x) \times \frac{\sqrt{2 \pi (3cn)}}{\sqrt{2 \pi (3cn-2n)}} .
\end{alignat}

It is straightforward to confirm that $\lim_{z_0 \to 0} \chi(z_0) = 1$, $\lim_{z_0 \to \infty}\chi(z_0) = 1$,
$\chi(z_0)$ has a minimum at $z_0 \approx 4.2$, and $0.84 < \chi(z_0) < 1$ for all positive $z_0$.
In addition, each of the square root terms in (\ref{eq:phi1}) and (\ref{eq:phi2}) is $\bigth(\sqrt{n})$ because $r \in I_{3m}^+$.
Also, $\Phi(3cn, n)^{-1}$ is $\bigo(1)$.  (It is $\bigth(1)$ if $c$ is $\frac{2}{3} + \bigth(1)$, and it is $\bigth(n^{-1/2})$ if $c$ is $\frac{2}{3} + O(n^{-1})$.
However, $r \in I_{3m}^+$ implies $c$ is $\frac{2}{3} + \bigth(1)$.)
As
a result, we have $\frac{\Phi(r3m,\a n) \Phi((1-r)3m,(1-\a n))}{\Phi(3m,n)} = \bigo(1)$.

Finally, we consider the $\Lambda$ terms of $g$.  Clearly, $\Lambda_{3m,n} > 0$, and in
the proof of Lemma~\ref{lambound} it is implied that $\Lambda_{3m, n}$ equals a constant
independent of $n$ (because $\B$ is a.a.s.\ $\bigth(n)$, $\ex(Y) = \bigth(1)$, and
thus $\pr(Y = 0) = \bigth(1)$).
This proves that $\frac{\Lambda_{3m,n}^{\a,r,t}}{\Lambda_{3m,n}} = \bigo(1)$,
and that completes the proof.
\end{proof}

\subsection{The border regions}\label{sborder}
In this subsection, we handle the values of $\a,r,t$ that are close to the borders of the feasible
region; i.e.  we prove Lemma \ref{lborder}:

\begin{prevlemma}{\ref{lborder}}
Let $c > \frac{2}{3}(1 + 2\z)$. 
The sum of $F(\a,r,t)$ over all feasible $\a,r,t$ not satisfying $\a\in I_n^+, r\in I_{3m}^+,t\in I_m^+$, is $o(4^{2n-2m})$.
\end{prevlemma}

\begin{proof}
Recall that $m=cn$ for a fixed constant $c$, and so we can choose $\z$ to be arbitrarily small in terms of $c$.
Note that if $c < \frac{2}{3}(1 + 2\z)$ then the lemma does not hold because the global maximum at $\a = \frac{1}{4}$,
$r = \frac{1}{4}$, $t=\frac{1}{16}$ has an $r$ coordinate smaller than $\frac{2\a + \z}{3c}$.

In this proof, we will approximate $F(\a,r,t)$ as in (\ref{eap1}), 
except that we must be careful about certain terms.
In particular, when we are too close to the border, some of the terms can be so small that our asymptotic approximations are not sufficiently close;
so instead we will use the following absolute bounds.

The first are:
\begin{alignat}{1}
 i! &\geq \left(\frac{i}{\rme}\right)^i ,
\label{elb1} \\
 i! &< \left(\frac{i}{\rme}\right)^i\sqrt{2 \pi i + 2} ,
\label{eub1} \\
\intertext{and}
\binom{n}{k} &\leq \left(\frac{n \rme}{k}\right)^k .
\label{bub1}
\end{alignat}

The next two bounds are on $S(i,j,2)j!$.  Recall that this is the number of ways to partition $i$ labeled elements into $j$ labeled buckets
such that each bucket has at least 2 elements.  The first bound is easily obtained by omitting the restriction that each bucket contain at least two elements:
\begin{equation}\label{eub2}
 S(i,j,2)j!\leq j^i.
\end{equation}

The second is obtained by first choosing 2 elements for each bucket, and then distributing the remaining elements. Clearly, different choices can yield the same partition, and so this is an overcount:

\begin{equation}\label{eub3}
 S(i,j,2)j!\leq {i\choose 2j} \frac{(2j)!}{2^j}\times j^{i-2j}.
\end{equation}

Now we turn to the border regions. We consider 8 cases, and note that some $(\a,r,t)$ will lie in more than one case.
In the first two cases, we prove that the sum of the triples that satisfy those cases is $\smallo(4^2(n-m))$, 
and this proves the lemma for these cases.

\noindent{\em Case 1:} $\a = 0$ or $\a = 1$.

If $\a = 0$, we must have $r = t = 0$, and $F(0,0,0) = 2^{(2-c)n} 3^{(1-2c)n}$.
If $\a = 1$, we must have $r = t = 1$, and $F(1,1,1) = 4^{n-m}$.

In the cases that follow, we will assume $0 < \a < 1$.

\noindent{\em Case 2:} $\a>(1-\z)$.

This case is easily dealt with, since the number of pairs of assignments which agree on exactly $\a n$
variables is $4^n{n\choose n- \a n}3^{n-\a n}$. So $F(\a,r,t)$ is bounded by the number of such pairs, 
times the probability that the first one is satisfied, which, from bound (\ref{bub1}), is at most:
\[4^{n}\left(\frac{\rme n}{\z n}\right)^{\z n} 3^{\z n} 4^{-m}=4^{n-m}\left(\frac{3 \rme }{\z }\right)^{\z n}<o(4^{2(n-m)}/n^3),\]
for $\z$ sufficiently small.  So the sum of $F(\a,r,t)$ over the $O(n^3)$ triples $(\a,r,t)$ satisfying Case 2 is
$o(4^{2(n-m)})$.

For the remaining cases, we will approximate $F(\a,r,t)$ as in (\ref{eap1}), except that we must be careful about certain terms.
In particular, when we are too close to the border, some of the terms can be so small that our asymptotic approximations are not sufficiently close;
so instead we will use the absolute bounds from above.
In each case below, 
we will note how much this increases the bound over that obtained in (\ref{eap1}).

\vspace{1ex}

\noindent{\em Case 3:} $\a<\z$.

Instead of using the approximation $\binom{n}{\a n} \sim \a^{-\a n} (1-\a)^{-(1-\a)n} \frac{1}{\sqrt{2 \pi n \a (1-\a)}}$,
we use (\ref{bub1}) to get the bound $\binom{n}{\a n} < \left(\frac{\rme}{\a}\right)^{\a n}$.
If we first consider the exponential terms,
this replacement results in an increase over the bound in (\ref{eap1}) by a factor of
\begin{eqnarray*}
\frac{\rme^{\a n} \a^{-\a n}}{\a^{-\a n} (1-\a)^{-(1-\a)n}}
 & = & \rme^{\a n} (1-\a)^{(1-\a)n} \\
 &<& \rme^{\g n}.
\end{eqnarray*}
{So the bound on the exponential terms in
(\ref{eq:f}) is increased by at most a constant $\g > 0$} which can be made arbitrarily small by taking $\z$ to be sufficiently small.
The replacement also removes the term $(2 \pi n \a (1-\a)) ^{-1/2}$ from (\ref{eq:g}). {In the proof of 
Lemma~\ref{lgbound} this term is bounded by 1, so this change does not alter the bound of Lemma~\ref{lgbound}.}

Next we replace the term $S(r3m, \a n, 2)(\a n)!$. If $r$ is small enough that Case 4 below applies, we
make the replacement described in that case.  Otherwise, 
we replace the approximation 
$S(r3m,\alpha n,2)(\alpha n)!
\sim (\rme^z-1-z)^{\alpha n} \left(\frac{3rm}{\rme z}\right)^{3rm}\Phi(r3m,\alpha n)$
with the bound obtained from (\ref{eub2}):  $S(r3m,\alpha n,2)(\alpha n)!<(\a n)^{3rm}$.  
We will consider the exponential terms first.
Note that this replacement
results in an increase over the bound {in (\ref{eap1})} by a factor of
\begin{eqnarray*}
\lefteqn{\frac{(\a n)^{3rm}}{(\rme^z-1-z)^{\alpha n}\left(\frac{3rm}{\rme z}\right)^{3rm}}} \\
&=& \rme^{3rcn} \left(\frac{z\a}{3rc}\right)^{3rcn}(\rme^z-1-z)^{-\alpha n}\\
&=& \rme^{3rcn} \left(\frac{\rme^z - 1 -z}{\rme^z-1}\right)^{3rcn}(\rme^z-1-z)^{-\alpha n}\\
&=& \frac{\rme^{3rcn}}{(\rme^z-1)^{\a n}} \left(\frac{\rme^z - 1 -z}{\rme^z-1}\right)^{3rcn - \a n}\\
&=& \left(\frac{\rme}{(\rme^z-1)^{\frac{\rme^z-1-z}{z(\rme^z-1)}}}\right)^{3rcn} \left(\frac{\rme^z - 1 -z}{\rme^z-1}\right)^{(3rc - \a)n}\\
&<&\rme^{\g n}.
\end{eqnarray*}
{So the bound on the exponential terms in
(\ref{eq:f}) is increased by at most} a constant $\g>0$ which can be made arbitrarily small by taking $\z$ to be sufficiently small.
This follows from the fact that if $r$ is reasonably large so that Case 4 below does not apply, then
as $\z$ and therefore $\a$ tends to 0,
$z$ tends to $\infty$, and $\lim_{z \to \infty}(\rme^z-1)^{\frac{\rme^z-1-z}{z(\rme^z-1)}} = \rme$.

This change also removes the term $\Phi(3rm, \a n)$ from (\ref{eq:g}), and from the proof of 
Lemma~\ref{lgbound} this change increases the bound of Lemma~\ref{lgbound} by at most a constant factor.

\vspace{1ex}

\noindent{\em Case 4:} $r<\frac{2\a +\z}{3c}$.

Rather than using the approximation
\[S(r3m,\alpha n,2)(\alpha n)! \sim (\rme^z-1-z)^{\alpha n}\left(\frac{3rm}{\rme z}\right)^{3rm}\Phi(r3m,\alpha n),\]
we will use the bound obtained from (\ref{eub3}):
\begin{eqnarray*}
S(r3m,\alpha n,2)(\alpha n)! &<& {r3m\choose 2\a n} \frac{(2\a n)!}{2^{\a n}}\times (\a n)^{r3m-2\a n} \\
&<& \frac{(3rm/\rme)^{3rm}}{((3rm-2\a n)/\rme)^{3rm-2\a n}} 2^{-\a n}(\a n)^{3rm-2\a n}\sqrt{2 \pi 3rm + 2}
\end{eqnarray*}
with the last inequality following from bounds (\ref{eub1}) and (\ref{elb1}).
Note that this replacement increases the bound {in (\ref{eap1})} by a factor of
\begin{eqnarray*}
\lefteqn{\frac{\frac{(3rm/\rme)^{3rm}}{((3rm-2\a n)/\rme)^{3rm-2\a n}} 2^{-\a n}(\a n)^{3rm-2\a n}}
{(\rme^z-1-z)^{\alpha n} \left(\frac{3rm}{\rme z}\right)^{3rm}}}\\
&=&\left(\frac{\a n \rme}{3rm-2\a n}\right)^{3rm-2\a n} \frac{z^{3rm}}{(2(\rme^z-1-z))^{\a n}} \\
&<&\rme^{\g n}.
\end{eqnarray*}
{So the bound on the exponential terms in
(\ref{eq:f}) is increased by at most}  a constant $\g>0$ which can be made arbitrarily small by taking $\z$ to be sufficiently small.
The last inequality holds because as $\z$ tends to 0, $3rc$ tends to $2\a$ and $z$ tends to 0,
and $\lim_{z \to 0} \frac{z^2}{2(\rme^z-1-z)} = 1$.

This change also replaces the term $\Phi(3rm, \a n)$ in (\ref{eq:g}) with $\sqrt{2 \pi 3rm + 2}$, and from the proof of 
Lemma~\ref{lgbound} this change increases the bound of Lemma~\ref{lgbound} by a factor of $\bigo(\sqrt{n})$.

\vspace{1ex}

\noindent{\em Case 5:} $r>1-\frac{2(1-\a) +\z}{3c}$.

Rather than using the approximation
\begin{alignat*}{1}
S((1-r)3m,(1-\alpha)n,2)((1-\alpha)n)! \sim&
(\rme^y-1-y)^{(1-\alpha)n} \left(\frac{3(1-r)m}{\rme y}\right)^{3(1-r)m}
  \\ & \times \Phi((1-r)3m,(1-\alpha)n),
\end{alignat*}
we will use the bound obtained from (\ref{eub3}):
\begin{eqnarray*}
\lefteqn{S((1-r)3m,1-\alpha n,2)((1-\alpha) n)!} \\
 &<& {(1-r)3m\choose 2(1-\a) n} \frac{(2(1-\a) n)!}{2^{(1-\a) n}}\times ((1-\a) n)^{(1-r)3m-2(1-\a) n} \\
&<& \frac{(3(1-r)m/\rme)^{3(1-r)m}}{((3(1-r)m-2(1-\a) n)/\rme)^{3(1-r)m-2(1-\a) n}} 2^{-(1-\a) n} \\
& & \times ((1-\a) n)^{3(1-r)m-2(1-\a) n}\sqrt{2 \pi 3(1-r)m + 2}
\end{eqnarray*}
with the last inequality following from bounds (\ref{eub1}) and (\ref{elb1}).
Note that, from the same argument as in Case 4, this replacement increases the bound {in (\ref{eap1})} by a factor of
\begin{eqnarray*}
\lefteqn{\frac{\frac{(3(1-r)m/\rme)^{3(1-r)m}}{((3(1-r)m-2(1-\a) n)/\rme)^{3(1-r)m-2(1-\a) n}} 2^{-(1-\a) n}((1-\a) n)^{3(1-r)m-2(1-\a) n}}
{(\rme^y-1-y)^{(1-\alpha) n} \left(\frac{3(1-r)m}{\rme y}\right)^{3(1-r)m}}}\\
&=&\left(\frac{(1-\a) n \rme}{3(1-r)m-2(1-\a) n}\right)^{3(1-r)m-2(1-\a) n} \frac{y^{3(1-r)m}}{(2(\rme^y-1-y))^{(1-\a) n}} \\
&<&\rme^{\g n}.
\end{eqnarray*}
{So the bound on the exponential terms in
(\ref{eq:f}) is increased by at most}  a constant $\g>0$ which can be made arbitrarily small by taking $\z$ to be sufficiently small.

This change also replaces the term $\Phi(3(1-r)m, (1-\a) n)$ in (\ref{eq:g}) with $\sqrt{2 \pi 3(1-r)m +2}$, and from the proof of 
Lemma~\ref{lgbound} this change increases the bound of Lemma~\ref{lgbound} by a factor of $\bigo(\sqrt{n})$.

\vspace{1ex}

\noindent{\em Case 6:} $t<\z$.

To obtain (\ref{eap1}), we applied Stirling's formula and approximated $(tm)!$ by $\left(\frac{tm}{\rme}\right)^{tm}\sqrt{2\pi tm}$.
Instead, we will apply (\ref{elb1}) and replace it with $\left(\frac{tm}{\rme}\right)^{tm}$. This yields an upper bound on $F(\a,r,t)$
as $(tm)!$ appears in the denominator.  
This replacement does not change the exponential terms of $F(\a,r,t)$, but it does remove the term $\sqrt{2 \pi tm}$ from (\ref{eq:g}),
and from the proof of Lemma~\ref{lgbound} this change increases the bound of Lemma~\ref{lgbound} by at most a constant factor.

\vspace{1ex}

\noindent{\em Case 7:} $t<\frac{3r-1}{2}+\z$.

As in Case 6, we will replace the $(m-3rm+2tm)!$ term with
$\left(\frac{m-3rm+2tm}{\rme}\right)^{m-3rm+2tm}$.  As in Case 6,
this replacement does not change the exponential terms of $F(\a,r,t)$, but it does remove the term $\sqrt{2 \pi m(1-3r+2t)}$ from (\ref{eq:g}),
and from the proof of Lemma~\ref{lgbound} this change increases the bound of Lemma~\ref{lgbound} by at most a constant factor.

\vspace{1ex}

\noindent{\em Case 8:} $t>r-\z$.

As in Cases 6 and 7, we will replace the $(3rm-3tm)!$ term with
$\left(\frac{3rm-3tm}{\rme}\right)^{3rm-3tm}$. 
This replacement does not change the exponential terms of $F(\a,r,t)$, but it does remove the term $\sqrt{2 \pi m(3r-3t)}$ from (\ref{eq:g}),
and from the proof of Lemma~\ref{lgbound} this change increases the bound of Lemma~\ref{lgbound} by at most a constant factor.

If $(\a,r,t)$ satisfy Cases 1 or 2, then the lemma holds.  If they satisfy any of Cases 3-8, then we obtain the upperbound
\[F(\a,r,t)\leq (1+o(1)) g^*(\a,r,t)\rme^{f^*(\a,r,t)n},\]
where $g^*(\a,r,t)$ and $f^*(\a,r,t)$ are the result of applying the changes of Cases 3-8 to 
(\ref{eq:g}) and (\ref{eq:f}).

Note that $g^*(\a,r,t) \leq \nu \times \bigo(n) < \nu \times \rme^{\gamma n}$ where $\nu$ is the constant from Lemma~\ref{lgbound},
and note that $f^*(\a,r,t)\leq f(\a,r,t)+3\g$ (to achieve this extreme bound, $(\a,r,t)$ would have to satisfy
all Cases 3-5).
Thus we have
\[F(\a,r,t)\leq (1+o(1)) \nu\times\rme^{\g n}\rme^{(f(\a,r,t)+3\g)n}=(1+o(1)) \rme^{(f(\a,r,t)+4\g)n}.\]

{Lemma \ref{lem:fmax} implies that if $\z$ is small enough to bound the borders of $ I_n^+\times I_{3m}^+ \times I_m^+$ away from the unique global maximum point $(\inv{4},\inv{4},\inv{16})$ } then
there is some $x>0$, independent of $\z$, such that $f(\a,r,t) < 2(1-c)\ln 4 -x$ for any
feasible $\a,r,t$ not satisfying $\a\in I_n^+, r\in I_{3m}^+,t\in I_m^+$.
We take
$\z$ small enough to yield $\g=\inv{8}x$, thus obtaining
\[F(\a,r,t)<(1+o(1)) \rme^{(2(1-c)\ln 4-\hf x)n},\]
and we insure that $\z$ small enough so that $\frac{2}{3}(1+2\z) < 0.818469\ldots$, 
the threshold for a non-empty 2-core (see Lemma~\ref{f1}), {and thus we have $c>\frac{2}{3}(1+2\z) $.}

We need to be careful about one thing:  if $(\a,r,t)$ is extremal - i.e. if
$\a=0$, $\a=1$, $r=\frac{2\a}{3c}$, $r=1-\frac{2(1-\a)}{3c}$, $t=0$, $t=\frac{3r-1}{2}$ or $t=r$ - then $f(\a,r,t)$ 
and $g(\a,r,t)$
are undefined.  In any such case, we obtain $F(\a,r,t)\leq (1+o(1)) g^*(\a,r,t)\rme^{f^*(\a,r,t)n}$
where $g^*(\a,r,t)<\lim_{(\a',r',t')\rightarrow(\a,r,t)}g(\a',r',t')\times\rme^{\g n}$ and
$f^*(\a,r,t)\leq\lim_{(\a',r',t')\rightarrow(\a,r,t)} f(\a',r',t')+3\g$.  So again, Lemmas  \ref{lem:fmax} and \ref{lgbound} imply that 
\[{F(\a,r,t)<(1+o(1)) \rme^{(2(1-c)\ln 4-\hf x)n}<4^{2n-2m-\inv{4}xn}.}\]

The asymptotics in this bound come from our approximations of factorial terms and of $S(i,j,2)$ terms.  In all
cases, we either use approximations that hold uniformly over the given domain or we replace the
approximations by bounds that hold over the domain, and so the asymptotic approximations and upper
bounds hold uniformly.  Thus we can sum them over the $O(n^3)$ triples and obtain the lemma.
\end{proof}

\subsection{An Approximation for Generalized Stirling
Numbers of the Second Kind}
\label{sec:Stirling}

\begin{lemma}
\label{lemma:Stirling}
Let
\[ \G(i,j) = \frac{1}{j!} \left( \frac{i}{z_0\rme} \right) ^{i}
(\rme^{z_0}-1-z_0)^j \sqrt{\frac{ij}{z_0j(i-j)-i(i-2j)}} \]
where $z_0$ is the positive real solution of the equation
\[\frac{j}{i}z_0 = \frac{\rme^{z_0}-1-z_0}{\rme^{z_0}-1}.\]
Then
\[S(i, j, 2) \sim \G(i,j) , \]
and the approximation holds uniformly as $i \to \infty$ for all $\d_1 i < j < \frac{i}{2}(1-\d_2)$ where $\d_1$ and $\d_2$ are {any} positive
constants.
\end{lemma}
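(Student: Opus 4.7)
The plan is to use the generating function approach and the saddle-point method. Think of $S(i,j,2)j!$ as the number of surjections from $[i]$ onto $j$ labeled buckets with every bucket of size $\geq 2$; its exponential generating function in $i$ is $(e^z-1-z)^j$, so
\[ S(i,j,2)\,j! \;=\; i!\,[z^i](e^z-1-z)^j \;=\; \frac{i!}{2\pi \mathrm{i}}\oint\frac{(e^z-1-z)^j}{z^{i+1}}\,dz. \]
I would parametrize the contour as $z=z_0 e^{\mathrm{i}\theta}$ with $z_0>0$ chosen to kill the linear term in $\theta$. Writing $h(z)=(e^z-1-z)^j$ and $g(z)=zh'(z)/h(z)=\dfrac{jz(e^z-1)}{e^z-1-z}$, the saddle-point equation $g(z_0)=i$ is exactly the relation $\tfrac{j}{i}z_0=\tfrac{e^{z_0}-1-z_0}{e^{z_0}-1}$ in the statement. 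Standard saddle-point asymptotics then give
\[ [z^i]h(z)\;\sim\;\frac{(e^{z_0}-1-z_0)^j}{z_0^i\,\sqrt{2\pi\,z_0 g'(z_0)}}, \]
and combining with Stirling's formula $i!\sim(i/e)^i\sqrt{2\pi i}$ and dividing by $j!$ yields the claimed form up to identifying the radicand.

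The key algebraic check is that $z_0 g'(z_0)=\bigl(z_0 j(i-j)-i(i-2j)\bigr)/j$. I would derive this by first rewriting the saddle-point equation as $\tfrac{z_0}{e^{z_0}-1}=\tfrac{i-jz_0}{i}$, whence $\tfrac{e^{z_0}}{e^{z_0}-1}=\tfrac{i+z_0(i-j)}{iz_0}$, so that $\tfrac{iz_0 e^{z_0}}{e^{z_0}-1}=i+z_0(i-j)$ and $\tfrac{iz_0(e^{z_0}-1)}{e^{z_0}-1-z_0}=\tfrac{i^2}{j}$. Differentiating $g$ by the quotient rule and substituting these two identities collapses the expression to exactly the required form, and in particular confirms positivity of the second moment in the saddle-point direction.

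The main obstacle is uniformity for $i\to\infty$ with $\delta_1 i<j<\tfrac{i}{2}(1-\delta_2)$. The ratio $j/i$ controls $z_0$ via a smooth monotone bijection: as $j/i\downarrow \delta_1$, $z_0$ stays bounded above by some $z_0^{\max}(\delta_1)<\infty$; as $j/i\uparrow \tfrac12(1-\delta_2)$, $z_0$ stays bounded below by some $z_0^{\min}(\delta_2)>0$ (since $z_0\to 0$ corresponds to $j/i\to 1/2$). Thus $z_0$ lies in a compact subinterval of $(0,\infty)$, and both $z_0 g'(z_0)$ and $h(z_0)$ are uniformly bounded away from degeneracy; this will let me make the tail estimate uniform. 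Concretely, I would split the contour integral into a central arc $|\theta|\le n^{-2/5}$ on which a Taylor expansion of $\ln h(z_0 e^{\mathrm{i}\theta})-ii\theta$ yields the Gaussian main term with a uniformly controlled cubic remainder, and a tail arc on which I bound $|h(z_0 e^{\mathrm{i}\theta})|/h(z_0)$ uniformly away from $1$ using the strict convexity of $\theta\mapsto \Re\ln h(z_0 e^{\mathrm{i}\theta})$, with a uniform strict bound coming from the compactness of the admissible $z_0$ range.

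Alternatively, and more briefly, I would simply cite Hennecart~\cite{hen94} for the uniformity, since the lemma is an instance of the general asymptotics for partition counts with forbidden block sizes proved there, and indicate that the explicit $z_0$, prefactor, and radicand obtained above match those specializations verbatim.
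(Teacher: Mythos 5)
Your main argument takes a genuinely different route from the paper. The paper does not derive the asymptotics from scratch: it quotes Hennecart's closed-form approximation for the generalized Stirling number $S(i,j,r)$, specializes to $r=2$, carries out the same simplification of $\phi''(z_0)$ that you perform for $z_0 g'(z_0)$, and then applies Stirling's formula to $i!$ and $(i-2j)!$ to arrive at $\G(i,j)$. The paper's entire uniformity story is outsourced: it observes that \cite{hen94} asserts but does not prove uniformity, traces the uniformity of the $r=1$ Temme approximation through \cite{che-ric-tem00} back to the error bounds of Moser--Wyman \cite{mos-wym58}, and then cites a forthcoming companion note \cite{con} that modifies those error bounds to the $r=2$ case. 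Your proposal instead runs the Cauchy-integral/saddle-point machinery directly on $(e^z-1-z)^j/z^{i+1}$: your saddle-point equation and your identity $z_0 g'(z_0) = \bigl(z_0 j(i-j)-i(i-2j)\bigr)/j$ are correct (note $g(z) = z\phi'(z)+i$ in the paper's notation, so $z_0 g'(z_0) = z_0^2\phi''(z_0)$, matching the paper's computation), and your Stirling step recovers $\G(i,j)$ exactly. Your uniformity argument via compactness of the admissible $z_0$ range (using the monotone bijection between $j/i$ and $z_0$, with the constraints $\d_1 i < j < \tfrac{i}{2}(1-\d_2)$ pinning $z_0$ into a compact subinterval of $(0,\infty)$) and a central-arc/tail-arc split is the classical Moser--Wyman route and, carried out carefully, makes the lemma self-contained without appeal to \cite{hen94} or \cite{con}. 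That is a real gain: it avoids reliance on an unpublished reference for the uniformity.

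One caution: your fallback suggestion to ``simply cite Hennecart for the uniformity'' would not work as stated. The paper itself remarks that \cite{hen94} states but does not prove uniformity of the approximation, which is precisely why it reaches for \cite{che-ric-tem00}, \cite{mos-wym58}, and \cite{con}. So if you abandon the from-scratch saddle-point analysis, you would need to reproduce that chain of citations (or prove uniformity yourself), not merely invoke Hennecart. Your primary derivation is the stronger contribution; drop the ``more briefly'' alternative or replace it with the full citation chain.
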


\begin{proof}
Hennecart, in \cite{hen94}, extends the approximation by Temme \cite{tem93} for Stirling 
numbers of the second kind,
$S(i,j,1)$ in the notation of this paper,
to give the following approximation
for the generalized Stirling number of the second kind $S(i,j,r)$.
\[ S(i,j,r) \sim \frac{i!}{j!(i-jr)!}
    \left(\frac{i-jr}{\rme}\right)^{i-jr}
    \frac{B^j(z_0,r)}{z_0^{i+1}}
    \sqrt{\frac{jt_0}{\phi^{''}(z_0)}} \]
where $B(z,r) = e^z - \sum_{l=0}^{r-1} \frac{z^l}{l!}$,
$\phi(z) = -i \ln z + j \ln B(z,r)$,
$t_0 = \frac{i-jr}{j}$, and $z_0$ is the positive real solution
of the equation $z_0 \frac{B'(z_0,r)}{B(z_0,r)} = \frac{i}{j}$.

If we let $r=2$, then $B(z,2) = \rme^z-1-z$, and
\begin{equation*}
\phi^{''}(z) = \frac{i}{z^2} + j\frac{\rme^z(\rme^z-1-z) - (\rme^z-1)^2}{(\rme^z-1-z)^2}.
\end{equation*}
Since, $z_0 \frac{\rme^{z_0}-1}{\rme^{z_0}-1-z_0} = \frac{i}{j}$, we have
\begin{alignat*}{1}
\rme^{z_0} &= \frac{i+iz_0-jz_0}{i-jz_0} \\
\rme^{z_0} - 1 &= \frac{iz_0}{i-jz_0} \\
\rme^{z_0}- 1 - z_0 &= \frac{jz_0^2}{i-jz_0},
\end{alignat*}
and we can simplify $\phi^{''}(z_0)$.
\begin{alignat*}{1}
\phi^{''}(z_0) &= \frac{i}{z_0^2} +
    j\frac{(i+iz_0-jz_0)jz_0^2 - (iz_0)^2}{(jz_0^2)^2} \\
 &= \frac{1}{jz_0^2} (z_0j(i-j)-i(i-2j)) .
\end{alignat*}
{This yields
\begin{alignat}{1}
S(i,j,2) &\sim \frac{i!}{j!(i-2j)!}
    \left(\frac{i-2j}{\rme}\right)^{i-2j}
    \frac{(\rme^{z_0}-1-z_0)^j}{z_0^{i+1}} \notag \\
 & \qquad \qquad \times
    \sqrt{\frac{jz_0^2(i-2j)}{z_0j(i-j)-i(i-2j)}} \label{eq:hen} .
\end{alignat}}

Unfortunately, \cite{hen94} states but does not prove that this approximation is uniform for the desired parameter
values.  As noted above, the \cite{hen94} approximation for $S(i,j,r)$ was based on the \cite{tem93} approximation for
$S(i,j,1)$, and the uniformity of the latter approximation was proven in \cite{che-ric-tem00}.  The \cite{che-ric-tem00}
proof takes the error bounds for integer $i$ and $j$ of a slightly different approximation of $S(i,j,1)$ by 
Moser and Wyman~\cite{mos-wym58} and proves that these bounds also apply to the approximation of \cite{tem93}.
In addition, \cite{che-ric-tem00} extend the proof of
\cite{mos-wym58} to prove that the \cite{tem93} approximation is also uniform for real and complex parameters.
For the results of this paper, we require (\ref{eq:hen}) to be uniform for integer $i$ and $j$ as $i \to \infty$ 
and for all $\d_1 i < j \leq \frac{i}{2}(1-\d_2)$ 
where $\d_1$ and $d_2$ are any positive constants.
Sections 3 and 4 of \cite{mos-wym58} prove the error bounds for their approximation
of $S(i,j,1)$ for integer parameters with $0 < j < i$ and $\lim_{i \to \infty} (i-j) = \infty$.
In \cite{con}, we provide a modification of these error bounds to $S(i,j,2)$, and 
this proves the desired uniformity of (\ref{eq:hen}).

{Applying Stirling's approximation to the $i!$ and $(i-2j)$! terms in (\ref{eq:hen}),}
we obtain
\begin{alignat*}{1}
S(i,j,2) &\sim \frac{1}{j!}
        \left[ \left(\frac{i}{\rme}\right)^i\left(\frac{\rme}{i-2j}\right)^{i-2j}
        \frac{\sqrt{2 \pi i}}{\sqrt{2 \pi (i-2j)}} \right]
    \left(\frac{i-2j}{\rme}\right)^{i-2j} \\
 & \qquad \qquad \times
    \frac{(\rme^{z_0}-1-z_0)^j}{z_0^{i}}
    \sqrt{\frac{j(i-2j)}{z_0j(i-j)-i(i-2j)}} \\
 &= \frac{1}{j!}
        \left(\frac{i}{z_0\rme}\right)^i
    (\rme^{z_0}-1-z_0)^j
    \sqrt{\frac{ij}{z_0j(i-j)-i(i-2j)}} \\
 &= \G(i, j).
\end{alignat*}
{The error terms from applying Stirling's approximation to $i!$ and $(i-2j)!$ are 
$\bigo(i^{-1})$ and $\bigo((i-2j)^{-1})$, respectively, and these approximations are asymptotically
tight as $i \to \infty$ with $i-2j > \d_2 i$ for $\d_2$ a positive constant.  
Therefore, we can conclude that the approximation of
Lemma~\ref{lemma:Stirling} is uniform for all $\d_1 i < j < \frac{i}{2}(1-\d_2)$ as $i \to \infty$ with
$\d_1$ and $\d_2$ any positive constants.}
\end{proof}

\section{Conclusion}

In this paper, we present $(3,4)$-UE-CSP,
the first known constraint satisfaction problem that has a random model with
an exact satisfiability threshold as well as all of the following properties of random $k$-SAT,
$k \geq 3$:  The problem is NP-complete; the problem has constant size clauses and domain;
the satisfiability threshold for the random model occurs when there are a linear number of
clauses; a random instance a.a.s.\ has exponential resolution complexity; and there is no
known polynomial time algorithm that, w.u.p.p., will find a solution to an instance drawn
from close to the satisfiability threshold.
In forming this CSP, we define UE-CSP as a general model of universal uniquely extendible CSPs
that naturally generalize XOR-SAT.

Interestingly, the satisfiability threshold for random $(3,4)$-UE-CSP
is exactly the same as for random 3-XOR-SAT.   As noted above, 3-XOR-SAT is
the same problem as $(3,2)$-UE-CSP.  This observation leads to the question as to
whether the location of the satisfiability threshold for random $(k,d)$-UE-CSP
depends on the domain size $d$.  We conjecture that it does not; i.e. that for every $k,d\geq 2$,
the satisfiability threshold for random $(k,d)$-CSP is the same as for $k$-XOR-SAT, a
threshold that was determined in \cite{dub-man02}.
\cite{con08} presents a proof for this conjecture, subject to a hypothesis that is analogous to
Lemma~\ref{lem:fmax}.  I.e., for any $k,d$, the conjecture holds so long as a natural local maximum
of a particular function is the unique global maximum.  With sufficient labor, this hypothesis
could probably be proven for specific values of $k,d$ using computer-aided interval arithmetic
analysis along the lines of that in our appendix.  But proving that it holds for all $k,d$ would
likely require a different approach.

\section{Acknowledgment}
We would like to thank two anonymous referees for their helpful suggestions.

\appendix

\section{The Proof that $f$ has a unique global maximum.}
\label{sec:fmax}
In this appendix, we present the computer-aided proof of Lemma~\ref{lem:fmax}, which we restate below.
Let
\begin{eqnarray*}
f(\alpha , r,t) & = & \ln 4 -c \ln 4 + (1-\a)\ln 3 - c(2+t-3r)\ln 3 \\
 & & \mbox{} +c(1-3r+2t)\ln 2 -\alpha \ln \alpha -(1-\alpha)\ln (1-\alpha) \\
 & & \mbox{} -c(1-3r+2t) \ln (1-3r+2t) -c(3r-3t) \ln (r-t) -ct \ln t \\
 & & \mbox{} + r3c \ln r + (1-r)3c \ln (1-r) + \alpha \ln (\rme^z-1-z)-r3c \ln z \\
 & & \mbox{} + (1-\alpha)\ln (\rme^y-1-y) - (1-r)3c \ln y - \ln (\rme^x-1-x) \\
 & & \mbox{} +3c \ln x,
\end{eqnarray*}
where $x,y,z > 0$ are defined as
\begin{equation}
\label{eq:kimplicitsMH}
\frac{\rme^x-1-x}{\rme^x-1}-\frac{x}{3c} =
\frac{\rme^y-1-y}{\rme^y-1}-\frac{y(1-\alpha)}{3c(1-r)} =
\frac{\rme^z-1-z}{\rme^z-1}-\frac{z\alpha}{3cr} = 0 .
\end{equation}

\vspace{.3ex}

\begin{prevlemma}{\ref{lem:fmax}}
For all $\frac{2}{3} \leq c < 1$, the unique global maximum of
$f(\alpha, r, t)$ in the domain $0 < \alpha < 1$, $\frac{2\a}{3c} < r < 1-\frac{2(1-\a)}{3c}$,
and $\max \{0, \frac{3r-1}{2}\} < t < r$ occurs
at $f(\frac{1}{4}, \frac{1}{4}, \frac{1}{16})=2(1-c)\ln 4$.  Furthermore, 
for all $.67 \leq c \leq 1$, there exists a constant $b = b(c)$ such that
 at every extremal $(\a,r,t)$,
$f$ approaches a limit that is less than $2(1-c)\ln 4 - b$.
\end{prevlemma}

\begin{proof}
Recall from Observation~\ref{obs:f} that $f$ is continuous and differentiable throughout the domain
$0 < \alpha < 1$, $\frac{2\a}{3c} < r < 1-\frac{2(1-\a)}{3c}$, $\max \{0, \frac{3r-1}{2}\} < t < r$.
In Section~\ref{sec:3-4-thresh}, we proved that (i) $f$ has one local maximum when $z=y$,
(ii) the point $\alpha = \frac{1}{4}$, $r = \frac{1}{4}$,
$t = \frac{1}{16}$ is this local maximum, and (iii)
\begin{equation}
\label{eq:fmax}
f\left(\frac{1}{4}, \frac{1}{4}, \frac{1}{16}\right) = 2(1-c)\ln 4 .
\end{equation}
Lemma~\ref{lem:interior} in Section~\ref{sec:eqinterior} gives an equation in terms of $z$ and $y$ for all
points where the partial first derivatives of $f$ are 0.
Lemma~\ref{lem:nointeriormax} of Section~\ref{sec:interior} proves that at each point in the domain of $f$ which
satisfies both the equation of Lemma~\ref{lem:interior} and $z \neq y$, the value of $f$ at that point is smaller than
$2(1-c)\ln 4$.
To complete the proof,
Lemma~\ref{lem:noboundarymax} of Section~\ref{sec:boundary} proves that at every point $(\a_1,r_1,t_1)$ on the
boundary of the domain of $f$ (i.e. for every extremal $(\a_1,r_1,t_1)$), $\lim_{(\a,r,t)\rightarrow (\a_1,r_1,t_1)}f(\a,r,t)$ exists and is
less than $2(1-c)\ln 4 - b$ where $b > 0$ is a constant independent of $\a_1$, $r_1$, and $t_1$, but it may
be dependent of $c$.
\end{proof}

\subsection{An Equation for All Stationary Points of $f$}
\label{sec:eqinterior}

We start by relating $z$ and $y$ at all stationary points.

\begin{lemma}
\label{lem:interior}
Every stationary point of the function $f$ of Lemma~\ref{lem:fmax} satisfies
the equation
\begin{equation}
\label{eq:interior}
\frac{z}{y} = \frac{3\left(\rme^y-1\right)^2 + \left(\rme^z-1\right)^2}
                   {2\left(\rme^y-1\right)^2+2\left(\rme^y-1\right)\left(\rme^z-1\right)}.
\end{equation}
Furthermore, for each $y > 0$, there are at exactly two values for $z$ that satisfy
(\ref{eq:interior}), one when
$z = y$ and one when $z > y$.
\end{lemma}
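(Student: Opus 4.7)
The plan is to derive (\ref{eq:interior}) by combining the three stationary conditions of $f$ with the implicit equations (\ref{eq:implicits}), and then to count its solutions via a strict convexity argument.

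First I would use (\ref{eq:implicits}) to rewrite $\rme^y-1-y$ and $\rme^z-1-z$ as $y(1-\alpha)(\rme^y-1)/[3c(1-r)]$ and $z\alpha(\rme^z-1)/[3cr]$ respectively; substituting into the stationary condition (\ref{eq:dalpha}) causes the parameters $c$ and $\alpha$ to cancel, leaving
\[ \frac{1-r}{r} \;=\; \frac{3y(\rme^y-1)}{z(\rme^z-1)}. \]
Equating with the right--hand side of (\ref{eq:dr}) yields $(\rme^y-1)/(\rme^z-1) = (1-3r+2t)/[2(r-t)]$. With the shorthand $u=\rme^y-1$, $v=\rme^z-1$, $s=r-t$, $q=1-3r+2t$, this becomes $q=2us/v$, and plugging into (\ref{eq:dt}), $s^3 = \tfrac{3t}{4}q^2$, gives $t = sv^2/(3u^2)$ and hence $r = s+t = s(3u^2+v^2)/(3u^2)$. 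Using the tautology $r+2s+q = 1$ then produces the expression $s = 3u^2v/(6u^3+9u^2v+v^3)$, while the displayed identity above gives $r = zv/(zv+3yu)$ and hence a second expression $s = 3u^2 zv/[(3u^2+v^2)(zv+3yu)]$. Equating the two, clearing denominators and cancelling the common factor $zv^3$ yields $2uz(u+v) = y(3u^2+v^2)$, which is precisely (\ref{eq:interior}).

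For the second claim, fix $y>0$, set $u=\rme^y-1$ and $\theta = (\rme^z-1)/u$, so that (\ref{eq:interior}) reads $z/y = K(\theta)$ with $K(\theta):=(3+\theta^2)/[2(1+\theta)]$. The identity $(3+\theta^2)-2(1+\theta) = (\theta-1)^2 \ge 0$ shows $K(\theta)\ge 1$, with equality iff $\theta=1$; any solution with $\theta<1$ would force $z=yK(\theta)\ge y$ and simultaneously $z<y$, which is impossible. So every solution has $\theta\ge 1$, and $\theta=1$ gives exactly $z=y$. Substituting $z = \ln(u\theta+1)$ reduces the rest to counting roots $\theta>1$ of
\[ F(\theta) \;:=\; yK(\theta) - \ln(u\theta+1). \]
A direct calculation gives $K''(\theta) = 4/(1+\theta)^3 > 0$, and since $-\ln(u\theta+1)$ is also convex, $F$ is strictly convex on $(0,\infty)$. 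Since $K'(1)=0$, one has $F(1) = y-\ln(u+1) = 0$ and $F'(1) = -u/(u+1) < 0$; combined with $K(\theta)\sim\theta/2$ at infinity giving $F(\theta)\to\infty$, this forces exactly one additional zero in $(1,\infty)$.

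The main obstacle is the algebraic manipulation in the first paragraph: one must simultaneously juggle three stationary conditions and three implicit equations and ensure every parameter besides $y$ and $z$ drops out. The uniqueness argument is by contrast very clean once one notices the unexpectedly simple identity $K''(\theta) = 4/(1+\theta)^3$, which makes strict convexity of $F$ immediate.
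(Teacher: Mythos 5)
Your derivation of (\ref{eq:interior}) follows essentially the same elimination as the paper's: the $\partial_\alpha$ condition combined with (\ref{eq:implicits}) cancels $c$ and $\alpha$, the $\partial_r$ and $\partial_t$ conditions then let one eliminate $r$ and $t$; your use of the tautology $r+2s+q=1$ is a cosmetic reorganization of the same variable elimination, and you arrive at the identical equation. The solution-counting part, by contrast, is a genuinely cleaner route than the paper's. The paper differentiates the right-hand side of (\ref{eq:interior}) directly with respect to $z$ and must extract sign information from the bulky rational--exponential expressions (\ref{firstderivMH}) and (\ref{secondderivMH}). Your change of variable $\theta = (\rme^z-1)/(\rme^y-1)$ turns the right-hand side into the rational function $K(\theta)=(3+\theta^2)/[2(1+\theta)]$, whose convexity is the one-line computation $K''(\theta)=4/(1+\theta)^3>0$; together with convexity of $-\ln(u\theta+1)$ this gives strict convexity of $F(\theta)=yK(\theta)-\ln(u\theta+1)$, and the count follows immediately from $F(1)=0$, $F'(1)=-u/(u+1)<0$, and $F(\theta)\to\infty$. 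Both arguments encode the same convexity observation, but yours is more transparent because the calculus is done on a rational function rather than on the exponential expression the paper manipulates. One trivial quibble on your algebra: after clearing denominators, what cancels is a multiplicative factor $3u^2 v$ followed by an additive term $zv^3$, not ``a common factor $zv^3$,'' though the computation still lands in the right place.
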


\begin{proof}
To find all stationary points for $f$ in the interior of the domain,
we use the partial derivatives of $f$, calculated
in Section~\ref{sec:3-4-thresh}, and from these we know that any stationary point must satisfy
all of the following equations.
\begin{alignat}{1}
\frac{1-\a}{\a} &= 3\frac{\rme^y-1-y}{\rme^z-1-z} \label{eq:maxk-alphaMH} \\
\frac{1-r}{r} &= \frac{y}{z} \cdot \frac{3}{2} \cdot \frac{(1-3r+2t)}{(r-t)}
						\label{eq:maxk-rMH} \\
\frac{(r-t)^3}{(1-3r+2t)^2} &= \frac{3t}{4} \label{eq:maxk-tiMH}
\end{alignat}

From (\ref{eq:kimplicitsMH}), we have
\[\frac{\rme^y-1-y}{\rme^z-1-z}
                  = \frac{y}{z} \cdot \frac{(1-\a)}{\a} \cdot \frac{r}{(1-r)} \cdot \frac{(\rme^y-1)}{(\rme^z-1)} .
\]
Plugging in (\ref{eq:maxk-alphaMH}), gives
\begin{equation}
\label{eq:gamma1}
3\frac{y}{z} \cdot \frac{r}{(1-r)} = \frac{\rme^z-1}{\rme^y-1},
\end{equation}
and combining (\ref{eq:gamma1}) with (\ref{eq:maxk-rMH}) gives
\begin{equation}
\label{eq:gamma2}
\frac{2(r-t)}{1-3r+2t} = \frac{\rme^z-1}{\rme^y-1}.
\end{equation}
Combining (\ref{eq:maxk-tiMH}) with (\ref{eq:gamma2}) gives
\begin{equation}
\label{eq:start}
\left(\frac{\rme^z-1}{\rme^y-1}\right)^2 (r-t) = 3t.
\end{equation}
Solving (\ref{eq:maxk-rMH}) for $t$ yields
\[t = \frac{2zr(1-r) -3yr+9yr^2}{6yr+2z(1-r)},\]
and plugging this value for $t$ into (\ref{eq:start}) gives
\begin{equation}
\label{eq:step2}
\left(\frac{\rme^z-1}{\rme^y-1}\right)^2y(1-r) = 2z(1-r)-3y+9yr.
\end{equation}
Solving (\ref{eq:gamma1}) for $r$ gives
\begin{equation}
\label{eq:rval}
r = \frac{\left(\rme^z-1\right)z}{3\left(\rme^y-1\right)y + \left(\rme^z-1\right)z},
\end{equation}
and substituting (\ref{eq:rval}) for $r$ in (\ref{eq:step2}) gives
\begin{equation*}
\frac{z}{y} = \frac{3\left(\rme^y-1\right)^2 + \left(\rme^z-1\right)^2}
                   {2\left(\rme^y-1\right)^2+2\left(\rme^y-1\right)\left(\rme^z-1\right)}.
\end{equation*}

This establishes (\ref{eq:interior}). To prove the rest of the lemma, we will look at cases.

First, we can rule out the case when
$0< z < y$.  If $0 < z < y$ then the left hand side of (\ref{eq:interior}) is smaller than
1, but it is straightforward to see that the right hand
side is equal to $1 + \frac{(\rme^y-\rme^z)^2}{2(\rme^y-1)^2 + 2(\rme^y-1)(\rme^z-1)}$ and so is
always at least 1, with equality when $z=y$.

Now we prove that for each $y>0$, there is exactly one solution with $z > y$.
We will do this by fixing $y$ and determining how each side of equation (\ref{eq:interior})
changes as $z$ increases.
We will calculate the first and second derivatives, {with respect to $z$,} of the right hand side
of (\ref{eq:interior}), and we will note that the first derivative is 0 when $z=y$, both derivatives are positive
when $z > y$, and the first derivative grows unbounded as $z$ tends to infinity and $y$ is fixed.  
That implies {that, for any fixed $y$,} the right hand side of (\ref{eq:interior}) will cross $\frac{z}{y}$
exactly once when $z > y$.

The first derivative of the RHS of (\ref{eq:interior}) with respect to $z$ is
\begin{equation*}
\frac{2\rme^z(\rme^y-1)\left((2(\rme^y-1) + 2(\rme^z-1))(\rme^z-1) -(3(\rme^y-1)^2+(\rme^z-1)^2)\right)}
     {\left(2\left(\rme^y-1\right)^2+2\left(\rme^y-1\right)\left(\rme^z-1\right)\right)^2}
\end{equation*}
which can be rewritten as
\begin{equation}
\label{firstderivMH}
\frac{\rme^z(\rme^z-\rme^y)(\rme^z-1+3(\rme^y-1))}
     {2(\rme^y-1)(\rme^z+\rme^y-2)^2}.
\end{equation}
It is straightforward to see that (\ref{firstderivMH}) is negative when $0 < z < y$,
is positive when $z > y > 0$, and grows unbounded as $z > 0$ increases.

The second derivative  of the RHS of (\ref{eq:interior}) with respect to $z$ is
\begin{multline*}
  \frac{\rme^z}{2(\rme^y-1)(\rme^z+\rme^y-2)^3}\left((\rme^z+\rme^y-2)\left[(\rme^z-\rme^y)(\rme^z-1+3(\rme^y-1))  \right.\right. \\
     \left.\left. \mbox{} + \rme^z(\rme^z-1+3(\rme^y-1)) + \rme^z(\rme^z-\rme^y)\right]\right. \\
     \left. \mbox{} - 2\rme^z(\rme^z-\rme^y)(\rme^z-1+3(\rme^y-1))\right)
\end{multline*}
which simplifies to
\begin{equation}
  \frac{\rme^z\left((\rme^z+\rme^y-2)(\rme^z-\rme^y)(\rme^z-1+3(\rme^y-1)) + 8\rme^z(\rme^y-1)^2\right)}
       {2(\rme^y-1)(\rme^z+\rme^y-2)}.
\label{secondderivMH}
\end{equation}
It is straightforward to verify that (\ref{secondderivMH}) is positive when $z > y > 0$.
\end{proof}

\subsection{Interval Analysis}

The technique of {\em interval analysis}
\cite{moo66,han-wal03} is a method to rigorously bound the range of values that a function takes over an interval.  If the endpoints of the interval are also rigorously determined,
then we have a proven upper and lower bound on that range.
Interval analysis is very useful with numerical algorithms because we can include all
errors of floating point approximation into the interval bounds.
Following standard interval analysis notation, we denote an interval with a capital letter;
eg., $X = \left[\:\underline{x}, \overline{x}\:\right]$ is an interval where
$\underline{x}$ and $\overline{x}$ are real numbers and denote the endpoints of the interval.

If we let $A = \left[\:\underline{\a}, \overline{\a}\:\right]$,
$R=\left[\:\underline{r}, \overline{r}\:\right]$, and $T=\left[\:\underline{t},\overline{t}\:\right]$
 be intervals, we can use interval analysis
techniques to compute the interval $\Phi = \left[\:\underline{\phi}, \overline{\phi}\:\right]$ where
$\Phi \supseteq \{\phi = f(\a,r,t) \mid \a \in A, r \in R, t \in T\}$.
If we can then prove that $\overline{\phi} < 2(1-c)\ln 4$, then we have a proof
that  $f(\a,r,t) < 2(1-c)\ln 4$ for all $\a \in A$, $r \in R$, and
$t \in T$, as required.

In Section~\ref{sec:interior} we prove that there are no global maxima in the interior of the
domain, except for the point $\a=\frac{1}{4}$, $r=\frac{1}{4}$, $t=\frac{1}{16}$, and in
Section~\ref{sec:boundary} we prove that there are no global maxima on the boundary of the
domain.  In both cases, the proof relies on an interval analysis program to rule out
the various cases.  The specifications of the program are
listed in Section~\ref{sec:code}.

\subsection{The Interior of the Domain}
\label{sec:interior}

The goal of this section is to rule out all possible stationary points, as defined by (\ref{eq:interior}),
with $z > y > 0$, from being global maxima.

\begin{lemma}
\label{lem:nointeriormax}
For all $\frac{2}{3} \leq c < 1$,
consider any stationary point of the function $f$ of Lemma~\ref{lem:fmax}, in
the domain $0 < \a < 1$, $\frac{2\a}{3c} < r < 1 - \frac{2(1-\a)}{3c}$, $\max\left\{0, \frac{3r-1}{2}\right\} < t < 4r$, and with $z > y$.
The maximum value that $f$ can take at this point is
smaller than $2(1-c)\ln 4 - 0.1$.
\end{lemma}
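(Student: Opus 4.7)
The plan is to reduce the stationary-point condition $z>y$ to a one-parameter family (for each fixed $c$) and then bound $f$ along that family via interval analysis. Lemma \ref{lem:interior} already supplies the key relation (\ref{eq:interior}), which, for each $y>0$, determines a unique $z=z(y)>y$. Once $y$ and $z$ are fixed, equation (\ref{eq:rval}) gives $r$; either of the equations in (\ref{eq:kimplicitsMH}) involving $y$ or $z$ then gives $\alpha$ (and the two determinations must agree, which is automatic since (\ref{eq:interior}) was derived by eliminating $\alpha$ and $r$); equation (\ref{eq:maxk-tiMH}) or equivalently (\ref{eq:start}) gives $t$; and the $x$-equation in (\ref{eq:kimplicitsMH}) gives $x$ as a function of $c$ alone. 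So every candidate stationary point with $z>y$ is a function of the two parameters $(c,y)\in[\tfrac23,1)\times(0,\infty)$, and $f$ along this manifold can be written as an explicit (if unwieldy) function $\widetilde f(c,y)$.

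First I would verify analytically the limiting behavior of $\widetilde f(c,y)$ as $y\to 0^+$ and as $y\to\infty$. As $y\to 0^+$, (\ref{eq:interior}) forces $z/y\to 2$, one checks that $\alpha,r,t$ approach the symmetric values and $\widetilde f(c,y)\to 2(1-c)\ln 4$; however, this limit is attained only in the coincident case $z=y$, and a first-order expansion in $y$ should show that $\widetilde f(c,y)<2(1-c)\ln 4$ strictly for all small $y>0$ (this is essentially the assertion that the symmetric critical point is locally isolated among stationary points, which is consistent with Lemma \ref{lpd}). As $y\to\infty$, equation (\ref{eq:interior}) forces $z\to\infty$ with $z-y$ bounded, and $r\to$ a finite limit away from the critical value, so $\widetilde f$ approaches a clean asymptotic expression that can be checked by hand to be well below $2(1-c)\ln 4-0.1$ for $c\in[\tfrac23,1)$. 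Using the monotonicity and regularity exhibited in the proof of Lemma \ref{lem:interior}, I would then truncate the $y$-range to a bounded interval $[y_0,y_1]$ outside of which the asymptotic analysis gives the bound directly.

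On the compact box $[\tfrac23,1-\varepsilon]\times[y_0,y_1]$ (with $\varepsilon$ chosen so that the target bound $2(1-c)\ln 4-0.1$ degrades gracefully, and separately handling $c$ very close to $1$ where stationary points with $z>y$ can be excluded since the only solution of the stationary-point system is $(\tfrac14,\tfrac14,\tfrac1{16})$ at $c=1$), I would apply the interval arithmetic machinery developed in the appendix. Concretely: recursively subdivide the box; on each sub-box, use interval Newton iteration on (\ref{eq:interior}) to produce enclosures for $z(y)$, propagate these through the explicit formulas for $r,\alpha,t,x$ to get enclosures, then evaluate the expression for $f$ in interval arithmetic and compare the upper endpoint against $2(1-c)\ln 4-0.1$. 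Boxes on which the comparison succeeds are discarded; boxes on which it fails are subdivided further. The finite termination of this procedure, and its correctness modulo floating-point error, follows from the same general framework used for the boundary analysis in Section \ref{sec:boundary}.

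The main obstacle is twofold. First, controlling the near-symmetric regime $y\to 0^+$ rigorously, because interval arithmetic tends to produce very loose enclosures when several terms of $f$ nearly cancel; this likely requires a symbolic change of variables (e.g.\ writing $z=y(1+s)$ and Taylor-expanding in $s$ and $y$) to separate the dominant cancellation before the interval evaluation. Second, maintaining the uniform slack of $0.1$ across the full range of $c$: near $c=1$ the inequality $\widetilde f<2(1-c)\ln 4-0.1$ becomes a strong negative bound on $\widetilde f$ itself, which is where the interval evaluation must be sharpest. Both issues are localized and handled by the same subdivide-and-conquer program template used elsewhere in the appendix, with a dedicated analytic pre-processing near $y=0$ feeding a certified enclosure into the interval routine.
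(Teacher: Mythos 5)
Your plan diverges from the paper's at a key structural point, and several of your auxiliary claims about the asymptotics are wrong, so the proposal as written would not close.

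The paper's essential observation, which you miss, is that the stationary-point system is genuinely \emph{one}-dimensional, not two-dimensional: once $y$ and $z=z(y)$ are fixed by Lemma~\ref{lem:interior}, the stationarity conditions (\ref{eq:maxk-alphaMH})--(\ref{eq:maxk-tiMH}) determine $\alpha$ and $r$ from $(y,z)$ alone (e.g.\ $\alpha=\frac{\rme^z-1-z}{3(\rme^y-1-y)+\rme^z-1-z}$, $r=\frac{(\rme^z-1)z}{3(\rme^y-1)y+(\rme^z-1)z}$) \emph{without reference to $c$}, and then $c$ itself is forced by (\ref{eq:kimplicitsMH}) to be $c(y,z)=\frac{3(\rme^y-1)y+(\rme^z-1)z}{3(3(\rme^y-1-y)+\rme^z-1-z)}$. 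Your proposal to read $\alpha$ off the implicit definition (\ref{eq:kimplicitsMH}) for an independently chosen $c$ either contradicts the stationarity condition (\ref{eq:maxk-alphaMH}) or forces $c=c(y,z)$ anyway; treating $(c,y)$ as two free parameters introduces a large family of non-stationary points over which there is no reason for the target inequality to hold, and over which the interval evaluation very plausibly would fail to certify the bound.

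Second, the range of $y$ is \emph{bounded}: the paper shows that $z>y$ plus $c<1$ forces $z>x>y$ and $y<x^*$, where $x^*\approx 2.149$ solves $\frac{x(\rme^x-1)}{\rme^x-1-x}=3$. There is no $y\to\infty$ regime to worry about. Third, your description of the $y\to 0^+$ behavior is incorrect. From (\ref{eq:interior}), as $y\to 0^+$ the second root $z(y)$ tends not to $2y$ but to the largest positive solution of $2z=\rme^z-1$ (about $1.256$), so $z/y\to\infty$; in particular the limit does \emph{not} approach the symmetric point $(\frac14,\frac14,\frac1{16})$, and the ``near-cancellation'' you worry about does not occur there. (The coincident branch $z=y$ is an entirely separate solution and is what produces the global maximum; the $z>y$ branch stays bounded away from it on $y\in[0,x^*]$.) Consequently the proposed symbolic change of variables $z=y(1+s)$ near $y=0$ addresses a phantom problem while leaving the genuine issue — the unnecessary two-dimensionality of your search — unresolved. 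The paper's proof instead tiles the compact interval $[0,x^*]$ with small $y$-intervals, propagates enclosures for $z, c, x, \alpha, r, t$ functionally from $y$, and evaluates a version $F(c,\alpha,r,t,x,y,z)$ of $f$ with $x,y,z$ treated as free parameters, certifying $\overline{\phi}<2(1-\underline{c})\ln4 - b$ on each tile.
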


The proof of Lemma~\ref{lem:nointeriormax} relies on an interval analysis
program, and
the analysis will be on a slightly different function.  We define
\begin{eqnarray*}
F(c,\a,r,t,x,y,z) & = & \ln 4 -c \ln 4 + (1-\a)\ln 3 - c(2+t-3r)\ln 3 \notag \\
 & & \mbox{} +c(1-3r+2t)\ln 2 -\alpha \ln \alpha -(1-\alpha)\ln (1-\alpha) \notag \\
 & & \mbox{} -c(1-3r+2t) \ln (1-3r+2t) -c(3r-3t) \ln (r-t) \notag \\
 & & \mbox{} -ct \ln t + r3c \ln r + (1-r)3c \ln (1-r) \notag \\
 & & \mbox{} + \Upsilon(x) - \alpha \Upsilon(z) - (1-\alpha) \Upsilon(y)
\end{eqnarray*}
where
\begin{equation} \label{eq:upsilon}
 \Upsilon(w) = \frac{(\rme^w-1)w}{\rme^w-1-w}\ln w - \ln \left(\rme^w-1-w\right).
\end{equation}
Note that from (\ref{eq:kimplicitsMH}),
\begin{eqnarray*}
\Upsilon(x) &=& 3c \ln x - \ln (\rme^x-1-x) \\
(1-\a)\Upsilon(y) &=& (1-r)3c\ln y - (1-\a)\ln (\rme^y-1-y) \\
\a \Upsilon(z) &=& r3c \ln z - \a \ln (\rme^z-1-z) .
\end{eqnarray*}
$F$ is defined exactly the same as $f$ except that $x$, $y$, and $z$ are parameters of $F$ instead
of being defined by $\a$ and $r$, and $c$ is an explicit parameter of $F$.  Given intervals for each of
the parameters of $F$, we will determine an interval
\begin{eqnarray*}
\interval{\phi} &\supseteq& \{\phi \mid \phi = F(c, \a, r, t, x, y, z) \mbox{ for } c \in \interval{c}, \a \in \interval{\a},
r \in \interval{r}, \\
 & & \qquad \qquad \qquad t \in \interval{t}, x \in \interval{x}, y \in \interval{y}, \mbox{ and } z \in \interval{z}\}.
\end{eqnarray*}
Note that if $\interval{x}$, $\interval{y}$, and $\interval{z}$ contain all values of $x, y, z>0$
that satisfy (\ref{eq:kimplicitsMH}) for each $c \in \interval{c}$,
$\a \in \interval{\a}$ and $r \in \interval{r}$ then
\begin{eqnarray*}
\interval{\phi} &\supseteq& \{\phi \mid \phi = f(\a, r, t) \mbox{ for } \a \in \interval{\a},
r \in \interval{r} \mbox{, and } t \in \interval{t} \\
 & & \qquad \qquad \qquad \mbox{ and where } \underline{c} \leq c \leq \overline{c}\}.
\end{eqnarray*}  
{For a certain constant $b$, we 
verify for every such interval, we have  $ 2(1-\underline{c})\ln 4 - \overline{\phi} \geq b$. This proves that}
$f(\a,r,t) < 2(1-c)\ln 4 - b$ for all $\a \in \interval{\a}$,
$r \in \interval{r}$, $t \in \interval{t}$ and where $\underline{c} \leq c \leq \overline{c}$.
Our interval analysis program proves that such a $b$ exists and its value is between 0.010 and 0.055.

The interval analysis is as follows.  First, we will place an upper bound on the value of $y$ at any
stationary point with $z > y$ and with $\frac{2}{3} \leq c < 1$. We define:
\begin{definition}
$x^*$ is the largest solution to $\frac{x(\rme^x-1)}{\rm\rme^x-1-x} = 3$.
\end{definition}
From (\ref{eq:kimplicitsMH}), it is straightforward to show that if $z > y$, then $z > x > y$, and if $c < 1$ then $x$, and by extension $y$,
cannot exceed $x^*$.

While $y$ is defined
to be larger than 0,
we note that, from Observation~\ref{obs:z0}, 
the value of $f$ at the domain boundary $r = \frac{2\a}{3c}$ is the limit
of $f$ as $y$ approaches 0, and from Observation~\ref{obs:f}, the limit of $f$ at the border exists.  As a result,
we have to rule out the case that the value of $f$ exceeds $2(1-c)\ln 4$ at any point where $y$ is arbitrarily close to 0.
To do so, we will extend the interval $(0, x^*)$ of possible values for $y$ to $[0, x^*]$.
We can rewrite (\ref{eq:interior}) as
\begin{equation*}
2z = \frac{3(\rme^y-1)^2 + (\rme^z-1)^2}{\rme^y-1+\rme^z-1} \cdot \frac{y}{\rme^y-1},
\end{equation*}
and we note that as $y$ tends
to 0, $z$ tends to the largest solution of
\begin{equation}
\label{eq:yis0}
2z = \rme^z-1.
\end{equation}

We will cover the interval $[0, x^*]$ of possible values for $y$
with overlapping subintervals.  For each subinterval $Y = \interval{y}$, we will compute
intervals that contain all possible values that $z$, $x$, $c$, $\a$, $r$, and $t$
can take at an stationary point of
$f$ with $y \in Y$ and $y < z$.
We will then compute the maximum value that $F$ can take on these
intervals and verify that this value is smaller than $2(1-c)\ln 4$.

Given $Y = \interval{y}$, we first compute an interval that contains all $z$ such that:
$Y$ contains at least one $y<z$ where
$(z,y)$ satisfies (\ref{eq:interior}) of Lemma~\ref{lem:interior}.
Let the function $z(y)$ be defined as the largest solution to (\ref{eq:interior}) when $y > 0$ and
as the largest solution to (\ref{eq:yis0}) when $y=0$.
Given the interval $Y = \interval{y}$, we find an interval $Z = \interval{z}$ such that
\begin{equation}\label{eq:zint} \interval{z} \supseteq \left\{ z \mid z = z(y) \mbox{ for } y \in \interval{y}\right\}. \end{equation}
Because $z$ is only defined implicitly in (\ref{eq:interior}) and (\ref{eq:yis0}),
we do the following to compute $\interval{z}$.

Let $z_1 = z\left(\underline{y}\right)$, and let
$z_2 = z\left(\overline{y}\right)$.
Using the bisection method, we find intervals $Z_1=\interval{z_1}$ and $Z_2=\interval{z_2}$ that contain $z_1$ and $z_2$, respectively.
We do not use $\left[\:\underline{z_1}, \overline{z_2}\:\right]$ as the interval
$Z$ because to do so, we would need to know that $\frac{dz}{dy} > 0$.  A proof of this
inequality appears to be challenging, so instead we will find a $\d$ and $\d'$ such that
$Z=\left[\:\underline{z_1}-\d', \overline{z_2}+\d\:\right]$ satisfies (\ref{eq:zint}).

From  (\ref{eq:interior}),
Let
\[\l(y,z) = y \left(
\frac{3\left(\rme^y-1\right)^2 + \left(\rme^z-1\right)^2}
                   {2\left(\rme^y-1\right)^2+2\left(\rme^y-1\right)\left(\rme^z-1\right)}
\right),
\]
and so for $y>0$, $z(y)$ is the largest solution to $z=\l(y,z)$.
Let \[\l_I(Y,Z) = \{ \l(y,z) \mid y \in Y, z \in Z\}.\]
From our computation of $Z_1$ and $Z_2$, we know that
there is a $z \in \left[\:\underline{z_1}, \overline{z_2}\:\right]$ such that $z = \l(y,z)$ for
some $y \in Y$ (eg. $z=\underline{z_1}$ and $z=\overline{z_2}$).
This implies $z \in \l_I\left(Y, \left[\:\underline{z_1}, \overline{z_2}\:\right]\right)$, and
\[\left[\:\underline{z_1}, \overline{z_2}\:\right] \;\cap\;
\lambda_I\left(Y, \left[\:\underline{z_1}, \overline{z_2}\:\right]\right) \neq \emptyset.\]
To compute bounds for $Z$, we will find $\d,\d',\e>0$ such that
\begin{eqnarray*}
\left[\:\underline{z_1} - \d'-\e, \underline{z_1}-\d'\:\right] \;\cap\;
\lambda_I\left(Y, \left[\:\underline{z_1}-\d'-\e, \underline{z_1}-\d'\:\right]\right) &=& \emptyset\\
\left[\:\overline{z_2} + \d, \overline{z_2}+\d +\e\:\right] \;\cap\;
\lambda_I\left(Y, \left[\:\overline{z_2}+\d, \overline{z_2}+\d +\e\:\right]\right) &=& \emptyset.
\end{eqnarray*}

We will prove that for every $y\in Y$, $\overline{z_2}+\d$ is an upper bound on the largest $z$ such that $z = \lambda(y,z)$.
Suppose that for some $y'\in Y$, there is a $z' > \overline{z_2}+\d$ such that $z' = \lambda(y',z')$.
From the definition of $\l$, we have $z(y') = z'$. Because $z(\cdot)$ is continuous, the intermediate value
theorem states that for every $z_1<z^*< z'$ there is some $y^*\in\left[\:\underline{y}, y'\:\right] \subset Y$
with $z(y^*)=z^*$, and so $z^*=\l(y^*,z^*)$.  Picking $z_2+\d< z^*<\min(z_2+\d+\e,z')$ yields a
contradiction to the statement
\[\left[\:\overline{z_2} + \d, \overline{z_2}+\d +\e\:\right] \;\cap\;
\lambda_I\left(Y, \left[\:\overline{z_2}+\d, \overline{z_2}+\d +\e\:\right]\right) = \emptyset. \]
A similar argument shows that for every $y\in Y$, $\underline{z_1} - \d'$ is a lower bound on
the smallest $z$ such that $z = \lambda(y,z)$.

Therefore, we can set $Z=\left[\:\underline{z_1}-\d', \overline{z_2}+\d\:\right]$.
To find appropriate $\d$ and $\d'$ values, we set
$\e = 2.22 \times 10^{-16}$,
and we use binary search to find small
$\d$ and $\d'$ that satisfy the above properties.

Next, we combine
(\ref{eq:kimplicitsMH}) with the equations
(\ref{eq:maxk-alphaMH}) and (\ref{eq:maxk-rMH}) that must hold at any stationary point of
$f$ to define the following function,
\begin{equation*}
c(y,z) = \frac{3(\rme^y-1)y + (\rme^z-1)z}{3\times(3(\rme^y-1-y) + \rme^z-1-z)},
\end{equation*}
and given the intervals $\interval{y}$ and $\interval{z}$, we use standard interval operations to compute
\[ \interval{c} \supseteq \{c \mid c = c(y, z) \mbox{ for } y \in \interval{y} \mbox{ and } z \in \interval{z} \}. \]

From (\ref{eq:kimplicitsMH}), we define the function $x(c)$ as the largest solution to
\begin{equation}
3c = \frac{x(\rme^x-1)}{\rme^x-1-x}. \label{eq:x}
\end{equation}
and we use $x(c)$ to find an interval $\interval{x}$ given the interval $\interval{c}$.  Let
\[\interval{x} \supseteq \{x \mid x = x(c) \mbox{ for } c \in \interval{c}\}. \]
Because $x$ is only defined implicitly in (\ref{eq:x}), we use the following procedure to compute $X=\interval{x}$.

It is straightforward to verify the $\frac{x(\rme^x-1)}{\rme^x-1-x}$ is an increasing function when $x > 0$.  As a result, (\ref{eq:x}) will have at most one positive solution.  Given $\interval{c}$, we use Lemma~\ref{lem:xvalue} below to give positive upper and lower bounds for both $x(\underline{c})$ and $x(\overline{c})$, and
then we can use binary search, starting from these bounds, to find the intervals
$X_1=\interval{x_1}$ and $X_2=\interval{x_2}$ that contain $x(\underline{c})$ and $x(\overline{c})$, respectively.
From (\ref{eq:x}), it is straightforward to see that
$\frac{dx}{dc} > 0$, and so we can set $X = \left[\:\underline{x_1}, \overline{x_2}\:\right]$.

\begin{lemma}
\label{lem:xvalue}
Let $3c = \frac{x(\rme^x-1)}{\rme^x-1-x}$ and $c \geq \frac{2}{3}$.  If $x > 0$ then
$3c - 2 < x < 3c$.
\end{lemma}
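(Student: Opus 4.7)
The plan is to recast both inequalities as properties of the function $g(x)=\frac{x(\rme^x-1)}{\rme^x-1-x}$, which is the quantity set equal to $3c$ in the hypothesis. Specifically, I will show directly that $x<g(x)<x+2$ for every $x>0$; the conclusion $3c-2<x<3c$ then follows by substituting $g(x)=3c$. The role of the hypothesis $c\geq\tfrac{2}{3}$ is purely to ensure that a positive solution exists, since a Taylor expansion gives $\lim_{x\to 0^+}g(x)=2$, and $g$ is monotone increasing on $(0,\infty)$; it plays no role in the bounds themselves.

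For the lower bound $g(x)>x$, I would clear the positive denominator $\rme^x-1-x>0$ (positivity holding for all $x>0$ by Taylor expansion) to obtain the equivalent inequality $x(\rme^x-1)>x(\rme^x-1-x)$, which simplifies instantly to $x^2>0$. For the upper bound $g(x)<x+2$, clearing the same denominator gives $x(\rme^x-1)<(x+2)(\rme^x-1-x)$. Expanding both sides and cancelling the common $x\rme^x$ and $-x$ terms reduces this to
\[
2\rme^x > x^2 + 2x + 2,\qquad\text{equivalently}\qquad \rme^x > 1+x+\tfrac{x^2}{2},
\]
which is immediate from the Maclaurin series $\rme^x=1+x+\tfrac{x^2}{2}+\sum_{k\geq 3}\tfrac{x^k}{k!}$ since every omitted term is strictly positive for $x>0$.

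The only subtlety worth double-checking is the sign of the denominator $\rme^x-1-x$, which is needed to justify the direction of the inequalities after cross-multiplying; this is handled by the same $k\geq 2$ Taylor bound. Apart from this routine check, there is no substantive obstacle: the entire argument is two short calculus inequalities.
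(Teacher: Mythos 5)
Your proof is correct, and for the upper bound it takes a genuinely more direct route than the paper. The paper handles $x<3c$ the same way you do (observing the fraction $\tfrac{\rme^x-1}{\rme^x-1-x}>1$), but for $3c<x+2$ it does not cross-multiply: it instead observes that $g(x)=\tfrac{x(\rme^x-1)}{\rme^x-1-x}$ equals $x+2$ in the limit $x\to0^+$, and then proves $g'(x)<1$ for $x>0$ (by reducing to the nonnegativity of $x\bigl(-2(\rme^x-1)+x+x\rme^x\bigr)$, verified via its first two derivatives). Your approach clears the denominator and reduces the desired inequality $g(x)<x+2$ directly to $\rme^x>1+x+\tfrac{x^2}{2}$, which is immediate from the Maclaurin series. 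Both are sound; yours is shorter and avoids the auxiliary monotonicity argument, while the paper's derivative bound incidentally records the extra fact $g'(x)<1$, which it does not appear to reuse. Your remark that the hypothesis $c\geq\tfrac{2}{3}$ is not actually used in deriving the two inequalities (only $x>0$ is) is also accurate and matches how the paper's proof actually proceeds.
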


\begin{proof}
First note that $\lim_{x \rightarrow 0} \frac{x(\rme^x-1)}{\rme^x-1-x} = 2$.  So when $x$ approaches 0,
$c$ approaches $\frac{2}{3}$, and we have, in the limit, $x + 2 = 3c$.
Next note that $x < x \left(\frac{\rme^x-1}{\rme^x-1-x}\right) = 3c$.
To complete the proof, we show that if $x > 0$, the derivative of $\frac{x(\rme^x-1)}{\rme^x-1-x}$ is
less than 1.  This implies that for $x > 0$, $3c$ is sandwiched between $x$ and $x+2$.

The derivative of $\frac{x(\rme^x-1)}{\rme^x-1-x}$ is $\frac{(\rme^x-1)^2 - x^2 \rme^x}{(\rme^x-1-x)^2}$.
To show that this derivative is less than 1, we show
\begin{eqnarray*}
(\rme^x-1-x)^2 - (\rme^x-1)^2 + x^2 \rme^x & = & - 2x(\rme^x-1) + x^2 + x^2\rme^x \\
                                  & = & x\left( - 2(\rme^x-1) + x + x \rme^x\right) \\
                                  & \geq & 0.
\end{eqnarray*}
The last inequality follows by noting that the first and second derivatives of $-2(\rme^x-1) + x + x \rme^x$ are
$x \rme^x - \rme^x + 1$ and $x \rme^x$, respectively.
\end{proof}

It is now straightforward to use equations that must hold at any stationary point of $f$ to define intervals for the remaining parameters of $F$
given the intervals $\interval{y}$ and $\interval{z}$.
We use (\ref{eq:maxk-alphaMH}) to define the function
\begin{equation*}
\a(y,z) = \frac{\rme^z-1 - z}{3(\rme^y-1-y) + \rme^z-1-z},
\end{equation*}
and given the intervals $\interval{y}$ and $\interval{z}$, we use standard interval operations to compute the interval
\[ \interval{\a} \supseteq \{\a \mid \a = \a(y, z) \mbox{ for } y \in \interval{y} \mbox{ and } z \in \interval{z} \}. \]
We use (\ref{eq:rval}) to define the function
\begin{equation*}
r(y,z) = \frac{(\rme^z-1)z}{3(\rme^y-1)y + (\rme^z-1)z},
\end{equation*}
and given the intervals $\interval{y}$ and $\interval{z}$, we use standard interval operations to compute the interval
\[ \interval{r} \supseteq \{r \mid r = r(y, z) \mbox{ for } y \in \interval{y} \mbox{ and } z \in \interval{z} \}. \]
We use (\ref{eq:start}) to define the function
\begin{equation*}
t(r,y,z) = \frac{(\rme^z-1)^2r}{3(\rme^y-1)^2 + (\rme^z-1)^2},
\end{equation*}
and given the intervals $\interval{r}$, $\interval{y}$ and $\interval{z}$, we use standard interval operations to compute the interval
\[ \interval{t} \supseteq \{t \mid t = t(r, y, z) \mbox{ for } r \in \interval{r},  y \in \interval{y} \mbox{ and } z \in \interval{z} \}. \]

Finally, we use standard interval operations to
compute the interval
\begin{eqnarray*}
\interval{\phi} &=& \{\phi \mid \phi = F(c, \a, r, t, x, y, z) \mbox{ for } c \in \interval{c}, \a \in \interval{\a},
r \in \interval{r}, \\
 & & \qquad \qquad \qquad t \in \interval{t}, x \in \interval{x}, y \in \interval{y}, \mbox{ and } z \in \interval{z}\}.
\end{eqnarray*}
We then verify that $\overline{\phi} < 2(1-\overline{c})\ln 4$, and this verifies that at every
point in the given intervals the value of $F$, and therefore $f$, is smaller than $2(1-c)\ln 4$.
Repeating this process for every overlapping subinterval $\interval{y}\subset [0, x^*]$, the program proves that
there exists a constant $b$ with $0.010 < b < 0.055$ such that
that the value for $f$ on each interval is smaller than $2(1-c)\ln 4 - b$, and this
proves Lemma~\ref{lem:nointeriormax}.

These computations are all performed by an interval analysis program specified in Section~\ref{sec:code}.
The computation of the interval $\interval{\phi}$ uses standard interval analysis techniques plus the following
observations.

\begin{observation}
\label{obs:xbehavior}
The function $\Upsilon(w)$
has the following form.  When $w \rightarrow 0$, the function approaches $\ln 2$, the
function has one minimum at $w=1$ with value $-\ln(e-2)$, and the function grows unbounded
as $w \rightarrow \infty$.
\end{observation}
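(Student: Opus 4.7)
The plan is to set $A(w) := \frac{w(\rme^w-1)}{\rme^w - 1 - w}$ so that $\Upsilon(w) = A(w)\ln w - \ln(\rme^w - 1 - w)$, and then to deduce all three claims from a clean formula for $\Upsilon'(w)$. Differentiating and observing that $A(w)/w = (\rme^w-1)/(\rme^w - 1 - w)$ is precisely the derivative of $\ln(\rme^w - 1 - w)$, the non-log terms cancel and I expect to obtain the identity
\begin{equation*}
\Upsilon'(w) \;=\; A'(w)\,\ln w.
\end{equation*}
This immediately makes $w=1$ a critical point and reduces the shape of $\Upsilon$ to the sign of $A'(w)$.

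Next I will show $A'(w) > 0$ on $(0,\infty)$. A routine quotient-rule calculation gives a numerator equal to $(\rme^w - 1)^2 - w^2 \rme^w$, which I will rewrite as $\rme^w\bigl((2\sinh(w/2))^2 - w^2\bigr)$; this is positive because $\sinh(x) > x$ for $x > 0$. With $A'(w)>0$ in hand, the identity above yields $\Upsilon'(w) < 0$ on $(0,1)$ and $\Upsilon'(w) > 0$ on $(1,\infty)$, so $w=1$ is the unique minimum. Its value is $\Upsilon(1) = A(1)\cdot 0 - \ln(\rme - 2) = -\ln(\rme - 2)$, as claimed.

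The two boundary behaviors are then quick. As $w \to 0^+$, the expansion $\rme^w - 1 - w = \frac{w^2}{2}(1+O(w))$ gives $A(w) = 2 + O(w)$ and $\ln(\rme^w - 1 - w) = 2\ln w - \ln 2 + O(w)$, so the two $2\ln w$ singularities cancel and $\Upsilon(w) \to \ln 2$. As $w \to \infty$, the identity $A(w) - w = w^2/(\rme^w - 1 - w)$ gives $A(w) = w + O(w^2\rme^{-w})$, while $\ln(\rme^w - 1 - w) = w + O(w\rme^{-w})$, so $\Upsilon(w) = w\ln w - w + o(1) \to \infty$.

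The main obstacle, such as it is, is spotting the cancellation that produces $\Upsilon'(w) = A'(w)\ln w$; everything else is routine once that identity is in place. The positivity of $A'(w)$ requires a small trick, but once written in the hyperbolic form $(2\sinh(w/2))^2 - w^2$ it follows from the elementary inequality $\sinh(x) > x$ for $x>0$.
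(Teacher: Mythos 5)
Your proof is correct. The paper states this as an observation without giving an argument, so there is nothing to compare it against, but your derivation holds up: the cancellation $\Upsilon'(w) = A'(w)\ln w$ is real (since $A(w)/w = (\rme^w-1)/(\rme^w-1-w)$ is exactly $\frac{d}{dw}\ln(\rme^w-1-w)$), the numerator of $A'(w)$ simplifies to $(\rme^w-1)^2 - w^2\rme^w = \rme^w\bigl((2\sinh(w/2))^2 - w^2\bigr) > 0$ via $\sinh x > x$, and the two boundary limits follow from the expansions $\rme^w-1-w = \tfrac{w^2}{2}(1+O(w))$ near $0$ and $A(w) = w + w^2/(\rme^w-1-w)$, $\ln(\rme^w-1-w) = w + O(w\rme^{-w})$ at infinity. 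This is a clean, complete verification that could be inserted as a proof of the observation.
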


In the cases that $z$ becomes large, we can use the following approximations for $\Upsilon(z)$.
These approximations have reduced dependency errors and will, for large $z$,
result in a tighter bound than a computation of $\Upsilon(z)$.
\begin{observation}
\label{obs:upsilon_lowerbound}
\[\Upsilon(w) =  \frac{(\rme^w-1)w}{\rme^w-1-w} \ln w - \ln (e^w-1-w) \geq w(\ln w - 1). \]
\end{observation}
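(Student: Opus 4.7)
The plan is to define $h(w) := \Upsilon(w) - w(\ln w - 1)$ and show $h(w) \geq 0$ for all $w > 0$ via a monotonicity argument supplemented by evaluation of the two boundary limits.

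Writing $A = \rme^w - 1$ and $B = \rme^w - 1 - w$, a direct differentiation of $\Upsilon(w) = \frac{wA}{B}\ln w - \ln B$ yields $\Upsilon'(w) = \frac{(A + w\rme^w)B - wA^2}{B^2}\ln w$, because the two non-logarithmic pieces $\frac{A}{B}$ (coming from differentiating $\ln w$) and $-\frac{B'}{B} = -\frac{A}{B}$ (from differentiating $-\ln B$) cancel exactly. Subtracting the derivative $\ln w$ of $w(\ln w - 1)$ and using the identity $A + w\rme^w - B = w(\rme^w + 1)$ to simplify the numerator $(A + w\rme^w)B - wA^2 - B^2$ produces
\[h'(w) = \frac{w\bigl[(2-w)\rme^w - (2+w)\bigr]}{B^2}\,\ln w.\]
The key structural claim is that $p(w) := (2-w)\rme^w - (2+w) \leq 0$ for all $w \geq 0$. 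Since $p(0) = 0$, $p'(0) = 0$, and $p''(w) = -w\rme^w \leq 0$, we obtain $p' \leq 0$ on $[0,\infty)$ and hence $p \leq 0$ on $[0,\infty)$. This forces $h'(w)$ to have sign opposite to $\ln w$, so $h$ strictly increases on $(0,1)$, strictly decreases on $(1,\infty)$, and attains its unique maximum at $w = 1$.

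To close the argument, I would evaluate the two boundary limits. As $w \to 0^+$, $\frac{w(\rme^w-1)}{\rme^w-1-w} \to 2$ and $\ln(\rme^w-1-w) = 2\ln w - \ln 2 + o(1)$, so the divergent $\ln w$ contributions cancel and $\Upsilon(w) \to \ln 2$, giving $h(0^+) = \ln 2 > 0$. As $w \to \infty$, the expansions $\frac{w(\rme^w-1)}{\rme^w-1-w} = w + w^2\rme^{-w} + O(w^3\rme^{-2w})$ and $\ln(\rme^w-1-w) = w - \rme^{-w}(1+w) + O(\rme^{-2w})$ give $h(w) \to 0$. Combined with the monotonicity, $h(w) \geq 0$ throughout $(0,\infty)$: on $(0,1]$ this follows because $h$ is increasing with infimum $\ln 2$, while on $(1,\infty)$ it follows because a decreasing function tending to $0$ must remain non-negative.

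The only mildly delicate step is the algebraic cancellation that turns $(A + w\rme^w)B - wA^2 - B^2$ into $w[(2-w)\rme^w - (2+w)]$; once that factorization is in hand, the rest reduces to the two-step convexity argument for $p$ and routine Taylor expansions at $0$ and $\infty$.
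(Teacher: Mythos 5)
Your proof is correct and complete. The paper states this as an Observation without proof (it is implicitly left as a routine verification supporting the interval-analysis code), so there is no paper argument to compare against; your derivation supplies one. I verified the key algebraic step: with $A=\rme^w-1$, $B=\rme^w-1-w$, substituting $A=B+w$ and $\rme^w=B+w+1$ into $(A+w\rme^w)B - wA^2 - B^2$ does indeed collapse to $w\bigl[(2-w)\rme^w-(2+w)\bigr]$, and the two-step sign argument for $p(w)=(2-w)\rme^w-(2+w)$ via $p(0)=p'(0)=0$ and $p''(w)=-w\rme^w\le 0$ is clean. The endpoint limits ($h(0^+)=\ln 2$ and $h(w)\to 0$ as $w\to\infty$, via the asymptotics $\frac{w(\rme^w-1)}{\rme^w-1-w}=w+w^2\rme^{-w}+O(w^3\rme^{-2w})$ and $\ln(\rme^w-1-w)=w-(1+w)\rme^{-w}+O(w^2\rme^{-2w})$) are correct, and combined with monotone increase on $(0,1)$ then monotone decrease on $(1,\infty)$ they force $h\ge 0$ everywhere. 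This is a genuine proof of a fact the paper merely asserts.
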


\begin{observation}
\label{obs:upsilon_upperbound}
For $w > 2.5$,
\[\Upsilon(w) = \frac{(\rme^w-1)w}{\rme^w-1-w} \ln w - \ln (\rme^w-1-w) \leq w(\ln w - 1) + 1. \]
\end{observation}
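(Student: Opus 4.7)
The plan is to reduce the stated inequality to a cleaner one whose two sides can be compared by a monotonicity argument, verify it at the left endpoint $w = 2.5$, and propagate it to $w > 2.5$ from there.

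First, I would exploit the identity
\[\frac{(\rme^w-1)w}{\rme^w-1-w} \;=\; w + \frac{w^2}{\rme^w-1-w},\]
so that $\Upsilon(w) = w\ln w + \frac{w^2 \ln w}{\rme^w - 1 - w} - \ln(\rme^w - 1 - w)$. Subtracting $w\ln w - w + 1$ from both sides and using the factorization $\ln(\rme^w-1-w) = w + \ln\bigl(1 - (1+w)\rme^{-w}\bigr)$, the claim becomes
\[L(w) \;:=\; \frac{w^2 \ln w}{\rme^w - 1 - w} \;\leq\; 1 + \ln\bigl(1 - (1+w)\rme^{-w}\bigr) \;=:\; R(w).\]
Asymptotically $L(w)\to 0$ and $R(w)\to 1$ as $w\to\infty$, so the inequality is comfortable for large $w$; the work is in ruling out the possibility that $L$ overtakes $R$ somewhere beyond $2.5$.

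The right side $R$ is monotonically increasing on $(0,\infty)$: the derivative of $(1+w)\rme^{-w}$ is $-w\rme^{-w} < 0$, so $1 - (1+w)\rme^{-w}$ grows and $R$ with it. The main step is therefore to show that $L$ is monotonically \emph{decreasing} on $[2.5,\infty)$. A direct computation gives
\[L'(w) \;=\; \frac{w\bigl[(\rme^w-1)\bigl(1 + (2-w)\ln w\bigr) - w(2\ln w + 1)\bigr]}{(\rme^w - 1 - w)^2}.\]
For $w \geq \rme$ one has $1 + (2-w)\ln w \leq 1 + (2-\rme) < 0$, so both summands inside the bracket are negative and $L'(w) < 0$ immediately. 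On the remaining short interval $[2.5,\rme]$, the factor $1 + (2-w)\ln w$ is still smaller than $1 - \tfrac12\ln 2.5 < 0.55$ while $w(2\ln w + 1) \geq 7$, and a crude bound on $\rme^w - 1$ suffices to keep the bracket negative throughout; if a pen-and-paper estimate proves awkward, an interval-arithmetic sweep over $[2.5,\rme]$ of the polynomial-times-exponential expression in the bracket, in the same spirit as the rest of this appendix, finishes the monotonicity rigorously.

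Given that $L$ is decreasing and $R$ is increasing on $[2.5,\infty)$, the inequality $L(w)\leq R(w)$ for all $w>2.5$ reduces to the single endpoint check $L(2.5)\leq R(2.5)$. Numerically, $L(2.5) \approx 0.6596$ and $R(2.5) \approx 0.6612$, a gap of only $\approx 0.0016$. This smallness is the main obstacle: the constants $2.5$ and $1$ in the statement are calibrated so that the inequality barely holds at the left endpoint, so the endpoint comparison must be carried out with outward-rounded interval arithmetic (identical in style to the \texttt{PROFIL/BIAS} computations used elsewhere here) rather than naive floating point, lest rounding error swallow the gap. Once the endpoint is verified rigorously, combining it with the monotonicity of $L$ and $R$ gives $L(w) < R(w)$ for every $w > 2.5$, which is exactly the claimed bound on $\Upsilon(w)$.
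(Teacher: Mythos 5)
The paper itself gives no proof of this observation (it is stated as one of several ``Observations'' left to the reader, alongside Observations~\ref{obs:xbehavior} and \ref{obs:upsilon_lowerbound}), so what you have written is an independent argument rather than a reconstruction. Your reduction is correct and clean: rewriting $\frac{(\rme^w-1)w}{\rme^w-1-w} = w + \frac{w^2}{\rme^w-1-w}$ and $\ln(\rme^w-1-w) = w + \ln\bigl(1-(1+w)\rme^{-w}\bigr)$ really does transform the claim into $L(w) \le R(w)$ with $L$, $R$ as you define them, and the strategy ``$R$ increasing, $L$ decreasing on $[2.5,\infty)$, check the left endpoint'' is sound.

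However, there is a genuine arithmetic slip in the monotonicity step for $L$. You assert that for $w \ge \rme$ one has $1 + (2-w)\ln w \le 1 + (2-\rme) < 0$, but $1 + (2-\rme) = 3-\rme \approx 0.28 > 0$. In fact $1 + (2-w)\ln w$ changes sign near $w \approx 2.92$, so for $w \in (\rme, \,2.92)$ the bracket $(\rme^w-1)\bigl(1+(2-w)\ln w\bigr) - w(2\ln w + 1)$ is \emph{not} automatically a sum of two negative terms; your argument as written has a hole there. The fix is easy: take the threshold to be $w = 3$ rather than $w = \rme$ (for $w \ge 3$ one has $(w-2)\ln w \ge \ln 3 > 1$, hence $1 + (2-w)\ln w < 0$), and extend the ``short interval'' that requires a direct estimate from $[2.5,\rme]$ to $[2.5,3]$. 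On $[2.5,3]$ a slightly sharper bound than the one you sketch is needed (your crude bound $\rme^w-1 \times 0.55 < 7$ fails by $w = \rme$ already since $\rme^{\rme}-1 > 14$), but since $1+(2-w)\ln w$ decays from about $0.54$ at $w=2.5$ to $0$ near $w=2.92$ while $w(2\ln w+1)$ stays above $7$, a short interval-arithmetic sweep in the style of the appendix closes it; alternatively, splitting $[2.5,3]$ into two or three subintervals with a hand estimate on each also works. With that correction, and with the endpoint comparison $L(2.5) \approx 0.6596 < 0.6614 \approx R(2.5)$ carried out in outward-rounded arithmetic as you propose (the margin is genuinely small, about $1.6\times10^{-3}$), the proof is complete.
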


\subsection{The Boundary of the Domain}
\label{sec:boundary}

The final step required for the proof of Lemma~\ref{lem:fmax} is to prove that at every
point on the boundary of the domain, $f$ approaches a limit that is less than $2(1-c)\ln 4$.

\begin{lemma}
\label{lem:noboundarymax}
Let $.67 \leq c < 1$ and let $b=b(c)$ be a positive constant that depends on $c$ but is independent of $\a$, $r$, and $t$.  
For every point $(\a_1,r_1,t_1)$ on the boundary of the
domain of $f$, $\lim_{(\a,r,t)\rightarrow (\a_1,r_1,t_1)}f(\a,r,t)$ exists and is
less than $2(1-c)\ln 4 - b$.
\end{lemma}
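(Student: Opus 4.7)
My plan is to verify, case by case, that the limit of $f$ at each extremal point is strictly below $2(1-c)\ln 4$ with a uniform gap $b=b(c)>0$, by reducing the boundary analysis to a collection of lower-dimensional optimization problems, each handled either by a direct closed-form inequality or by the same interval-analysis machinery developed in Section~\ref{sec:interior}. The extremal cases to consider are exactly those listed after Observation~\ref{obs:f}: $\a\to 0$, $\a\to 1$, $r\to 2\a/(3c)$, $r\to 1-2(1-\a)/(3c)$, $t\to 0$, $t\to (3r-1)/2$, and $t\to r$.

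First I would handle the two isolated endpoints. When $\a\to 0$, feasibility forces $r,t\to 0$, and using Observation~\ref{obs:xbehavior} to evaluate $\a\Upsilon(z),(1-\a)\Upsilon(y)\to 0,\ln 2$ in the limit, one obtains $\lim f = (2-c)\ln 2 + (1-2c)\ln 3$; a direct algebraic comparison against $2(1-c)\ln 4 = (4-4c)\ln 2$ gives the required gap for all $c\in[0.67,1)$. The symmetric case $\a\to 1$ forces $r,t\to 1$ and yields $\lim f = (1-c)\ln 4$, trivially smaller by $(1-c)\ln 4>0$.

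For each of the five remaining two-dimensional boundary pieces, I would (i) compute an explicit closed-form reduction of $f$ to a function of the two surviving parameters, using Observation~\ref{obs:z0} plus the singularity-cancellation identity from Observation~\ref{obs:xbehavior}, and (ii) bound the reduced function by interval analysis on the corresponding 2-D box, covered by overlapping sub-boxes exactly as in Section~\ref{sec:interior}. Concretely: on $r=2\a/(3c)$ one has $z\to 0$ with $\a\Upsilon(z)\to \a\ln 2$, yielding a function $\bar f(\a,t)$ in which $y$ is still determined by the third equation of (\ref{eq:kimplicitsMH}); the case $r=1-2(1-\a)/(3c)$ is analogous with the roles of $y$ and $z$ swapped; and for $t\in\{0,(3r-1)/2,r\}$ the corresponding singular $t$-term vanishes in the limit, producing a clean function of $(\a,r)$. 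Each reduced $\bar f$ is then bracketed by an interval $\interval{\phi}$ over each sub-box, and we verify $\overline{\phi}\le 2(1-\overline{c})\ln 4 - b$.

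The main obstacle will be the boundaries where $z\to 0$ or $y\to 0$, because in the original expression for $f$ the two terms $\a\ln(\rme^z-1-z)$ and $-3rc\ln z$ each diverge while their sum has the finite limit $\a\ln 2$. A naive interval evaluation blows up, so one must rewrite this pair using the $\Upsilon$-form and invoke a quantitative version of $\Upsilon(w)\to\ln 2$ as $w\to 0^+$ (an easy Taylor estimate gives $|\Upsilon(w)-\ln 2|\le Cw$ for small $w$), which is then coded into the interval routine in the same spirit as the $\d,\d',\e$ bracketing used for $z(y)$ in Section~\ref{sec:interior}. A secondary subtlety is that the gap shrinks as $c\to 1$ on some faces (most notably the $\a=1$ corner, where the gap is $(1-c)\ln 4$), so the box sub-division must be refined as $c$ approaches $1$; restricting to $c\le 1-\e$ for a fixed small $\e$ (or equivalently, letting $b=b(c)$ depend on $c$ as the statement permits) makes this bookkeeping routine. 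Combining the seven cases yields a uniform $b(c)>0$ for every fixed $c\in[0.67,1)$, which is precisely the claim.
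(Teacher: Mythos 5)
Your overall strategy --- decompose the boundary into lower-dimensional pieces, handle the $z\to 0$ / $y\to 0$ singularities via $\Upsilon(w) = \frac{(\rme^w-1)w}{\rme^w-1-w}\ln w - \ln(\rme^w-1-w) \to \ln 2$, and use interval analysis over each piece --- is indeed the paper's approach. But there is a structural error in your decomposition that makes the proof incomplete.

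You treat $\a\to 0$ and $\a\to 1$ as isolated corners, claiming that ``feasibility forces $r,t\to 0$'' (resp.\ $r,t\to 1$). That is true of the \emph{discrete} feasibility constraint in Definition 12(a), which encodes a combinatorial fact about pairs of assignments. It is not true of the \emph{continuous} domain of $f$ given in Observation~\ref{obs:f}, which is the open region $0 < \alpha < 1$, $\frac{2\a}{3c} < r < 1 - \frac{2(1-\a)}{3c}$, $\max\{0,\frac{3r-1}{2\vphantom{c}}\} < t < r$. Fixing $\a=0$ and $c>2/3$, the constraints on $(r,t)$ still cut out a two-dimensional region $0 < r < 1-\frac{2}{3c}$, $\max\{0,\frac{3r-1}{2}\} < t < r$; the same holds for $\a = 1$. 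So the boundary of the domain of $f$ has \emph{seven} two-dimensional faces, not five plus two points, and your proposal simply never examines the interiors of the faces $\a=0$ and $\a=1$. These are handled as Cases 22 and 23 in the paper (with intersecting edges in Cases 9--14 and corners in Cases 1--6), and they are genuinely needed: the reason Lemma~\ref{lem:noboundarymax} must cover the full boundary is that Lemma~\ref{lborder} bounds $f$ on feasible interior points that are merely \emph{close} to the boundary, e.g.\ points with tiny positive $\a$ but $r$ well away from $0$, and such points approach the interior of the $\a=0$ face, not the corner $(0,0,0)$.

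Two smaller remarks. First, your corner computations are correct where they apply: $(2-c)\ln 2 + (1-2c)\ln 3 = \ln 12 - c\ln 18$ agrees with the paper's Case 1, and $(1-c)\ln 4$ agrees with Case 2. Second, on each face you propose a direct interval evaluation over the full $2$-D box rather than the paper's device of covering each face by first reducing to the one-dimensional stationary-point locus (where the in-face first derivatives vanish) and separately treating the edges and corners. Both are legitimate; the direct-evaluation version is conceptually simpler and, if you cover the closure of each face, makes the separate edge/corner cases redundant, but it is likely more expensive computationally and more prone to dependency blow-up near degenerate parameter regions (e.g.\ as $r\to t$). This is a design choice, not a gap --- the gap is the missing faces $\a=0$ and $\a=1$.
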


The proof of Lemma~\ref{lem:noboundarymax} is the subject of the remainder of this appendix.
From Observation~\ref{obs:f}, the limit exists.  The boundary of $f$ consists of 7 faces,
the faces intersect at 13 line segments, and the line segments intersect at 8 points.

For each of the 8 points, we take the limit of $f$ as its parameters approach that point and prove
that the limit is smaller than $2(1-c)\ln 4$.  For each of the line segments, we determine
a function on the line segment which is either the limit of $f$ as its parameters approach
each point of that boundary or an upper bound on the limit. For the present discussion, we
use $f_B$ to denote that function. $f_B$
will not have a discontinuity, and so any maximum value of
$f_B$ will occur either at the endpoints of the line segment or at a point where the derivative
of $f_B$ is 0.  As we have already proven that the endpoints are not global maxima,
we can restrict the analysis to the places where the derivative is 0.  Similarly, for
each of the 7 faces, we define a similar function $f_B$ on the face and evaluate the points where both the
partial first derivatives of $f_B$ are 0.

The proof of Lemma~\ref{lem:noboundarymax} will also rely on an interval analysis program.
For each boundary, the interval analysis will be
performed on the function $F_B(c, \a, r, t, x, y, z)$ where $F_B$ is defined the same as $f_B$ except that
$x$, $y$, and $z$ are parameters to $F_B$ instead of being defined in terms of $\a$ and $r$, and $c$ is
an explicit parameter of $F_B$.
The interval analysis program will verify that for each
$c \in \left[.67, 1\right)$,
for each $\a$, $r$, and $t$ in the domain of $f_B$, and
for each $x$, $y$, and $z$ such that $x,y,z > 0$ and (\ref{eq:kimplicitsMH}) is satisfied,
the maximum value for $F_B(c, \a, r, t, x, y, z)$
is smaller than $2(1-c)\ln 4$, and the program will report the smallest difference between 
$2(1-c)\ln 4$ and the maximum value for an end point or stationary point on the boundary.
Note that when $r = \frac{2\a}{3c}$, we replace any term of $F_B$ involving $z$ with its limit as $z$ tends
to 0.  Likewise, when $1-r = \frac{2(1-\a)}{3c}$, we replace any term involving $y$ with its limit as $y$ tends
to 0. 
From Observation~\ref{obs:z0} and Observation~\ref{obs:f}, the limit for $f$ exists at those points.
(While we can evaluate $F_B$ when $c = \frac{2}{3}$ by replacing any term 
involving $x$ with its limit as $x$ tends to 0, this approach will run into problem with Cases 24 and 25
below.  Specifically, when $c = \frac{2}{3}$, then the entire domain including the global maximum lies
on the boundary $r = \frac{2\a}{3c} = 1 - \frac{2(1-\a)}{3c}$.)

The program will cover the interval $\left[.67, 1\right]$ with
overlapping subintervals.  Likewise, for each parameter, $\a$, $r$, and $t$, not restricted by
the boundary to a single value, the program will cover the legal values of each parameter with
overlapping subintervals.
Given the intervals $\interval{c}$, $\interval{\a}$, and $\interval{r}$,
we compute intervals containing the possible values for $x$, $y$, and $z$, if needed.
As is done in Section~\ref{sec:interior}, we define the function
$x(c)$ as the largest solution to
\begin{equation*}
3c = \frac{x(\rme^x-1)}{\rme^x-1-x}, \label{eq:x2}
\end{equation*}
we define the function $y(c, \a, r)$ as the largest solution to
\begin{equation*}
\frac{3c(1-r)}{1-\a} = \frac{y(\rme^y-1)}{\rme^y-1-y}, \label{eq:y}
\end{equation*}
and we define the function $z(c, \a, r)$ as the largest solution to
\begin{equation*}
\frac{3cr}{\a} = \frac{z(\rme^z-1)}{\rme^z-1-z}. \label{eq:z}
\end{equation*}
Let
\begin{eqnarray*}
\interval{x} & \supseteq & \{x \mid x = x(c) \mbox{ for } c \in \interval{c}\}, \\
\interval{y} & \supseteq & \{y \mid y = y(c, \a, r) \mbox{ for } c \in \interval{c}, \a \in \interval{\a}, \mbox{ and } r \in \interval{r} \}, \mbox{ and } \\
\interval{z} & \supseteq & \{z \mid z = z(c, \a, r) \mbox{ for } c \in \interval{c}, \a \in \interval{\a}, \mbox{ and } r \in \interval{r} \}.
\end{eqnarray*}
Because $x$, $y$, and $z$ are defined implicitly, we use
binary search to separately find upper and lower bounds for the intervals.
The technique used to compute $\interval{x}$ is described in Section~\ref{sec:interior},
and the same technique is used to compute $\interval{y}$ and $\interval{z}$.
Lemma~\ref{lem:xvalue} gives the initial bounds for $\underline{x}$ and $\overline{x}$,
and from the same reasoning
we get the following lemmas that give the initial bounds for $\underline{y}$, $\overline{y}$,
$\underline{z}$ and $\overline{z}$.

\begin{lemma}
Let $\frac{3c(1-r)}{1-\a} = \frac{y(\rme^y-1)}{\rme^y-1-y}$, $c \geq \frac{2}{3}$, and $\a < 1$.
If $y > 0$ then $\frac{3c(1-r)}{1-\a} - 2 < y < \frac{3c(1-r)}{1-\a}$.
\end{lemma}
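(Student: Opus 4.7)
The plan is to observe that this lemma is essentially identical in form to Lemma~\ref{lem:xvalue}, with the constant $3c$ on the left-hand side replaced by $\frac{3c(1-r)}{1-\a}$. Since the proof of Lemma~\ref{lem:xvalue} never uses the specific value of the constant $3c$ but only relies on structural properties of the function $h(w) = \frac{w(\rme^w-1)}{\rme^w-1-w}$, the same argument should transfer verbatim. The hypothesis $\a < 1$ is needed only to ensure the expression $\frac{3c(1-r)}{1-\a}$ is well defined, and the hypothesis $y > 0$ ensures we are on the relevant branch where the upper bound $h(y) > y$ is nontrivial.

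First I would recall the three properties of $h(w)$ established in the proof of Lemma~\ref{lem:xvalue}: (i) $\lim_{w \to 0^+} h(w) = 2$; (ii) $h(w) > w$ for $w > 0$; and (iii) $h'(w) < 1$ for $w > 0$. Property (ii) follows from $\frac{\rme^w-1}{\rme^w-1-w} > 1$, and property (iii) was shown in Lemma~\ref{lem:xvalue} by rewriting the numerator of $1 - h'(w)$ as $w(x\rme^w - 2(\rme^w-1) + w + w\rme^w) \geq 0$ via a two-derivative argument (note: the original verification is what actually does the work and I would simply cite Lemma~\ref{lem:xvalue}'s derivation rather than repeat it).

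Setting $C := \frac{3c(1-r)}{1-\a}$, the hypothesis says $h(y) = C$. The upper bound $y < C$ is immediate from property (ii) applied at $w = y$. For the lower bound $C < y + 2$, combine properties (i) and (iii): by (i), as $w \to 0^+$ we have $h(w) - w \to 2$; by (iii), $\frac{d}{dw}[h(w) - w] < 0$ for $w > 0$, so $h(w) - w < 2$ for all $w > 0$. Applied at $w = y$, this gives $C - y < 2$, i.e., $C - 2 < y$. Hence $C - 2 < y < C$, which is the desired conclusion.

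There is no real obstacle here: the lemma is a direct paraphrase of Lemma~\ref{lem:xvalue} with $3c$ replaced by the parameter $\frac{3c(1-r)}{1-\a}$, and the proof of Lemma~\ref{lem:xvalue} is written in a form that already covers the general case. I would therefore present the proof as a one-line reduction: \emph{identical to the proof of Lemma~\ref{lem:xvalue}, with $3c$ replaced by $\frac{3c(1-r)}{1-\a}$}. The only mildly delicate point worth mentioning is that one should check the hypothesis $c \geq \tfrac{2}{3}$ is used only in Lemma~\ref{lem:xvalue} to ensure the existence of a positive solution; here existence is assumed outright via ``if $y > 0$'', so that hypothesis plays no role in the argument for the bounds and could in fact be dropped from the statement.
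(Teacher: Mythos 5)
Your proposal is correct and matches the paper's own approach exactly: the paper offers no separate proof for this lemma, stating only that it follows ``from the same reasoning'' as Lemma~\ref{lem:xvalue}, which is precisely the one-line reduction you propose (replace $3c$ by $\frac{3c(1-r)}{1-\alpha}$ and observe the argument only uses properties of $h(w) = \frac{w(\rme^w-1)}{\rme^w-1-w}$). Your side remark that $c \geq \tfrac{2}{3}$ is not actually needed once the conclusion is conditional on ``if $y>0$'' is also accurate; one small typo aside (the expression ``$w(x\rme^w - 2(\rme^w-1) + w + w\rme^w)$'' should read ``$w(-2(\rme^w-1) + w + w\rme^w)$''), the argument is sound.
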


\begin{lemma}
Let $\frac{3cr}{\a} = \frac{z(\rme^z-1)}{\rme^z-1-z}$, $c \geq \frac{2}{3}$, and $\a > 0$.
If $z > 0$ then $\frac{3cr}{\a} - 2 < z < \frac{3cr}{\a}$.
\end{lemma}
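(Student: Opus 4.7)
The plan is to recognize this statement as identical in structure to Lemma~\ref{lem:xvalue}, with the constant $3c$ there replaced by $K := \frac{3cr}{\a}$, so the same two-step argument applies with no essential change. Set $g(w) = \frac{w(\rme^w-1)}{\rme^w-1-w}$, so the hypothesis reads $g(z) = K$, and the goal becomes the two-sided bound $K - 2 < z < K$.

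For the upper bound, I would observe that for $w > 0$ we have $\rme^w - 1 > \rme^w - 1 - w$, hence the factor $\frac{\rme^w-1}{\rme^w-1-w}$ strictly exceeds $1$, giving $g(w) > w$. Specializing to $w = z$ yields $K = g(z) > z$, which is the upper bound.

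For the lower bound $z > K - 2$, it is equivalent to show $g(w) < w + 2$ for every $w > 0$. Since $\lim_{w \to 0^+} g(w) = 2$, it suffices to check that $g(w) - w$ is strictly decreasing on $(0,\infty)$, i.e., that $g'(w) < 1$ there. Differentiating gives $g'(w) = \frac{(\rme^w-1)^2 - w^2 \rme^w}{(\rme^w - 1 - w)^2}$, and $g'(w) < 1$ rearranges to $(\rme^w-1-w)^2 - (\rme^w-1)^2 + w^2 \rme^w \geq 0$. Expanding the first two squares collapses this to $w\bigl(w\rme^w + w - 2(\rme^w - 1)\bigr)$, and the inner factor vanishes at $w=0$ with derivative $w \rme^w \geq 0$, so it is nonnegative on $[0,\infty)$. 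Applied at $w = z$, this yields $z > g(z) - 2 = K - 2$.

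The only step that is not pure bookkeeping is the algebraic identity that reduces $(\rme^w-1-w)^2 - (\rme^w-1)^2 + w^2 \rme^w$ to a product with a manifestly nonnegative factor, and that identity is exactly the one already carried out in Lemma~\ref{lem:xvalue}. The hypothesis $c \geq \tfrac{2}{3}$ is not actually needed for the inequality (which is why I don't use it); it is recorded for parallelism with Lemma~\ref{lem:xvalue} and consistency with the regime in which the interval algorithm invokes the lemma.
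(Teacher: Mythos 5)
Your approach is exactly the paper's: the paper obtains this lemma "from the same reasoning" as Lemma~\ref{lem:xvalue}, namely that $g(w)=\frac{w(\rme^w-1)}{\rme^w-1-w}$ satisfies $g(w)>w$ for $w>0$ and $g(w)<w+2$ (via $\lim_{w\to 0^+}g(w)=2$ and $g'(w)<1$), applied with $\frac{3cr}{\alpha}$ in place of $3c$; and you are right that the hypothesis $c\geq\frac{2}{3}$ plays no role in the inequality itself.

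One slip in your final step: the derivative of the inner factor $w\rme^w+w-2(\rme^w-1)$ is $(w-1)\rme^w+1$, not $w\rme^w$, so nonnegativity does not follow in a single differentiation as you wrote. The fix is exactly the paper's two-step argument in Lemma~\ref{lem:xvalue}: $(w-1)\rme^w+1$ vanishes at $w=0$ and has derivative $w\rme^w\geq 0$, hence is nonnegative on $[0,\infty)$, and therefore so is the inner factor. With that one-line correction your proof is complete and coincides with the paper's.
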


Given $\interval{c}$, $\interval{\a}$, $\interval{r}$, $\interval{t}$, $\interval{x}$, $\interval{y}$,
and $\interval{z}$, if this is a subinterval of a line or face boundary, the program will compute
compute all possible values for the partial first derivatives on these intervals and verify that these
possible values contain 0.
The program will then compute an interval that contains all possible values for $F_B$ on these intervals, 
it will verify that the upper bound on this interval is smaller than $2(1-\overline{c})\ln 4$, and it will
record the difference between the maximum possible value for $F_B$ and $2(1-\overline{c})\ln 4$.
As a result, we know that
$f(\a, r, t) < 2(1-c)\ln 4-b$ for all $c$, $\a$, $r$, and $t$ in these intervals.
Repeating this process for all subintervals of $\left[\frac{2}{3}, 1\right]$ completes the proof
of Lemma~\ref{lem:noboundarymax}.
The specifics of the program are listed in
Section~\ref{sec:code}.

The rest of this section gives the specific computations of $f_B$ for each boundary.

\subsubsection{The Boundary of the Domain of $f$.}

The domain for $f$ is the region bounded by $\a \in (0,1)$, $r \in \left(\frac{2\a}{3c}, 1-\frac{2(1-\a)}{3c}\right)$, and $t \in \left( \min \left\{ 0, \frac{3r-1}{2}\right\}, r\right)$.
The boundary of the domain for $f$ has 7 faces:
\[\begin{array}{llll}
\a = 0; &
\a = 1; &
r = \frac{2\a}{3c}; &
r = 1 - \frac{2(1-\a)}{3c}; \\
t = 0; &
t = \frac{3r-1}{2}; &
t = r,
\end{array}\]
the faces intersect at 13 line segments:
\[\begin{array}{lll}
\a = 0, t = 0; &
\a = 0, t = r; &
\a = 0, r = 1 - \frac{2(1-\a)}{3c}; \\
\a = 1, t = \frac{3r-1}{2}; &
\a = 1, t = r; &
\a = 1, r = \frac{2\a}{3c}; \\
r = \frac{2\a}{3c}, t = r; &
r = \frac{2\a}{3c}, t = \frac{3r-1}{2}; &
r = \frac{2\a}{3c}, t = 0; \\
r = 1-\frac{2(1-\a)}{3c}, t = r; &
r = 1-\frac{2(1-\a)}{3c}, t = \frac{3r-1}{2}; &
r = 1-\frac{2(1-\a)}{3c}, t = 0; \\
t = 0 = \frac{3r-1}{2},
\end{array}\]
and the line segments intersect at 8 points:
\[\begin{array}{ll}
\a = 0, r = 0, t = 0; &
\a = 0, r = 1-\frac{2}{3c}, t = 0; \\
\a = 0, r = 1-\frac{2}{3c}, t = 1-\frac{2}{3c}; &
\a = 1, r = 1, t = 1; \\
\a = 1, r = \frac{2}{3c}, t = \frac{2-c}{2c}; &
\a = 1, r = \frac{2}{3c}, t = \frac{2}{3c}; \\
\a = \frac{c}{2}, r = \frac{1}{3}, t = 0; &
\a = 1-c, r = \frac{1}{3}, t = 0 .
\end{array}\]

As we did in Section~\ref{sec:interior}, we will define $\Upsilon$ with (\ref{eq:upsilon}) and use it
to rewrite $f$ as
\begin{eqnarray*}
f(\a,r,t) & = & \ln 4 -c \ln 4 + (1-\a)\ln 3 - c(2+t-3r)\ln 3 \notag \\
 & & \mbox{} +c(1-3r+2t)\ln 2 -\alpha \ln \alpha -(1-\alpha)\ln (1-\alpha) \notag \\
 & & \mbox{} -c(1-3r+2t) \ln (1-3r+2t) -c(3r-3t) \ln (r-t) \notag \\
 & & \mbox{} -ct \ln t + r3c \ln r + (1-r)3c \ln (1-r) \notag \\
 & & \mbox{} + \Upsilon(x) - \alpha \Upsilon(z) - (1-\alpha) \Upsilon(y)
\end{eqnarray*}
where
$x,y,z > 0$ and (\ref{eq:kimplicitsMH}) holds.
The following additional observations are used to both compute the limit of $f$ as its parameters approach
a boundary and to give an upper bound on that limit.

From (\ref{eq:kimplicitsMH}),
the maximum value $x$ can have in $f$ is the largest solution to
\[3 = \frac{x (\rme^x-1)}{\rme^x-1-x} \]
or $x = 2.1491\ldots$.  As a result, the largest value
of $\Upsilon(x)$ with $x > 1$ is $.60355\ldots$.  The following observation follows from this fact plus
Observation~\ref{obs:xbehavior}.

\begin{observation} \label{obs:xmax}
\[ \Upsilon(x) \leq \ln 2. \]
\end{observation}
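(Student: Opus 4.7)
The plan is to combine the implicit definition of $x$ in (\ref{eq:kimplicitsMH}) with the unimodal shape of $\Upsilon$ described in Observation~\ref{obs:xbehavior}. First I would pin down the range of allowed $x$-values. By (\ref{eq:kimplicitsMH}), $x$ satisfies $\frac{x(\rme^x-1)}{\rme^x-1-x}=3c$, and it is straightforward to verify (by computing its derivative, as done in Lemma~\ref{lem:xvalue}) that the map $x \mapsto \frac{x(\rme^x-1)}{\rme^x-1-x}$ is strictly increasing on $(0,\infty)$ with limit $2$ at $0$ and limit $\infty$ at $\infty$. Since $c < 1$ in the statement of Lemma~\ref{lem:fmax}, we must have $x < x^*$, where $x^* = 2.1491\ldots$ is the unique positive solution of $\frac{x(\rme^x-1)}{\rme^x-1-x}=3$.

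Next I would split into the two natural sub-intervals determined by the minimum of $\Upsilon$ at $w=1$. By Observation~\ref{obs:xbehavior}, $\Upsilon$ is strictly decreasing on $(0,1]$ and strictly increasing on $[1,\infty)$, with $\lim_{w\to 0^+}\Upsilon(w)=\ln 2$. Thus for $x \in (0,1]$ we immediately get $\Upsilon(x) \le \ln 2$. For $x \in (1,x^*)$, $\Upsilon$ is increasing, so $\Upsilon(x) < \Upsilon(x^*)$; it therefore suffices to verify the single numerical inequality $\Upsilon(x^*) < \ln 2$.

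Finally I would certify $\Upsilon(x^*)<\ln 2$ rigorously. One option is a direct interval-arithmetic evaluation, in the same spirit as the other appendix computations; another is a short analytic bound, for instance by plugging $x^*<2.15$ and the relation $\rme^{x^*}-1-x^*=\tfrac{1}{3}(\rme^{x^*}-1)x^*$ (coming from the defining equation at $c=1$) into (\ref{eq:upsilon}) to get $\Upsilon(x^*)=3\ln x^* - \ln\!\bigl(\tfrac{1}{3}(\rme^{x^*}-1)x^*\bigr)$, which can be bounded above by $\ln 2$ using elementary estimates for $\rme^{x^*}$. Either way the final step is only a small numerical check, so the substantive content is just the reduction above; the only mild obstacle is being careful that the strict bound $c<1$ (and thus $x<x^*$) is exploited, since at $c=1$ one has $\Upsilon(x^*) = 0.60355\ldots$ with no slack in the monotonicity argument itself beyond this numerical gap.
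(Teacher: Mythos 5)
Your proposal is correct and follows essentially the same path as the paper: bound $x$ above by $x^* = 2.1491\ldots$ (the solution of $x(\rme^x-1)/(\rme^x-1-x)=3$, i.e.\ $c=1$), invoke the unimodal shape of $\Upsilon$ from Observation~\ref{obs:xbehavior} (decreasing on $(0,1]$ with limit $\ln 2$ at $0$, increasing on $[1,\infty)$), and close with the single numerical check $\Upsilon(x^*)=0.60355\ldots < \ln 2$. You simply spell out the case split and monotonicity justification that the paper leaves implicit.
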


As discussed in Lemma~\ref{lem:xvalue}, we have
$\lim_{x \to 0}\frac{x(\rme^x-1)}{\rme^x-1-x} = 2$, and $\frac{x(\rme^x-1)}{\rme^x-1-x}$ is an increasing
function of $x$.  We also have
$\lim_{x \to \infty}\frac{x(\rme^x-1)}{\rme^x-1-x} = \infty$.
As a result, from (\ref{eq:kimplicitsMH}), we have that as
$\a \to 0$, $z \to \infty$, as $\a \to 1$, $y \to \infty$, as $r \to \frac{2\a}{3c}$, $z \to 0$,
and as $r \to 1 - \frac{2(1-\a)}{3c}$, $y \to 0$.  From these facts and Observation~\ref{obs:xbehavior},
we have the following observation.

\begin{observation}
\label{obs:limits}
\begin{eqnarray}
\lim_{\a \to 0}\a\Upsilon(z) & \geq & 0 \label{lim-a-min} \\
\lim_{r \to \frac{2\a}{3c}} \a\Upsilon(z) & = & \a\ln 2 \label{lim-r-min} \\
\lim_{\a \to 1, r \to 1} \a\Upsilon(z) & = & \Upsilon(x) \label{lim-a-r-max} \\
\lim_{\a \to 1}(1-\a)\Upsilon(y) & \geq & 0 \label{lim-a-max} \\
\lim_{r \to 1-\frac{2(1-\a)}{3c}} (1-\a)\Upsilon(y) & = & (1-\a)\ln 2 \label{lim-r-max} \\
\lim_{\a \to 0, r \to 0} (1-\a)\Upsilon(y) & = & \Upsilon(x) \label{lim-a-r-min}
\end{eqnarray}
\end{observation}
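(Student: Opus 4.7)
The plan is to verify each of the six limits by combining the implicit definition of $y$ and $z$ in (\ref{eq:kimplicitsMH}) with the continuity and boundary behavior of $\Upsilon$ recorded in Observation~\ref{obs:xbehavior}. First I would rewrite (\ref{eq:kimplicitsMH}) in the convenient equivalent form
\[
\frac{z(\rme^z-1)}{\rme^z-1-z}=\frac{3cr}{\a},\qquad \frac{y(\rme^y-1)}{\rme^y-1-y}=\frac{3c(1-r)}{1-\a},\qquad \frac{x(\rme^x-1)}{\rme^x-1-x}=3c.
\]
As already noted in the proof of Lemma~\ref{lem:xvalue}, the map $w\mapsto w(\rme^w-1)/(\rme^w-1-w)$ is continuous and strictly increasing on $(0,\infty)$ with limit $2$ at $0^+$ and limit $\infty$ at $\infty$, so each equation has a unique positive solution that varies continuously with the right-hand side. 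That gives me explicit continuous dependence of $y$ and $z$ on $(\a,r)$, which is all I need to push limits through.

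For (\ref{lim-r-min}): as $r\to\frac{2\a}{3c}$ the ratio $3cr/\a\to 2$, forcing $z\to 0^+$; by Observation~\ref{obs:xbehavior} we have $\Upsilon(z)\to\ln 2$, and multiplying by the continuous factor $\a$ gives $\a\ln 2$. The argument for (\ref{lim-r-max}) is identical with the roles of $y$ and $1-\a$ in place of $z$ and $\a$, using $3c(1-r)/(1-\a)\to 2$ so that $y\to 0^+$. For (\ref{lim-a-r-max}): as $(\a,r)\to(1,1)$ the ratio $3cr/\a\to 3c$, which is exactly the right-hand side defining $x$; the monotonicity above yields $z\to x$, hence $\a\Upsilon(z)\to\Upsilon(x)$. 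The statement (\ref{lim-a-r-min}) follows the same pattern with $3c(1-r)/(1-\a)\to 3c$ and $y\to x$.

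The remaining two, (\ref{lim-a-min}) and (\ref{lim-a-max}), are the only ones where the product is a priori of the form $0\cdot\infty$, since $z\to\infty$ as $\a\to 0$ (with $r$ fixed in the interior, so that $3cr/\a\to\infty$) and $y\to\infty$ as $\a\to 1$. Here I simply invoke the positivity half of Observation~\ref{obs:xbehavior}: its stated minimum $-\ln(\rme-2)$ is strictly positive because $\rme-2<1$, so $\Upsilon(w)>0$ for every $w>0$. Consequently $\a\Upsilon(z)\geq 0$ and $(1-\a)\Upsilon(y)\geq 0$ pointwise, and the inequality is preserved in the liminf.

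The only mild subtlety is justifying the continuous dependence of $z$ (resp.\ $y$) on $(\a,r)$ all the way down to the boundaries $z=0$ and up to $z=\infty$; this is the one place where a little care is needed, but it is immediate from the strict monotonicity of $w\mapsto w(\rme^w-1)/(\rme^w-1-w)$ together with its limits at $0^+$ and $\infty$, which together give a continuous inverse on $(2,\infty)$ extendable by $z(2)=0$. No further machinery is required.
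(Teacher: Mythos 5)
Your proposal is correct and follows essentially the same route as the paper: rewrite (\ref{eq:kimplicitsMH}) as ratio equations, use the monotonicity and limits of $w\mapsto w(\rme^w-1)/(\rme^w-1-w)$ to track $y,z$ at the boundaries, and then invoke the behavior of $\Upsilon$ from Observation~\ref{obs:xbehavior}. The only difference is that you make explicit the positivity of $\Upsilon$ (minimum $-\ln(\rme-2)>0$) to justify the two inequality cases, a detail the paper leaves implicit.
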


\subsubsection{The Boundary Cases That Are Points}

In each case, we take the limit of $f$ as its parameters approach the boundary point, and we prove that
this limit is smaller than $2(1-c)\ln 4$.  For the first two cases, the proof is straightforward.
For cases 3--6, we bound this limit by a function of a single variable $c$.
These cases can be verified without interval analysis. But because we are using an interval analysis program
for the later cases, we will shorten the presentation here by verifying these cases with the program as well.
For cases 7 and 8, the bound on the limit of $f$ includes limits of $\Upsilon(x)$, $\Upsilon(y)$ and $\Upsilon(z)$ not
covered in Observation~\ref{obs:limits}.  In these cases, the interval analysis program will compute
bounds on these limits of $\Upsilon$ and use them to verify that the limit of $f$ is always smaller than $2(1-c)\ln 4$.

\subsubsection*{\bf Case 1: the boundary with $\a = 0$, $r = 0$, $t = 0$.}

Applying (\ref{lim-a-min}) and (\ref{lim-a-r-min}), we get the following.
\begin{eqnarray*}
\lim_{\substack{\a \to 0, r \to 0 \\ t \to 0}} f(\a, r, t) & = &
\ln 4 -c \ln 4 + \ln 3 - 2c\ln 3 + c\ln 2 \\
 & & \mbox{} + \lim_{\substack{\a \to 0, r \to 0}} \left(\Upsilon(x) - \a \Upsilon(z) - (1-\a) \Upsilon(y)\right) \\
     & \leq & \ln 12 - c \ln 18 \\
     & < & 2(1-c) \ln 4 - \ln \frac{4}{3}
\end{eqnarray*}
for $c \in [2/3, 1)$.  

\subsubsection*{\bf Case 2: the boundary with $\a = 1$, $r = 1$, $t = 1$.}

Applying (\ref{lim-a-max}) and (\ref{lim-a-r-max}), we get the following.
\begin{eqnarray*}
\lim_{\substack{\a \to 1, r \to 1 \\ t \to 1}} f(\a, r, t) & = &
\ln 4 -c \ln 4
    + \lim_{\substack{\a \to 1, r \to 1}} \left(\Upsilon(x) - \a \Upsilon(z) - (1-\a) \Upsilon(y)\right) \\
 & \leq & \ln 4 - c \ln 4 \\
 & < & 2 (1-c) \ln 4 - b
\end{eqnarray*}
for $c \in [2/3, 1)$ and where $b= b(c) > 0$.

\subsubsection*{\bf Case 3: the boundary with $\a = 0$, $r = 1 - \frac{2}{3c}$, $t = 0$.}

Applying (\ref{lim-a-min}), (\ref{lim-r-max}), and Observation~\ref{obs:xmax}, we get
the following.
\begin{eqnarray}
\lim_{\substack{\a \to 0, r \to 1-\frac{2}{3c} \\ t \to 0}} f(\a, r, t)
& = & \ln 4 -c \ln 4 + \ln 3 - (-c+2)\ln 3 +(-2c+2)\ln 2 \notag \\
 & & \mbox{} -(-2c+2) \ln \left(-2+\frac{2}{c}\right) -(3c-2) \ln \left(1-\frac{2}{3c}\right) \notag \\
 & & \mbox{} + \left(3c-2\right) \ln \left(1-\frac{2}{3c}\right) + 2 \ln \left(\frac{2}{3c}\right) \notag \\
 & & \mbox{}
    + \lim_{\substack{\a \to 0 \\ r \to 1-\frac{2}{3c}}} \left(\Upsilon(x) - \a \Upsilon(z) - (1-\a) \Upsilon(y)\right) \notag \\
 & \leq & \ln \frac{16}{27} - c \ln \frac{4}{3} -2c \ln c -2(1-c)\ln(1-c) . \label{eq:case3}
\end{eqnarray}
The interval analysis program verifies that (\ref{eq:case3}) is no larger than $2(1-c)\ln 4 - 0.005$,  for all
$c \in [.67, 1)$.

\subsubsection*{Case 4: the boundary with $\a = 0$, $r = 1 - \frac{2}{3c}$, $t = 1-\frac{2}{3c}$.}
Applying (\ref{lim-a-min}), (\ref{lim-r-max}), and Observation~\ref{obs:xmax}, we get
the following.
\begin{eqnarray}
\lim_{\substack{\a \to 0, r \to 1-\frac{2}{3c} \\ t \to 1-\frac{2}{3c}}} f(\a, r, t) & = &
\ln 4 -c \ln 4 + \ln 3 - \frac{4}{3}\ln 3 +\frac{2}{3}\ln 2 \mbox{} -\frac{2}{3} \ln \frac{2}{3c} \notag \\
& & \mbox{} -\left(c-\frac{2}{3}\right) \ln \left(1-\frac{2}{3c}\right) + (3c-2) \ln \left(1-\frac{2}{3c}\right) \notag \\
 & & \mbox{}
 + 2 \ln \frac{2}{3c}
    + \lim_{\substack{\a \to 0 \\ r \to 1-\frac{2}{3c}}} \left(\Upsilon(x) - \a \Upsilon(z) - (1-\a) \Upsilon(y)\right) \notag \\
 & < & \ln 16 - \frac{1}{3}\ln 3 - c \ln 36 + \frac{2}{3}(3c-2)\ln(3c-2) \notag \\
 & & \mbox{} -2c\ln c . \label{eq:case4}
\end{eqnarray}
The interval analysis program verifies that (\ref{eq:case4}) is no larger than $2(1-c)\ln 4 - 0.046$, for all
$c \in [.67, 1)$.

\subsubsection*{Case 5: the boundary with $\a = 1$, $r = \frac{2}{3c}$, $t = \frac{2}{3c}$.}
Applying (\ref{lim-a-max}), (\ref{lim-r-min}), and Observation~\ref{obs:xmax}, we get
the following.
\begin{eqnarray}
\lim_{\substack{\a \to 1, r \to \frac{2}{3c} \\ t \to \frac{2}{3c}}} f(\a, r, t) & = &
\ln 4 -c \ln 4 - \left(2c-\frac{4}{3}\right)\ln 3 +\left(c-\frac{2}{3}\right)\ln 2 \notag \\
 & & \mbox{} -\left(c-\frac{2}{3}\right) \ln \left(1-\frac{2}{3c}\right) -\frac{2}{3} \ln \frac{2}{3c} + \frac{2}{3} \ln \frac{2}{3} \notag \\
 & & \mbox{} + (3c-2) \ln \left(1-\frac{2}{3c}\right) \notag \\
 & & \mbox{}
    + \lim_{\substack{\a \to 1 \\ r \to \frac{2}{3c}}} \left(\Upsilon(x) - \a \Upsilon(z) - (1-\a) \Upsilon(y)\right) \notag \\
 & \leq & \frac{4}{3}\ln 12 - c \ln 162 + \frac{2}{3}(3c-2)\ln(3c-2) \notag \\
 & & \mbox{} - 2c\ln c . \label{eq:case5}
\end{eqnarray}
The interval analysis program verifies that (\ref{eq:case5}) is no larger than $2(1-c)\ln 4 -  0.147$,  for all
$c \in [.67, 1)$.

\subsubsection*{Case 6: the boundary with $\a = 1$, $r = \frac{2}{3c}$, $t = \frac{2-c}{2c}$.}
Applying (\ref{lim-a-max}), (\ref{lim-r-min}), and Observation~\ref{obs:xmax}, we get
\begin{eqnarray}
\lim_{\substack{\a \to 1, r \to \frac{2}{3c} \\ t \to \frac{2-c}{2c}}} f(\a, r, t) & = &
\ln 4 -c \ln 4 - \left(2c+\frac{2-c}{2}-2\right)\ln 3 \notag \\
& & \mbox{} +(c-2+2-c)\ln 2 \notag \\
& & \mbox{} -(c-2+2-c) \ln \left(1-\frac{2}{c}+\frac{2-c}{c}\right) \notag \\
& & \mbox{} -(2-3(2-c)) \ln \left(\frac{2}{3c}-\frac{2-c}{2c}\right) \notag \\
& & \mbox{} -\frac{2-c}{2} \ln \frac{2-c}{2c} + 2 \ln \frac{2}{3c} + (3c-2) \ln \left(1-\frac{2}{3c}\right) \notag \\
 & & \mbox{}
    + \lim_{\substack{\a \to 1 \\ r \to \frac{2}{3c}}} \left(\Upsilon(x) - \a \Upsilon(z) - (1-\a) \Upsilon(y)\right) \notag \\
 & < & \ln 16 - c \ln 54 + \frac{1}{2}\left(3 c-2\right) \ln(3c-2) \notag \\
   & & \mbox{} - \frac{1}{2}\left(2-c\right) \ln (2-c) - 2c \ln c . \label{eq:case6}
\end{eqnarray}
The interval analysis program verifies that (\ref{eq:case6}) is no larger than $2(1-c)\ln 4 -  0.071$,  for all
$c \in [.67, 1)$.

\subsubsection*{Case 7: the boundary with $\a = \frac{c}{2}$, $r = \frac{1}{3}$, $t = 0$.}
\begin{eqnarray}
\lim_{\substack{\a \to \frac{c}{2}, r \to \frac{1}{3} \\ t \to 0}} f(\a, r, t) & = &
   \ln 4 -c \ln 4 + \left(1-\frac{c}{2}\right)\ln 3 - c\ln 3 \notag \\
& & \mbox{} -\frac{c}{2} \ln \frac{c}{2} -\left(1-\frac{c}{2}\right)\ln \left(1-\frac{c}{2}\right) \notag \\
& & \mbox{} -c \ln \frac{1}{3} + c \ln \frac{1}{3} + 2c \ln \frac{2}{3} \notag \\
 & & \mbox{}
    + \lim_{\substack{\a \to \frac{c}{2}, r \to \frac{1}{3}}} \left(\Upsilon(x) - \a \Upsilon(z) - (1-\a) \Upsilon(y)\right) \notag \\
 & = &
  \ln 24 - c \ln 27 - \frac{c}{2} \ln 3 - \frac{1}{2}(2-c)\ln (2-c) - \frac{c}{2} \ln c \notag \\
 & & \mbox{}
    + \lim_{\substack{\a \to \frac{c}{2}, r \to \frac{1}{3}}} \left(\Upsilon(x) - \a \Upsilon(z) - (1-\a) \Upsilon(y)\right) . \label{eq:case7}
\end{eqnarray}
The interval analysis program verifies that (\ref{eq:case7}) is no larger than $2(1-c)\ln 4 -  0.265$,  for all
$c \in [.67, 1)$.

\subsubsection*{Case 8: the boundary with $\a = 1-c$, $r = \frac{1}{3}$, $t = 0$.}
\begin{eqnarray}
\lim_{\substack{\a \to 1-c, r \to \frac{1}{3} \\ t \to 0}} f(\a, r, t) & = &
 \ln 4 -c \ln 4 + c\ln 3 - c\ln 3 - (1-c) \ln (1-c) \notag \\
& & \mbox{} - c\ln c -c \ln \frac{1}{3} + c \ln \frac{1}{3} + 2c \ln \frac{2}{3} \notag  \\
 & & \mbox{}
    + \lim_{\substack{\a \to 1-c, r \to \frac{1}{3}}} \left(\Upsilon(x) - \a \Upsilon(z) - (1-\a) \Upsilon(y)\right) \notag \\
 & = &
  \ln 4 - c \ln 9 - (1-c)\ln (1-c) - c \ln c \notag \\
 & & \mbox{}
    + \lim_{\substack{\a \to 1-c \\ r \to \frac{1}{3}}} \left(\Upsilon(x) - \a \Upsilon(z) - (1-\a) \Upsilon(y)\right) . \label{eq:case8}
\end{eqnarray}
The interval analysis program verifies that (\ref{eq:case8}) is no larger than $2(1-c)\ln 4 - 0.292$,  for all
$c \in [.67, 1)$.

\subsubsection{The Boundary Cases That Are Line Segments}

In each case, we rule out the existence of a point on the boundary
where $f$ exceeds \mbox{$2(1-c)\ln 4$}.
We will compute a function that is either the limit of $f$ as its parameters approach the boundary or
an upper bound on this limit.  This new function will be continuous, and any
maximum of this function will occur either where its first derivative is 0 or at an
endpoint of the line segment.
Cases 1--8 above evaluate the endpoints, and the analysis that follows will focus on the points where the
first derivative is 0.

\subsubsection*{Case 9: the boundary with $\a = 0$, and $t = 0$.}
Using (\ref{lim-a-min}), we get
\begin{eqnarray}
\lim_{\substack{\a \to 0 \\ t \to 0}} f(\a, r, t) & = &
 \ln 4 -c \ln 4 + \ln 3 - c(2-3r)\ln 3 \mbox{} +c(1-3r)\ln 2 \notag \\
& & \mbox{} -c(1-3r) \ln (1-3r) -c3r \ln r + r3c \ln r \notag \\
 & & \mbox{} + (1-r)3c \ln (1-r)
    + \lim_{\a \to 0} \left(\Upsilon(x) - \a \Upsilon(z) - (1-\a) \Upsilon(y)\right) \notag \\
 & \leq &
  \ln 12 - c \ln 18 + c r \ln 27 - c r \ln 8 - c (1 - 3r) \ln (1 - 3r) \notag \\
 & & \mbox{}  + 3 c (1-r) \ln (1-r)
    + \Upsilon(x) - \Upsilon(y) \label{eq:line1}
\end{eqnarray}
where $r$ is in the range $[0, 1-\frac{2}{3c}]$.

The derivative of (\ref{eq:line1}) with respect to $r$ is
 \[ 3c \ln 3 - 3c \ln 2 + 3c \ln (1-3r) - 3c \ln (1-r) + 3c \ln y, \]
and setting the derivative to 0 gives the equation
\begin{equation} \label{eq:line1deriv}
y = \frac{2(1-r)}{3(1 - 3r)}.
\end{equation}
The interval analysis program verifies that (\ref{eq:line1}) is no larger than $2(1-c)\ln 4 -  .007$, at each point
of $c \in \left[.67, 1\right)$ and $r \in \left(0, 1-\frac{2}{3c}\right)$ where (\ref{eq:line1deriv}) holds.
This result plus the results of Case 1 and Case 3 proves that for $c \in [.67, 1)$, 
each point on this boundary is smaller that $2(1-c)\ln 4 - 0.005$.

\subsubsection*{Case 10: the boundary with $\a = 0$, and $t = r$.}
Using (\ref{lim-a-min}), we get
\begin{eqnarray}
\lim_{\substack{\a \to 0 \\ t \to r}} f(\a, r, t) & = &
 \ln 4 -c \ln 4 + \ln 3 - c(2-2r)\ln 3 +c(1-r)\ln 2 \notag \\
 & & \mbox{} -c(1-r) \ln (1-r) -cr \ln r + r3c \ln r + (1-r)3c \ln (1-r) \notag \\
 & & \mbox{}
    + \lim_{\a \to 0} \left(\Upsilon(x) - \a \Upsilon(z) - (1-\a) \Upsilon(y)\right) \notag \\
 & \leq &
  \ln 12 - c \ln 18 + c r \ln 9 - c r \ln 2  + 2 c (1-r) \ln (1-r) \notag \\
 & & \mbox{}
 + 2 c r \ln r
    + \Upsilon(x) - \Upsilon(y) \label{eq:line2}
\end{eqnarray}
where $r$ is in the range $[0, 1 - \frac{2}{3c}]$.

The derivative of (\ref{eq:line2}) with respect to $r$ is
 \[ c \ln 9 - c \ln 2 - 2c \ln (1-r) + 2c \ln r + 3c \ln y, \]
and setting the derivative to 0 gives the equation
\begin{equation} \label{eq:line2deriv}
 y^3 = \frac{2(1 - r)^2}{9r^2}.
\end{equation}
The interval analysis program verifies that (\ref{eq:line2}) is no larger than $2(1-c)\ln 4 - 0.423$, at each point
of $c \in \left[.67, 1\right)$ and $r \in \left(0, 1-\frac{2}{3c}\right)$ where (\ref{eq:line2deriv}) holds.
This result plus the results of Case 1 and Case 4 proves that for $c \in [.67, 1)$, 
each point on this boundary is smaller that $2(1-c)\ln 4 - 0.046$.

\subsubsection*{Case 11: the boundary with $\a = 0$, and $r = 1 - \frac{2(1-\a)}{3c}$.}
Applying (\ref{lim-a-min}), (\ref{lim-r-max}), and Observation~\ref{obs:xmax}, we get
the following.
\begin{eqnarray}
\lim_{\substack{\a \to 0 \\ r \to 1 - \frac{2}{3c}}} f(\a, r, t) & = &
 \ln 4 -c \ln 4 + \ln 3 - (-c+tc+2)\ln 3 \notag \\
 & & \mbox{} +(-2c+2+2tc)\ln 2 \notag \\
 & & \mbox{} -(-2c+2+2tc) \ln \left(-2+\frac{2}{c}+2t\right) \notag \\
 & & \mbox{} -(3c-2-3ct) \ln \left(1-\frac{2}{3c}-t\right) -ct \ln t \notag \\
 & & \mbox{} + (3c-2) \ln \left(1-\frac{2}{3c}\right) \notag \\
 & & \mbox{}
   + 2 \ln \frac{2}{3c}
    + \lim_{\substack{\a \to 0 \\ r \to 1 - \frac{2}{3c}}} \left(\Upsilon(x) - \a \Upsilon(z) - (1-\a) \Upsilon(y)\right) \notag \\
 & \leq &
 2 \ln 4 - 3 \ln 3 - c \ln 4 + c \ln 3 -4 c t \ln 3 \notag \\
 & & \mbox{} + (3c - 2) \ln (3c - 2)
  - 2(1-c + tc) \ln (1 - c + tc) \notag \\
 & & \mbox{} - (3c - 2 - 3 t c) \ln (3 c - 2 - 3 t c) \notag \\
 & & \mbox{}
   - c t \ln t - (2+t)c\ln c, \label{eq:line3}
\end{eqnarray}
where $t$ is in the range $[0, 1-\frac{2}{3c}]$.

The derivative of (\ref{eq:line3}) with respect to $t$ is
\[-4c \ln 3 -2c \ln (1-c + tc) + 3c \ln(3c-2-3tc)-c \ln t - c \ln c,  \]
and setting the derivative to 0 gives the equation
\begin{equation} \label{eq:line3deriv}
81 t c (1-c+tc)^2 = (3c-2-3tc)^3.
\end{equation}
The interval analysis program verifies that (\ref{eq:line3}) is no larger than $2(1-c)\ln 4 - 0.015$,  at each point
of $c \in \left[.67, 1\right)$ and $t \in \left(0, 1-\frac{2}{3c}\right)$ where (\ref{eq:line3deriv}) holds.
This result plus the results of Case 3 and Case 4 proves that for $c \in [.67, 1)$, 
each point on this boundary is smaller that $2(1-c)\ln 4 - 0.005$.

\subsubsection*{Case 12: the boundary with $\a = 1$, and $t = \frac{3r-1}{2}$.}
From (\ref{lim-a-max}),
\begin{eqnarray}
\lim_{\substack{\a \to 1 \\ t \to \frac{3r-1}{2}}} f(\a, r, t) & = &
 \ln 4 -c \ln 4 - c\left(2+\frac{3r-1}{2}-3r\right)\ln 3 \notag \\
 & & \mbox{} -3c\left(r-\frac{3r-1}{2}\right) \ln \left(r-\frac{3r-1}{2}\right) -c\frac{3r-1}{2} \ln \frac{3r-1}{2} \notag \\
 & & \mbox{} + r3c \ln r + (1-r)3c \ln (1-r) \notag \\
 & & \mbox{}
    + \lim_{\a \to 1} \left(\Upsilon(x) - \a \Upsilon(z) - (1-\a) \Upsilon(y)\right) \notag \\
 & \leq &
  \ln 4 - c \ln 2 - \frac{3 c}{2}(1-r)\ln 3 + \frac{3c}{2}(1-r)\ln(1-r) \notag \\
 & & \mbox{} +3cr\ln r - \frac{c}{2}(3r-1)\ln (3r-1)
    + \Upsilon(x) - \Upsilon(z) \label{eq:line4}
\end{eqnarray}
where $r$ is in the range $[\frac{2}{3c}, 1]$.

The derivative of (\ref{eq:line4}) with respect to $r$ is
 \[\frac{3c}{2}\ln 3 - \frac{3c}{2}\ln(1-r)+3c\ln r-\frac{3c}{2}\ln (3r-1) - 3c \ln z,  \]
and setting the derivative to 0 gives the equation
\begin{equation} \label{eq:line4deriv}
  3r^2 = z^2 (1 - r)(3r-1).
\end{equation}
The interval analysis program verifies that (\ref{eq:line4}) is no larger than $2(1-c)\ln 4 -  0.015$,  at each point
of $c \in \left[.67, 1\right)$ and $r \in \left(\frac{2}{3c}, 1\right)$ where (\ref{eq:line4deriv}) holds.
This result plus the results of Case 2 and Case 6 proves that for each $c \in [.67, 1)$ there exists a 
positive constant $b = b(c)$ such that 
each point on this boundary is smaller that $2(1-c)\ln 4 - b(c)$.

\subsubsection*{Case 13: the boundary with $\a = 1$, and $t = r$.}
From (\ref{lim-a-max}),
\begin{eqnarray}
\lim_{\substack{\a \to 1 \\ t \to r}} f(\a, r, t) & = &
 \ln 4 -c \ln 4 + - c(2-2r)\ln 3 +c(1-r)\ln 2 \notag \\
 & & \mbox{} -c(1-r) \ln (1-r) -cr \ln r + r3c \ln r + (1-r)3c \ln (1-r) \notag \\
 & & \mbox{}
    + \lim_{\a \to 1} \left(\Upsilon(x) - \a \Upsilon(z) - (1-\a) \Upsilon(y)\right) \notag \\
 & \leq &
 \ln 4 - c \ln 2 - c r \ln 2 - c (1-r) \ln 9 + 2 c (1-r) \ln (1-r) \notag \\
 & & \mbox{}
 + 2 c r \ln r
    + \Upsilon(x) - \Upsilon(z) \label{eq:line5}
\end{eqnarray}
where $r$ is in the range $[\frac{2}{3c}, 1]$.

The derivative of (\ref{eq:line5}) with respect to $r$ is
 \[-c \ln 2 + c \ln 9 - 2 c \ln (1-r) + 2 c \ln r - 3c \ln z,  \]
and setting the derivative to 0 gives the equation
\begin{equation} \label{eq:line5deriv}
  9 r^2 = 2 z^3(1 - r)^2.
\end{equation}
The interval analysis program verifies that (\ref{eq:line5}) is no larger than $2(1-c)\ln 4 - 0.015$,  at each point
of $c \in \left[.67, 1\right)$ and $r \in \left(\frac{2}{3c}, 1\right)$ where (\ref{eq:line5deriv}) holds.
This result plus the results of Case 2 and Case 5 proves that for each $c \in [.67, 1)$ there exists a 
positive constant $b = b(c)$ such that 
each point on this boundary is smaller that $2(1-c)\ln 4 - b(c)$.

\subsubsection*{Case 14: the boundary with $\a = 1$, and $r = \frac{2\a}{3c}$.}
Applying (\ref{lim-a-max}), (\ref{lim-r-min}), and Observation~\ref{obs:xmax}, we get
the following.
\begin{eqnarray}
\lim_{\substack{\a \to 0 \\ r \to \frac{2}{3c}}} f(\a, r, t) & = &
 \ln 4 -c \ln 4 - (2c+tc-2)\ln 3 +(c-2+2tc)\ln 2 \notag \\
 & & \mbox{} -(c-2+2tc) \ln \left(1-\frac{2}{c}+2t\right) -(2-3tc) \ln \left(\frac{2}{3c}-t\right) \notag \\
 & & \mbox{} -ct \ln t + 2 \ln \frac{2}{3c} + (3c-2) \ln \left(1-\frac{2}{3c}\right) \notag \\
 & & \mbox{}
    + \lim_{\substack{\a \to 1 \\ r \to \frac{2}{3c}}} \left(\Upsilon(x) - \a \Upsilon(z) - (1-\a) \Upsilon(y)\right) \notag \\
 & \leq &
 \ln 4 + 4 \ln 3 - c \ln 2 - 5 c \ln 3 + c t \ln 4 - 2 c t \ln 9 \notag \\
 & & \mbox{}
  + (3c-2) \ln(3c-2)
 - (c-2 + 2tc) \ln (c - 2 + 2tc) \notag \\
 & & \mbox{}
 - (2 - 3 t c) \ln (2 - 3 t c) - c t \ln t
  - (2+t)c\ln c \label{eq:line6}
\end{eqnarray}
where $t$ is in the range $[\frac{2-c}{2c}, \frac{2}{3c}]$.

The derivative of (\ref{eq:line6}) with respect to $t$ is
\[c \ln 4 -2c \ln 9 -2c \ln (c - 2 + 2tc) + 3c \ln(2-3tc)-c \ln t - c \ln c,  \]
and setting the derivative to 0 gives the equation
\begin{equation} \label{eq:line6deriv}
4(2-3tc)^3 = 81 t c (c-2+2tc)^2.
\end{equation}
The interval analysis program verifies that (\ref{eq:line6}) is no larger than $2(1-c)\ln 4 -  0.220$,  at each point
of $c \in \left[.67, 1\right)$ and $t \in \left(\frac{2-c}{2c}, \frac{2}{3c}\right)$ where (\ref{eq:line6deriv}) holds.
This result plus the results of Case 5 and Case 6 proves that for $c \in [.67, 1)$, 
each point on this boundary is smaller that $2(1-c)\ln 4 - 0.071$.

\subsubsection*{Case 15: the boundary with $r = \frac{2\a}{3c}$, and $t = r$.}
Applying (\ref{lim-r-min}), we get the following.
\begin{eqnarray}
\lim_{\substack{r \to \frac{2\a}{3c} \\ t \to r}} f(\a, r, t) & = &
 \ln 4 -c \ln 4 + (1-\a)\ln 3 - \left(2c-\frac{4\a}{3}\right)\ln 3 \notag \\
 & & \mbox{} +\left(c-\frac{2\a}{3}\right)\ln 2 -\alpha \ln \alpha -(1-\alpha)\ln (1-\alpha) \notag \\
 & & \mbox{} -\left(c-\frac{2\a}{3}\right) \ln \left(1-\frac{2\a}{3c}\right) -\frac{2\a}{3} \ln \frac{2\a}{3c} + 2\a \ln \frac{2\a}{3c} \notag \\
 & & \mbox{} + \left(3c-2\a\right) \ln \left(1-\frac{2\a}{3c}\right) \notag \\
 & & \mbox{}
    + \lim_{r \to \frac{2\a}{3c}} \left(\Upsilon(x) - \a \Upsilon(z) - (1-\a) \Upsilon(y)\right) \notag \\
 & = &
  \ln 12 - c \ln 162 + \frac{\a}{3}\ln 12 + \frac{a}{3}\ln \a - (1-\a) \ln (1-\a) \notag \\
 & & \mbox{}
   + \frac{2}{3}(3c-2\a)\ln(3c-2\a)-2c \ln c \notag \\
 & & \mbox{}
    + \Upsilon(x) - (1-\a)\Upsilon(y) - \a \ln 2 \label{eq:line7}
\end{eqnarray}
where $\a$ is in the range $[0,1]$.

The derivative of (\ref{eq:line7}) with respect to $\a$ is
 \[\frac{1}{3}\ln 12 + \frac{1}{3}\ln \a + \ln (1-\a)-\frac{4}{3}\ln(3c-2\a) + 2 \ln y - \ln(\rme^y-1-y) - \ln 2,  \]
and setting the derivative to 0 gives the equation
\begin{equation} \label{eq:line7deriv}
  3\a(1-\a)^3 = 2(3c-2\a)^4\left(\frac{\rme^y-1-y}{y^2}\right)^3.
\end{equation}
The interval analysis program verifies that (\ref{eq:line7}) is no larger than $2(1-c)\ln 4 -  7\times10^{-5}$, at each point
of $c \in \left[.67, 1\right)$ and $\a \in \left(0, 1\right)$ where (\ref{eq:line7deriv}) holds.
This result plus the results of Case 1 and Case 5 proves that for $c \in [.67, 1)$, 
each point on this boundary is smaller that $2(1-c)\ln 4 - 7\times10^{-5}$.

\subsubsection*{Case 16: the boundary with $r = \frac{2\a}{3c}$, and $t = \frac{3r-1}{2}$.}
Applying (\ref{lim-r-min}), we get
\begin{eqnarray}
\lim_{\substack{r \to \frac{2\a}{3c} \\ t \to \frac{3r-1}{2}}} f(\a, r, t) & = &
 \ln 4 -c \ln 4 + (1-\a)\ln 3 - \left(2c+\frac{2\a-c}{2}-2\a\right)\ln 3 \notag \\
 & & \mbox{} -\alpha \ln \alpha -(1-\alpha)\ln (1-\alpha) \notag \\
 & & \mbox{} -3\left(\frac{2\a}{3}-\frac{2\a-c}{2}\right) \ln \left(\frac{2\a}{3c}-\frac{2\a-c}{2c}\right) \notag \\
 & & \mbox{} -\frac{2\a-c}{2} \ln \frac{2\a-c}{2c}
             + 2\a \ln \frac{2\a}{3c} + (3c-2\a) \ln \left(1-\frac{2\a}{3c}\right) \notag \\
 & & \mbox{}
    + \lim_{r \to \frac{2\a}{3c}} \left(\Upsilon(x) - \a \Upsilon(z) - (1-\a) \Upsilon(y)\right) \notag \\
 & = &
  \ln 12 - c \ln 2 -3c\ln 3 - \a \ln 3+2\a\ln 2
  +\a\ln \a \notag \\
 & & \mbox{}
 -(1-\a)\ln(1-\a)
   + \frac{3c-2\a}{2}\ln(3c-2\a) \notag \\
 & & \mbox{} - \frac{2\a-c}{2}\ln(2\a-c) \notag \\
 & & \mbox{}
  -2c \ln c
    + \Upsilon(x) - (1-\a)\Upsilon(y) - \a \ln 2 \label{eq:line8}
\end{eqnarray}
where $\a$ is in the range $[\frac{c}{2}, 1]$.

The derivative of (\ref{eq:line8}) with respect to $\a$ is
 \[-\ln 3 + \ln 4 + \ln (1-\a) + \ln \a - \ln (3c-2\a)-\ln(2\a-c) + 2 \ln y -\ln(\rme^y-1-y) - \ln 2,  \]
and setting the derivative to 0 gives the equation
\begin{equation} \label{eq:line8deriv}
  2\a(1-\a) = 3(3c-2\a)(2\a-c)\left(\frac{\rme^y-1-y}{y^2}\right).
\end{equation}
The interval analysis program verifies that (\ref{eq:line8}) is no larger than $2(1-c)\ln 4 -  0.012$,  at each point
of $c \in \left[.67, 1\right)$ and $\a \in \left(\frac{c}{2}, 1\right)$ where (\ref{eq:line8deriv}) holds.
This result plus the results of Case 6 and Case 7 proves that for $c \in [.67, 1)$, 
each point on this boundary is smaller that $2(1-c)\ln 4 - 0.012$.

\subsubsection*{Case 17: the boundary with $r = \frac{2\a}{3c}$, and $t = 0$.}
Applying (\ref{lim-r-min}), we get
\begin{eqnarray}
\lim_{\substack{r \to \frac{2\a}{3c} \\ t \to 0}} f(\a, r, t) & = &
 \ln 4 -c \ln 4 + (1-\a)\ln 3 - (2c-2\a)\ln 3 +(c-2\a)\ln 2 \notag \\
 & & \mbox{} -\alpha \ln \alpha -(1-\alpha)\ln (1-\alpha) -(c-2\a) \ln \left(1-\frac{2\a}{c}\right) \notag \\
 & & \mbox{} - 2\a \ln \frac{2\a}{3c} + 2\a \ln \frac{2\a}{3c} + (3c-2\a) \ln \left(1-\frac{2\a}{3c}\right) \notag \\
 & & \mbox{}
    + \lim_{r \to \frac{2\a}{3c}} \left(\Upsilon(x) - \a \Upsilon(z) - (1-\a) \Upsilon(y)\right) \notag \\
 & = &
  \ln 12 - c \ln 2 -5c\ln 3 + 3\a \ln 3-\a\ln 4-(1-\a)\ln(1-\a) \notag \\
 & & \mbox{}
 -\a\ln \a
   - (c-2\a)\ln(c-2\a) +(3c-2\a)\ln(3c-2\a) \notag \\
 & & \mbox{}
   -2c \ln c
    + \Upsilon(x) - (1-\a) \Upsilon(y) - \a \ln 2 \label{eq:line9}
\end{eqnarray}
where $\a$ is in the range $[0, \frac{c}{2}]$.

The derivative of (\ref{eq:line9}) with respect to $\a$ is
 \[3 \ln 3 - \ln 4 + \ln(1-\a)-\ln \a +2\ln(c-2\a)-2\ln(3c-2\a)+ 2 \ln y -\ln(\rme^y-1-y) - \ln 2,  \]
and setting the derivative to 0 gives the equation
\begin{equation} \label{eq:line9deriv}
 27(1-\a)(c-2\a)^2 = 8\a(3c-2\a)^2 \left(\frac{\rme^y-1-y}{y^2}\right) .
\end{equation}
The interval analysis program verifies that (\ref{eq:line9}) is no larger than $2(1-c)\ln 4 -  2\times10^{-4}$,  at each point
of $c \in \left[.67, 1\right)$ and $\a \in \left(0, \frac{c}{2}\right)$ where (\ref{eq:line9deriv}) holds.
This result plus the results of Case 1 and Case 7 proves that for $c \in [.67, 1)$, 
each point on this boundary is smaller that $2(1-c)\ln 4 - 2\times10^{-4}$.

\subsubsection*{Case 18: the boundary with $r = 1-\frac{2(1-\a)}{3c}$, and $t = r$.}
From (\ref{lim-r-max}), we have
\begin{eqnarray}
\lim_{\substack{r \to 1-\frac{2\a}{3c} \\ t \to r}} f(\a, r, t) & = &
 \ln 4 -c \ln 4 + (1-\a)\ln 3 - \frac{4(1-\a)}{3}\ln 3 +\frac{2(1-\a)}{3}\ln 2 \notag \\
 & & \mbox{} -\alpha \ln \alpha -(1-\alpha)\ln (1-\alpha) \notag -\frac{2(1-\a)}{3} \ln \frac{2(1-\a)}{3c} \notag \\
 & & \mbox{} -\left(c-\frac{2(1-\a)}{3}\right) \ln \left(1-\frac{2(1-\a)}{3c}\right) \notag \\
 & & \mbox{} + (3c-2(1-\a))3c \ln \left(1-\frac{2(1-\a)}{3c}\right) \notag \\
 & & \mbox{} + 2(1-\a) \ln \frac{2(1-\a)}{3c} \notag \\
 & & \mbox{}
    + \lim_{r \to 1-\frac{2\a}{3c}} \left(\Upsilon(x) - \a \Upsilon(z) - (1-\a) \Upsilon(y)\right) \notag \\
 & = &
 \ln 4 - c \ln 36 + 2(1- \a) \ln 2 - \frac{1-\a}{3}\ln 3 + \frac{1-\a}{3}\ln(1-\a) \notag \\
 & & \mbox{} - \a \ln \a
  + \frac{2}{3}(2\a + 3c - 2) \ln (2\a + 3c -2) - 2 c \ln c \notag \\
 & & \mbox{}
    + \Upsilon(x) - \a \Upsilon(z) - (1-\a)\ln 2 \label{eq:line10}
\end{eqnarray}
where $\a$ is in the range $[0,1]$.

The derivative of (\ref{eq:line10}) with respect to $\a$ is
 \[-\ln 4 + \frac{1}{3}\ln 3 - \frac{1}{3} \ln (1-\a) - \ln \a + \frac{4}{3}\ln (2\a + 3c -2) - 2 \ln z + \ln(\rme^z-1-z) + \ln 2,  \]
and setting the derivative to 0 gives the equation
\begin{equation} \label{eq:line10deriv}
 8(1-\a)\a^3 = 3(2\a+3c-2)^4 \left(\frac{\rme^z-1-z}{z^2}\right)^3 .
\end{equation}
The interval analysis program verifies that (\ref{eq:line10}) is no larger than $2(1-c)\ln 4 - 3\times10^{-6}$, at each point
of $c \in \left[.67, 1\right)$ and $\a \in \left(0, 1\right)$ where (\ref{eq:line10deriv}) holds.
This result plus the results of Case 2 and Case 4 proves that for each $c \in [.67, 1)$ there exists a 
positive constant $b = b(c)$ such that 
each point on this boundary is smaller that $2(1-c)\ln 4 - b(c)$.

\subsubsection*{Case 19: the boundary with $r = 1-\frac{2(1-\a)}{3c}$, and $t = \frac{3r-1}{2}$.}
From (\ref{lim-r-max}), we have
\begin{eqnarray}
\lim_{\substack{r \to 1-\frac{2\a}{3c} \\ t \to \frac{3r-1}{2}}} f(\a, r, t) & = &
 \ln 4 -c \ln 4 + (1-\a)\ln 3 - (1-\a)\ln 3 \notag -\alpha \ln \alpha \notag \\
 & & \mbox{} -(1-\alpha)\ln (1-\alpha) -(1-\a) \ln \frac{1-\a}{3c} \notag \\
 & & \mbox{} -(c-(1-\a)) \ln \left(1-\frac{(1-\a)}{c}\right) \notag \\
 & & \mbox{} + (3c-2(1-\a)) \ln \left(1-\frac{2(1-\a)}{3c}\right) \notag \\
 & & \mbox{} + 2(1-\a) \ln \frac{2(1-\a)}{3c} \notag \\
 & & \mbox{}
    + \lim_{r \to 1-\frac{2\a}{3c}} \left(\Upsilon(x) - \a \Upsilon(z) - (1-\a) \Upsilon(y)\right) \notag \\
 & = &
 \ln 4 - 2(\a + c - 1)\ln 2 - (\a + 3c - 1)\ln 3 - \a \ln \a \notag \\
 & & \mbox{}
  -(c + \a - 1) \ln (c + \a - 1) \notag \\
 & & \mbox{} + (3c -2 + 2\a) \ln(3c-2+2\a) - 2 c \ln c \\
 & & \mbox{} + \Upsilon(x) - \a \Upsilon(z) - (1-\a)\ln 2 \label{eq:line11}
\end{eqnarray}
where $\a$ is in the range $[1-c, 1]$.

The derivative of (\ref{eq:line11}) with respect to $\a$ is
 \[-\ln 3 - 2 \ln 2 - \ln \a  - \ln (c+\a-1) + 2 \ln (3c-2+2\a) - 2 \ln z +\ln(\rme^z-1-z)+\ln 2,  \]
and setting the derivative to 0 gives the equation
\begin{equation} \label{eq:line11deriv}
 6\a (c + \a - 1) = (3c-2+2\a)^2 \left(\frac{\rme^z-1-z}{z^2}\right) .
\end{equation}
The interval analysis program verifies that (\ref{eq:line11}) is no larger than $2(1-c)\ln 4 -  0.034$,  at each point
of $c \in \left[.67, 1\right)$ and $\a \in \left(1-c, 1\right)$ where (\ref{eq:line11deriv}) holds.
This result plus the results of Case 2 and Case 8 proves that for each $c \in [.67, 1)$ there exists a 
positive constant $b = b(c)$ such that 
each point on this boundary is smaller that $2(1-c)\ln 4 - b(c)$.

\subsubsection*{Case 20: the boundary with $r = 1-\frac{2(1-\a)}{3c}$, and $t = 0$.}
From (\ref{lim-r-max}), we have
\begin{eqnarray}
\lim_{\substack{r \to 1-\frac{2\a}{3c} \\ t \to 0}} f(\a, r, t) & = &
 \ln 4 -c \ln 4 + (1-\a)\ln 3 - (-c+2(1-\a))\ln 3 \notag \\
 & & \mbox{} +(-2c+2(1-\a))\ln 2 -\alpha \ln \alpha -(1-\alpha)\ln (1-\alpha) \notag \\
 & & \mbox{} -(-2c+2(1-\a)) \ln \left(-2+\frac{2(1-\a)}{c}\right) \notag \\
 & & \mbox{} -(3c-2(1-\a)) \ln \left(1-\frac{2(1-\a)}{3c}\right) \notag \\
 & & \mbox{} + (3c-2(1-\a)) \ln \left(1-\frac{2(1-\a)}{3c}\right) \notag \\
 & & \mbox{} + 2(1-\a) \ln \frac{2(1-\a)}{3c} \notag \\
 & & \mbox{}
    + \lim_{r \to 1-\frac{2\a}{3c}} \left(\Upsilon(x) - \a \Upsilon(z) - (1-\a) \Upsilon(y)\right) \notag \\
 & = &
 \ln 16 - \ln 27 - c \ln 4 + c \ln 3 - \a \ln 4 + \a \ln 27 - \a \ln \a \notag \\
 & & \mbox{}
  +(1-\a)\ln (1-\a) -2(1-\a-c)\ln(1-\a-c) -2 c \ln c \notag \\
 & & \mbox{}
    + \Upsilon(x) - \a \Upsilon(z) - (1-\a)\ln 2 \label{eq:line12}
\end{eqnarray}
where $\a$ is in the range $[0, 1-c]$.

The derivative of (\ref{eq:line12}) with respect to $\a$ is
 \[-\ln 4 + \ln 27 - \ln \a - \ln (1-\a) + 2 \ln (1-\a-c) - 2 \ln z + \ln (\rme^z-1-z) + \ln 2,  \]
and setting the derivative to 0 gives the equation
\begin{equation} \label{eq:line12deriv}
 2 \a(1-\a) = 27(1-\a-c)^2 \left(\frac{\rme^z-1-z}{z^2}\right) .
\end{equation}
The interval analysis program verifies that (\ref{eq:line12}) is no larger than $2(1-c)\ln 4 -  9\times10^{-4}$,  at each point
of $c \in \left[.67, 1\right)$ and $\a \in \left(0, 1-c\right)$ where (\ref{eq:line12deriv}) holds.
This result plus the results of Case 3 and Case 8 proves that for $c \in [.67, 1)$, 
each point on this boundary is smaller that $2(1-c)\ln 4 - 9\times10^{-4}$.

\subsubsection*{Case 21: the boundary with $t = 0 = \frac{3r-1}{2}$.}
\begin{eqnarray}
\lim_{\substack{r \to \frac{1}{3} \\ t \to 0}} f(\a, r, t) & = &
 \ln 4 -c \ln 4 + (1-\a)\ln 3 - c\ln 3 \notag -\alpha \ln \alpha \notag \\
 & & \mbox{} -(1-\alpha)\ln (1-\alpha) - c \ln \frac{1}{3} + c \ln \frac{1}{3} + 2c \ln \frac{2}{3} \notag \\
 & & \mbox{}
    + \lim_{r \to \frac{1}{3}} \left(\Upsilon(x) - \a \Upsilon(z) - (1-\a) \Upsilon(y)\right) \notag \\
 & = &
  \ln 12 - c \ln 27 - \a \ln 3 - \a \ln \a - (1-\a) \ln (1-\a) \notag \\
 & & \mbox{}
    + \lim_{r \to \frac{1}{3}} \left(\Upsilon(x) - \a \Upsilon(z) - (1-\a) \Upsilon(y)\right) \label{eq:line13}
\end{eqnarray}
where $\a$ is in the range $[1-c, \frac{c}{2}]$.

The derivative of (\ref{eq:line13}) with respect to $\a$ is
 \[-\ln 3 - \ln \a + \ln (1-\a) + \ln (\rme^z-1-z) - \ln (\rme^y-1-y),  \]
and setting the derivative to 0 gives the equation
\begin{equation} \label{eq:line13deriv}
  \frac{1-\a}{\a} = 3 \frac{\rme^y-1-y}{\rme^z-1-z}.
\end{equation}
The interval analysis program verifies that (\ref{eq:line13}) is no larger than $2(1-c)\ln 4 - 0.130$,  at each point
of $c \in \left[.67, 1\right)$ and $\a \in \left(1-c, \frac{c}{2}\right)$ where (\ref{eq:line13deriv}) holds.
This result plus the results of Case 7 and Case 8 proves that for $c \in [.67, 1)$, 
each point on this boundary is smaller that $2(1-c)\ln 4 - 0.130$.

\subsubsection{The Boundary Cases That Are Faces}

In each case, we rule out the existence of a point on the boundary where $f$ exceeds \mbox{$2(1-c)\ln 4$}.
We will compute a function that is either the limit of $f$ as some parameter approaches the boundary or
an upper bound on this limit.  This new function will be continuous, and any
maximum of this function will occur either where both its partial first derivatives are 0 or
at an edge of the face.
Cases 9--21 above evaluate the line segments that form the boundaries of each face,
and the analysis that follows will focus on the points on each face where both
partial first derivatives are 0.

\subsubsection*{Case 22: the boundary with $\a = 0$.}
Applying (\ref{lim-a-min}), we get
\begin{eqnarray}
\lim_{\alpha \rightarrow 0} f(\alpha , r,t) & = & \ln 4 -c \ln 4 + \ln 3 - c(2+t-3r)\ln 3 +c(1-3r+2t)\ln 2 \notag \\
 & & \mbox{} -c(1-3r+2t) \ln (1-3r+2t) -c(3r-3t) \ln (r-t) \notag \\
 & & \mbox{} -ct \ln t + r3c \ln r + (1-r)3c \ln (1-r)   \notag\\
 & & \mbox{} + \lim_{\alpha \rightarrow 0} \left(\Upsilon(x) - \alpha \Upsilon(z) - (1-\alpha) \Upsilon(y) \right) \notag \\
 & \leq &
\ln 4 -c \ln 4 + \ln 3 - c(2+t-3r)\ln 3 +c(1-3r+2t)\ln 2 \notag \\
 & & \mbox{} -c(1-3r+2t) \ln (1-3r+2t) -c(3r-3t) \ln (r-t) \notag \\
 & & \mbox{} -ct \ln t + r3c \ln r + (1-r)3c \ln (1-r)   \notag\\
 & & \mbox{} + \Upsilon(x) - \Upsilon(y)
\label{eq:face1}
\end{eqnarray}
where $r$ is in the range $[0, 1-\frac{2}{3c}]$, and $t$ is in the range
$[0, r]$.  The partial derivative of (\ref{eq:face1}) with respect to $r$ is
\[ 3c \ln 3 - 3c \ln 2 + 3c \ln(1-3r+2t) - 3c\ln(r-t) + 3c\ln r -3c \ln (1-r) + 3c \ln y, \]
and the partial derivative with respect to $t$ is
\[-c \ln 3 + 2c \ln 2 - 2c \ln (1-3r+2t) + 3c \ln (r-t) - c \ln t .\]
Setting the derivatives to 0 give the equations
\begin{eqnarray}
2(1-r)(r-t) &=& 3 y r (1-3r + 2t)  \label{eq:face1eq1}\\
3t(1-3r+2t)^2 &=& 4(r-t)^3         \label{eq:face1eq2}
\end{eqnarray}
which must hold for any assignment that maximizes (\ref{eq:face1}).
In addition, solving (\ref{eq:face1eq1}) for $t$ gives
\[ t = \frac{2(1-r)r - 3yr(1-3r)}{2(1-r) + 6yr}, \]
and plugging this value into (\ref{eq:face1eq2}) and simplifying gives the equation
\begin{equation} \label{eq:face1eq3}
 9r^2y^3 = (2(1-r) - 3y(1-3r))(1-r)
\end{equation}
which must also hold at any maximum of (\ref{eq:face1}).

The interval analysis program verifies that (\ref{eq:face1}) is no larger than $2(1-c)\ln 4 -  5\times10^{-4}$,  at each point
of $c \in \left[.67, 1\right)$, $r \in \left(0, 1-\frac{2}{3c}\right)$, and $t \in \left(0, r\right)$
where (\ref{eq:face1eq1}), (\ref{eq:face1eq2}), and (\ref{eq:face1eq3}) all hold.
This result plus the results of Cases 1, 3, 4, 9, 10, and 11 proves that for $c \in [.67, 1)$, 
each point on this boundary is smaller that $2(1-c)\ln 4 - 5\times10^{-4}$.

\subsubsection*{Case 23: the boundary with $\a = 1$.}
Applying (\ref{lim-a-max}), we get
\begin{eqnarray}
\lim_{\alpha \rightarrow 1} f(\alpha , r,t) & = & \ln 4 -c \ln 4 - c(2+t-3r)\ln 3 +c(1-3r+2t)\ln 2 \notag \\
 & & \mbox{} -c(1-3r+2t) \ln (1-3r+2t) -c(3r-3t) \ln (r-t) \notag \\
 & & \mbox{} -ct \ln t  + r3c \ln r + (1-r)3c \ln (1-r)   \notag\\
 & & \mbox{} + \lim_{\alpha \rightarrow 1} \left(\Upsilon(x) - \alpha \Upsilon(z) - (1-\alpha) \Upsilon(y) \right) \notag \\
 & \leq &
\ln 4 -c \ln 4 - c(2+t-3r)\ln 3 +c(1-3r+2t)\ln 2 \notag \\
 & & \mbox{} -c(1-3r+2t) \ln (1-3r+2t) -c(3r-3t) \ln (r-t) \notag \\
 & & \mbox{} -ct \ln t + r3c \ln r + (1-r)3c \ln (1-r)   \notag\\
 & & \mbox{} + \Upsilon(x) - \Upsilon(z)
\label{eq:face2}
\end{eqnarray}
where $r$ is in the range $[\frac{2}{3c}, 1]$, and $t$ is in the range
$\left[\frac{3r-1}{2}, r\right]$.  The partial derivative of (\ref{eq:face2}) with respect to $r$ is
\[ 3c \ln 3 - 3c \ln 2 + 3c \ln(1-3r+2t) - 3c\ln(r-t) + 3c\ln r -3c \ln (1-r) - 3c \ln z, \]
and the partial derivative with respect to $t$ is
\[-c \ln 3 + 2c \ln 2 - 2c \ln (1-3r+2t) + 3c \ln (r-t) - c \ln t .\]
Setting the derivatives to 0 give the equations
\begin{eqnarray}
(1-r)2z(r-t) &=& 3 r (1-3r + 2t) \label{eq:face2eq1} \\
3t(1-3r+2t)^2 &=& 4(r-t)^3 . \label{eq:face2eq2}
\end{eqnarray}
which must hold for any assignment that maximizes (\ref{eq:face2}).
In addition, solving (\ref{eq:face2eq1}) for $t$ gives
\[ t = \frac{2(1-r)rz - 3r(1-3r)}{2(1-r)z + 6r}, \]
and plugging this value into (\ref{eq:face2eq2}) and simplifying gives the equation
\begin{equation}\label{eq:face2eq3}
 9r^2 = (2(1-r)z - 3(1-3r))(1-r)z^2
\end{equation}
which must also hold at any maximum of (\ref{eq:face2}).

The interval analysis program verifies that (\ref{eq:face2}) is no larger than $2(1-c)\ln 4 - 5\times10^{-4}$,  at each point
of $c \in \left[.67, 1\right)$, $r \in \left(\frac{2}{3c}, 1\right)$, and $t \in \left(\frac{3r-1}{2}, r\right)$
where (\ref{eq:face2eq1}), (\ref{eq:face2eq2}), and (\ref{eq:face2eq3}) all hold.
This result plus the results of Cases 2, 5, 6, 12, 13, and 14 proves that for each $c \in [.67, 1)$ there exists a 
positive constant $b = b(c)$ such that 
each point on this boundary is smaller that $2(1-c)\ln 4 - b(c)$.

\subsubsection*{Case 24: the boundary with $r = \frac{2 \a}{3 c}$.}
Applying (\ref{lim-r-min}), we get
\begin{eqnarray}
\lim_{r \to \frac{2\a}{3c} } f(\alpha , r,t) & = &
 \ln 4 -c \ln 4 + (1-\a)\ln 3 - (2c+tc-2\a)\ln 3 \notag \\
 & & \mbox{} +(c-2\a+2tc)\ln 2 -\alpha \ln \alpha -(1-\alpha)\ln (1-\alpha) \notag \\
 & & \mbox{} -(c-2\a+2tc) \ln \left(1-\frac{2\a}{c}+2t\right) \notag \\
 & & \mbox{} -(2\a-3tc) \ln \left(\frac{2\a}{3c}-t\right) \notag \\
 & & \mbox{} -ct \ln t + 2\a \ln \frac{2\a}{3c}  + (3c-2\a) \ln \left(1-\frac{2\a}{3c}\right) \notag \\
 & & \mbox{} + \lim_{r \to \frac{2\a}{3c}} \left(\Upsilon(x) - \alpha \Upsilon(z) - (1-\alpha) \Upsilon(y) \right) \notag \\
 & = &
\ln 12 - c \ln 18 - c \ln 27 + \a \ln 27 + c t \ln 4 - 2 c t \ln 9 \notag \\
 & & \mbox{} + \a \ln \a - (1-\a)\ln(1-\a) +(3c-2\a) \ln (3c-2\a) \notag \\
 & & \mbox{} -(2\a-3 t c)\ln (2\a - 3 t c) - t c \ln t \notag\\
 & & \mbox{} - (c -2\a+2 t c)\ln(c - 2\a+2 t c) - 2 c \ln c - c t \ln c \notag\\
 & & \mbox{} + \Upsilon(x) - \a \ln 2 - (1-\a)\Upsilon(y)
\label{eq:face3}
\end{eqnarray}
where $\a$ is in the range $[0,1]$, and $t$ is in the range $\left[\max\left\{0, \frac{2\a-c}{2c}\right\}, \frac{2\a}{3c}\right]$.
The partial derivative of (\ref{eq:face3}) with respect to $\a$ is
\begin{multline*}
\ln 27 + \ln \a + \ln (1-\a) - 2 \ln (3c-2\a) -2 \ln(2\a-3tc) \\
\mbox{} +2 \ln (c-2\a+2tc) + 2\ln y - \ln(\rme^y-1-y) - \ln 2,
\end{multline*}
and the partial derivative with respect to $t$ is
\[c \ln 4 - 2 c \ln 9 + 3c \ln (2\a-3tc) -c \ln t -2c \ln(c-2\a+2tc) - c \ln c. \]
Setting the derivatives to 0 gives the equations
\begin{eqnarray}
27 \a (1-\a)(c-2\a+2tc)^2 &=& 2(3c-2\a)^2(2\a-3tc)^2\left(\frac{\rme^y-1-y}{y^2}\right) \label{eq:face3eq1} \\
4(2\a-3tc)^3 &=& 81 t c (c-2\a+2tc)^2 \label{eq:face3eq2}
\end{eqnarray}
which must hold at any maximum of (\ref{eq:face3}).

The interval analysis program verifies that (\ref{eq:face3}) is no larger than $2(1-c)\ln 4 - 1\times10^{-9}$,  at each point
of $c \in \left[.67, 1\right)$, $\a \in (0, 1)$, and $t \in \left(\max\left\{0, \frac{2\a-c}{2c}\right\}, \frac{2\a}{3c}\right)$
where (\ref{eq:face3eq1}) and (\ref{eq:face3eq2}) both hold.
This result plus the results of Cases 1, 5, 6, 7, 14, 15, 16, and 17 proves that for $c \in [.67, 1)$, 
each point on this boundary is smaller that $2(1-c)\ln 4 - 1\times10^{-9}$.

\subsubsection*{Case 25: the boundary with $r = 1-\frac{2 (1-\a)}{3 c}$.}
Applying (\ref{lim-r-max}), we get
\begin{eqnarray}
\lim_{r \to 1 - \frac{2(1-\a)}{3c}} f(\alpha , r,t)
& = & \ln 4 -c \ln 4 + (1-\a)\ln 3 - \left(-c+tc+2(1-\a)\right)\ln 3 \notag \\
 & & \mbox{} +\left(-2c+2(1-\a)+2tc\right)\ln 2 -\alpha \ln \alpha \notag \\
 & & \mbox{} -(1-\alpha)\ln (1-\alpha) \notag \\
 & & \mbox{} -\left(-2c+2(1-\a)+2tc\right) \ln \left(-2+\frac{2(1-\a)}{c}+2t\right) \notag \\
 & & \mbox{} -\left(3c-2(1-\a)-3tc\right) \ln \left(1-\frac{2(1-\a)}{3c}-t\right) \notag \\
 & & \mbox{} -ct \ln t + \left(3c-2(1-\a)\right) \ln \left(1-\frac{2(1-\a)}{3c}\right) \notag \\
 & & \mbox{} + 2(1-\a) \ln \frac{2(1-\a)}{3c} \notag \\
 & & \mbox{} + \lim_{r \to 1 - \frac{2(1-\a)}{3c}}(\Upsilon(x) - \alpha \Upsilon(z) - (1-\alpha) \Upsilon(y))  \notag \\
 & = &
\ln 4 - c \ln 108 + (1-\a)\ln \frac{4}{27} +2 c (1-t) \ln 9 -\a \ln \a \notag \\
 & & \mbox{} + (1-\a)\ln (1-\a) + (3c-2+2\a)\ln(3c-2+2\a) \notag \\
 & & \mbox{} - (3c(1-t)-2+2\a)\ln(3c(1-t)-2+2\a)- c t \ln t \notag \\
 & & \mbox{} -2(1-\a -c (1-t))\ln(1-\a-c(1-t)) \notag \\
 & & \mbox{} - (2 + t) c \ln c + \Upsilon(x) - \a \Upsilon(z) - (1-\a)\ln 2 \label{eq:face4}
\end{eqnarray}
where $\a$ is in the range $[0,1]$, and $t$ is in the range $\left[\min \left\{0, \frac{\a+c-1}{c}\right\}, 1 - \frac{2(1-\a)}{3c}\right]$.
The partial derivative of (\ref{eq:face4}) with respect to $\a$ is
\begin{multline*}
-\ln\frac{4}{27} -\ln \a - \ln(1-\a) +2\ln(3c-2+2\a)-2\ln(3c(1-t)-2+2\a) \\
 \mbox{}+2\ln(1-\a-c(1-t)) - 2 \ln z + \ln(\rme^z-1-z) + \ln 2,
\end{multline*}
and the partial derivative with respect to $t$ is
\[-2c\ln 9 +3c\ln(3c(1-t)-2+2\a) -c \ln t -2c\ln(1-\a-c(1-t)) - c \ln c. \]
Setting the derivatives to 0 gives the equations
\begin{alignat}{1}
27(3c-2+2\a)^2 (1-\a-c(1-t))^2 & \left(\frac{\rme^z-1-z}{z^2}\right) \\
  &= 2\a(1-\a)(3c(1-t)-2+2\a)^2 \label{eq:face4eq1} \\
(3c(1-t)-2+2\a)^3 &= 81 t c (1-\a-c(1-t))^2 \label{eq:face4eq2}
\end{alignat}
which must hold at any maximum of (\ref{eq:face4}).

The interval analysis program verifies that (\ref{eq:face4}) is no larger than $2(1-c)\ln 4 - 7\times10^{-6}$,  at each point
of $c \in \left[.67, 1\right)$, $\a \in (0, 1)$, and $t \in \left(\min\left\{0, \frac{\a+c-1}{c}\right\}, 1-\frac{2(1-\a)}{3c}\right)$
where (\ref{eq:face4eq1}) and (\ref{eq:face4eq2}) both hold.
This result plus the results of Cases 2, 3, 4, 8, 11, 18, 19, and 20 proves that for each $c \in [.67, 1)$ there exists a 
positive constant $b = b(c)$ such that 
each point on this boundary is smaller that $2(1-c)\ln 4 - b(c)$.

\subsubsection*{Case 26: the boundary with $t = 0$.}

\begin{eqnarray}
\lim_{t \to 0} f(\alpha , r,t) & = &
 \ln 4 -c \ln 4 + (1-\a)\ln 3 - c(2-3r)\ln 3 +c(1-3r)\ln 2 \notag \\
 & & \mbox{} -\alpha \ln \alpha -(1-\alpha)\ln (1-\alpha) -c(1-3r) \ln (1-3r) \notag \\
 & & \mbox{} -3cr \ln r + r3c \ln r + (1-r)3c \ln (1-r) \notag \\
 & & \mbox{} + \Upsilon(x) - \alpha \Upsilon(z) - (1-\alpha) \Upsilon(y)  \notag \\
 & = &
 \ln 4 -c \ln 18 + (1-\a)\ln 3 +3rc\ln \frac{3}{2} -\alpha \ln \alpha \notag \\
 & & \mbox{} -(1-\alpha)\ln (1-\alpha) -c(1-3r) \ln (1-3r) \notag \\
 & & \mbox{} + (1-r)3c \ln (1-r)
     + \Upsilon(x) - \alpha \Upsilon(z) - (1-\alpha) \Upsilon(y) \label{eq:face5}
\end{eqnarray}
where $\a$ is in the range $\left[0, \frac{c}{2}\right]$ and $r$ is in the range
$\left[\frac{2\a}{3c}, \min\left\{1-\frac{2(1-\a)}{3c}, \frac{1}{3}\right\}\right]$.
The partial derivative of (\ref{eq:face5}) with respect to $\a$ is
\[-\ln 3 - \ln \a + \ln(1-\a) + \ln(\rme^z-1-z) - \ln(\rme^y-1-y),\]
and with respect to $r$ is
\[ 3 c \ln \frac{3}{2} + 3c\ln(1-3r)-3c\ln(1-r)-3c\ln z + 3c \ln y . \]
Setting the partial derivatives to 0 gives the equations
\begin{eqnarray}
3\a(\rme^y-1-y) &=& (1-\a)(\rme^z-1-z) \label{case26:eq1} \\
2(1-r)z &=& 3 (1-3r)y  \label{case26:eq2}
\end{eqnarray}
which must hold at any maximum of (\ref{eq:face5}).

From (\ref{eq:kimplicitsMH}),
\[ \frac{\rme^y-1-y}{\rme^z-1-z} \cdot \frac{1-r}{r} = \frac{1-\a}{a} \cdot \frac{y}{z} \cdot \frac{\rme^y-1}{\rme^z-1}, \]
and plugging in (\ref{case26:eq1}) gives
\begin{equation}
\frac{1-r}{r} = 3 \frac{y}{z} \cdot \frac{\rme^y-1}{\rme^z-1}. \label{case26:eq3}
\end{equation}
Solving for $r$ in (\ref{case26:eq2}) gives
\[ r = \frac{3y-2z}{9y-3z}, \]
and thus
\begin{equation}
\frac{1-r}{r} = \frac{6y}{3y-2z}. \label{case26:eq4}
\end{equation}
Combining (\ref{case26:eq4}) with (\ref{case26:eq3}) gives the equation
\begin{equation} \label{case26:eq5}
2z(\rme^z-1) = (3y-2z)(\rme^z-1)
\end{equation}
that we can use to further restrict the possible values $z$ and $y$ can have at a maximum
for (\ref{eq:face5}).

The interval analysis program verifies that (\ref{eq:face5}) is no larger than $2(1-c)\ln 4 -  7\times10^{-10}$,  at each point
of $c \in \left[.67, 1\right)$, $\a \in \left(0, \frac{c}{2}\right)$, and $r \in \left(\frac{2\a}{3c}, \min\left\{1-\frac{2(1-\a)}{3c}, \frac{1}{3}\right\}\right)$
where (\ref{case26:eq1}), (\ref{case26:eq2}), and (\ref{case26:eq5}) all hold.
This result plus the results of Cases 1, 3, 7, 8, 9, 17, 20, and 21 proves that for $c \in [.67, 1)$, 
each point on this boundary is smaller that $2(1-c)\ln 4 - 7\times10^{-10}$.

\subsubsection*{Case 27: the boundary with $t = \frac{3r-1}{2}$.}

\begin{eqnarray}
\lim_{t \to \frac{3r-1}{2}} f(\alpha , r,t) & = &
 \ln 4 -c \ln 4 + (1-\a)\ln 3 - c\left(2+\frac{3r-1}{2}-3r\right)\ln 3 \notag \\
 & & \mbox{} -\alpha \ln \alpha -(1-\alpha)\ln (1-\alpha) \notag \\
 & & \mbox{} -3c\left(r-\frac{3r-1}{2}\right) \ln \left(r-\frac{3r-1}{2}\right) -c\frac{3r-1}{2} \ln \frac{3r-1}{2} \notag \\
 & & \mbox{} + r3c \ln r + (1-r)3c \ln (1-r) \notag \\
 & & \mbox{} + \Upsilon(x) - \alpha \Upsilon(z) - (1-\alpha) \Upsilon(y)  \notag \\
 & = &
 \ln 4 -c \ln 2 + (1-\a)\ln 3 - \frac{3}{2}(1-r)c\ln 3 \notag \\
 & & \mbox{} -\alpha \ln \alpha -(1-\alpha)\ln (1-\alpha) -\frac{c}{2}(3r-1) \ln (3r-1) \notag \\
 & & \mbox{} + 3cr \ln r + \frac{3}{2}c(1-r)\ln (1-r) \notag \\
 & & \mbox{} + \Upsilon(x) - \alpha \Upsilon(z) - (1-\alpha) \Upsilon(y) \label{eq:face6}
\end{eqnarray}
where $\a$ is in the range $[1-c, 1]$ and $r$ is in the range
$\left[\max\left\{\frac{2\a}{3c}, \frac{1}{3}\right\}, 1-\frac{2(1-\a)}{3c}\right]$.
The partial derivative of (\ref{eq:face6}) with respect to $\a$ is
\[-\ln 3 - \ln \a + \ln(1-\a) + \ln(\rme^z-1-z) - \ln(\rme^y-1-y),\]
and with respect to $r$ is
\[\frac{3}{2}c \ln 3 - \frac{3c}{2}\ln(3r-1)+3c\ln r - \frac{3c}{2}\ln(1-r) - 3c \ln z + 3c \ln y. \]
Setting the partial derivatives to 0 gives the equations
\begin{eqnarray}
3\a(\rme^y-1-y) &=& (1-\a)(\rme^z-1-z) \label{case27:eq1} \\
3 r^2 y^2 &=& (3r-1)(1-r)z^2 \label{case27:eq2}
\end{eqnarray}
which must hold at any maximum of (\ref{eq:face6}).

From (\ref{eq:kimplicitsMH}),
\[ \frac{\rme^y-1-y}{\rme^z-1-z} \cdot \frac{1-r}{r} = \frac{1-\a}{a} \cdot \frac{y}{z} \cdot \frac{\rme^y-1}{\rme^z-1}, \]
and plugging in (\ref{case27:eq1}) gives
\begin{equation}
\frac{1-r}{r} = 3 \frac{y}{z} \cdot \frac{\rme^y-1}{\rme^z-1}. \label{case27:eq3}
\end{equation}
Solving for $r$ in (\ref{case27:eq2}) gives
\[ r = \frac{2z^2 \pm z\sqrt{z^2-3y^2}}{3z^2+3y^2}, \]
and plugging this value into (\ref{case27:eq3}) gives two equations for the stationary
points of (\ref{eq:face6}):
\begin{eqnarray}
3 \frac{y}{z} \cdot \frac{\rme^y-1}{\rme^z-1} &=& \frac{3y^2 + z^2 - z\sqrt{z^2-3y^2}}{2z^2 + z\sqrt{z^2-3y^2}} \label{case27:eq4} \\
3 \frac{y}{z} \cdot \frac{\rme^y-1}{\rme^z-1} &=& \frac{3y^2 + z^2 + z\sqrt{z^2-3y^2}}{2z^2 - z\sqrt{z^2-3y^2}} \label{case27:eq5}.
\end{eqnarray}
We will use (\ref{case27:eq4}) and (\ref{case27:eq5}) in addition to the fact that $z^2 \geq 3 y^2$ to restrict the values $y$ and $z$
can have at a maximum for (\ref{eq:face6}).

However, note that if $y = 0$, then any value of $z$
will satisfy (\ref{case27:eq4}).  As a result, we need additional analysis to rule out
cases where $y \in (0, y_2]$ and $z \in [z_1, z_2]$.
Consider the two sides of (\ref{case27:eq4}).
\[ \frac{3y(\rme^y-1)}{z(\rme^z-1)} \]
\[ \frac{3y^2 + z^2 - z\sqrt{z^2-3y^2}}{2z^2 + z\sqrt{z^2-3y^2}}. \]
The first derivatives of the two sides with respect to $y$ are
\[ \frac{3(\rme^y-1+y\rme^y)}{z(\rme^z-1)} \]
\[ \frac{3y}{z\sqrt{z^2-3y^2}}, \]
and the second derivatives are
\begin{equation}\label{case27:2nd_deriv_1}
 \frac{3\rme^y(y+2)}{z(\rme^z-1)}
\end{equation}
\begin{equation}\label{case27:2nd_deriv_2}
 \frac{3z}{(z^2-3y^2)^{3/2}}.
\end{equation}
Note that both sides of (\ref{case27:eq4}) and their first derivatives are 0 when $y=0$, and
both second derivatives are positive.  As a result, if there is no $y \in (0, y_1]$ and
$z \in [z_1, z_2]$ such that the two second derivatives are equal, then there is no
$y \in (0, y_1]$ and $z \in [z_1, z_3]$ that satisfy (\ref{case27:eq4}).

The interval analysis program verifies that (\ref{eq:face6}) is no larger than $2(1-c)\ln 4 - 1\times10^{-7}$, at each point
of $c \in \left[.67, 1\right)$, $\a \in \left(1-c, 1\right)$, and $r \in \left( \max\left\{\frac{2\a}{3c}, \frac{1}{3}\right\}, 1-\frac{2(1-\a)}{3c}\right)$
where (\ref{case27:eq1}), (\ref{case27:eq2}), and $z^2 \geq 3 y^2$ all hold and either
(\ref{case27:eq4}) or (\ref{case27:eq5}) holds.
In addition, if we are considering an interval for $y$ that has the lower bound equal to 0, we use
the additional test that (\ref{case27:2nd_deriv_1}) must equal (\ref{case27:2nd_deriv_2}) for some $y$ in
that interval.
This result plus the results of Cases 2, 6, 7, 8, 12, 16, 19, and 21 proves that for each $c \in [.67, 1)$ there exists a 
positive constant $b = b(c)$ such that 
each point on this boundary is smaller that $2(1-c)\ln 4 - b(c)$.

\subsubsection*{Case 28: the boundary with $t = r$.}

\begin{eqnarray}
\lim_{t \to r} f(\alpha , r,t) & = &
 \ln 4 -c \ln 4 + (1-\a)\ln 3 - c(2-2r)\ln 3 \notag \\
 & & \mbox{} +c(1-r)\ln 2 -\alpha \ln \alpha -(1-\alpha)\ln (1-\alpha) \notag \\
 & & \mbox{} -c(1-r) \ln (1-r) -cr \ln r + r3c \ln r \notag \\
 & & \mbox{} + (1-r)3c \ln (1-r) \notag \\
 & & \mbox{} + \Upsilon(x) - \alpha \Upsilon(z) - (1-\alpha) \Upsilon(y)  \notag \\
 & = &
 \ln 4 -c \ln 4 + (1-\a)\ln 3 - c(1-r)\ln \frac{9}{2} \notag \\
 & & \mbox{} -\alpha \ln \alpha -(1-\alpha)\ln (1-\alpha) \notag \\
 & & \mbox{} + 2cr \ln r + 2c(1-r) \ln (1-r) \notag \\
 & & \mbox{} + \Upsilon(x) - \alpha \Upsilon(z) - (1-\alpha) \Upsilon(y) \label{eq:face7}
\end{eqnarray}
where $\a$ is in the range $[0,1]$ and $r$ is in the range $\left[\frac{2\a}{3c}, 1-\frac{2(1-\a)}{3c}\right]$.
The partial derivative of (\ref{eq:face6}) with respect to $\a$ is
\[-\ln 3 - \ln \a + \ln(1-\a) + \ln(\rme^z-1-z) - \ln(\rme^y-1-y),\]
and with respect to $r$ is
\[ c \ln \frac{9}{2} + 2c \ln r - 2c \ln(1-r) - 3c\ln z + 3c \ln y.\]
Setting the partial derivatives to 0 gives the equations
\begin{eqnarray}
3\a(\rme^y-1-y) &=& (1-\a)(\rme^z-1-z) \label{case28:eq1} \\
9 r^2 y^3 &=& 2 (1-r)^2 z^3 \label{case28:eq2}
\end{eqnarray}
which must hold at any maximum of (\ref{eq:face7}).

From (\ref{eq:kimplicitsMH}),
\[ \frac{\rme^y-1-y}{\rme^z-1-z} \cdot \frac{1-r}{r} = \frac{1-\a}{a} \cdot \frac{y}{z} \cdot \frac{\rme^y-1}{\rme^z-1}, \]
and plugging in (\ref{case28:eq1}) gives
\begin{equation}
\frac{1-r}{r} = 3 \frac{y}{z} \cdot \frac{\rme^y-1}{\rme^z-1}. \label{case28:eq3}
\end{equation}
Rearranging (\ref{case28:eq2}) gives
\begin{equation}
\frac{1-r}{r} = \sqrt{\frac{9y^3}{2z^3}}. \label{case28:eq4}
\end{equation}
Combining (\ref{case28:eq4}) with (\ref{case28:eq3}) gives the equation
\begin{equation}
\frac{\rme^z-1}{\rme^y-1} = \sqrt{\frac{2z}{y}} \label{case28:eq5}
\end{equation}
that we can use to further restrict the possible values $z$ and $y$ can have at a maximum
for (\ref{eq:face7}).

In addition, we can see that (\ref{case28:eq5}) has exactly two non-negative solutions, one
when $z = 0$, and one when $z$ is between $\frac{y}{2}$ and $y$.  As a result, we know that
the positive solution has $z < y$.  Applying $z < y$ to (\ref{eq:kimplicitsMH}) gives $r < \a$, and
applying $z < y$ to (\ref{case28:eq1}) gives $\a < \frac{1}{4}$.  We can use this fact to further
reduce the possible values that must hold at a maximum of (\ref{eq:face7}).

The interval analysis program verifies that (\ref{eq:face7}) is no larger than $2(1-c)\ln 4 - 1\times10^{-4}$,  at each point
of $c \in \left[.67, 1\right)$, $\a \in \left(0, 1\right)$, and $r \in \left(\frac{2\a}{3c}, 1 - \frac{2(1-\a)}{3c}\right)$
where (\ref{case28:eq1}), (\ref{case28:eq2}), (\ref{case28:eq5}), and $\a < \frac{1}{4}$ all hold.
This result plus the results of Cases 1, 2, 4, 5, 10, 13, 15, and 18 proves that for each $c \in [.67, 1)$ there exists a 
positive constant $b = b(c)$ such that 
each point on this boundary is smaller that $2(1-c)\ln 4 - b(c)$.

\subsection{The Interval Analysis Program}
\label{sec:code}

The source code for the program is included with this paper on the arXiv.org site.
The program uses and should be compiled with the Profil/BIAS \cite{bias} libraries for
interval algorithms.  The Profil/BIAS libraries are available at
\begin{center}{\tt www.ti3.tu-harburg.de/Software/PROFILEnglisch.html}. \end{center}

For the program execution used to verify the cases of Lemma~\ref{lem:fmax}, the program was compiled with
version 2.0.8 of the Profil/BIAS libraries.
The program was compiled and
executed on an AMD Opteron 2350 processor running the Linux operating system, kernel version
2.6.27.56, and the program was compiled with the gcc compiler, version 4.3.

\end{document}